\newcommand{\q}{{\boldsymbol{q}}}
\newcommand{\T}{{\cal T}}
\newcommand{\A}{{\cal A}}
\newcommand{\C}{{\cal C}}
\newcommand{\R}{\mathsf{sub}(T)}
\newcommand{\can}{\C_{\T, \A}}
\newcommand{\ind}{\mathsf{ind}}
\newcommand{\dom}{\mathsf{dom}}
\newcommand{\degree}{\mathsf{deg}}
\newcommand{\restr}{\upharpoonright}
\newcommand{\dD}{\partial D}
\newcommand{\dDp}{\partial D'}
\newcommand{\dDpp}{\partial D''}
\newcommand{\rpred}{G}
\def\t{\mathsf{t}}
\newcommand{\ti}{\mathsf{t}_\mathsf{i}}
\newcommand{\tr}{\mathsf{t}_\mathsf{r}}
\newcommand{\rew}{P}
\newcommand{\omq}{{\ensuremath{\boldsymbol{Q}}}}
\newcommand{\cir}{\boldsymbol{C}}
\newcommand{\wid}{\mathsf{w}}
\newcommand{\dep}{\mathsf{d}}
\newcommand{\ari}{\mathsf{r}}
\newcommand{\LOGCFL}{\ensuremath{\mathsf{LOGCFL}}}
\newcommand{\OWLQL}{\textsl{OWL\,2\,QL}}
\newcommand{\FO}{\text{FO}}
\newcommand{\NDL}{\text{NDL}}
\newcommand{\NCone}{{\ensuremath{\mathsf{NC}^1}}}
\newcommand{\NP}{\ensuremath{\mathsf{NP}}}
\newcommand{\PTime}{\mathsf{P}}
\newcommand{\NLpoly}{\ensuremath{\mathsf{NL}}/\ensuremath{\mathsf{poly}}}
\newcommand{\NL}{\ensuremath{\mathsf{NL}}}
\newcommand{\vars}{\mathsf{var}}
\newcommand{\avars}{\mathsf{avar}}
\newcommand{\gfmn}{\mathcal{G}}
\renewcommand{\qed}{\hfill\ding{113}}
\newcommand{\exs}{{\scriptscriptstyle\exists}}
\tikzset{bpoint/.style={circle,inner sep=0pt,minimum size=1.5mm,fill=black,draw=black},
	wpoint/.style={circle,inner sep=0pt,minimum size=1.5mm,fill=white,draw=black,thick}}
\newcommand{\nd}{t}
\newcommand{\tpd}{\vec{w}}
\newcommand{\tpr}{\vec{s}}
\newcommand{\role}{\boldsymbol{R}}
\newcommand{\rni}{\ensuremath{\role_\T}}
\newcommand{\twords}{\mathbf{W}_\T}
\newcommand{\sqset}{\mathsf{SQ}}
\newcommand{\nb}[1]{\hbox to 0pt{\textcolor{red}{\bf !}}\marginpar{\parbox{20mm}{\scriptsize\raggedright\textcolor{red}{#1}}}}
\title{Theoretically Optimal Datalog Rewritings for \OWLQL{} Ontology-Mediated Queries}
\author{M. Bienvenu$^1$\! \and S. Kikot$^2$\! \and R. Kontchakov$^2$\! \and V. Podolskii$^3$\! \and M. Zakharyaschev$^2$}
 \institute{
   $^1$ CNRS \& University of Montpellier, France   
   (\url{meghyn@lirmm.fr})\\
   $^2$ Birkbeck, University of London, U.K.\ 
   (\url{{kikot,roman,michael}@dcs.bbk.ac.uk})\\
   $^3$ Steklov Mathematical Institute, Moscow, Russia 
   (\url{podolskii@mi.ras.ru})}
\begin{document}
\maketitle

\begin{abstract}
We show that, for \OWLQL{} ontology-mediated queries with (\emph{i})  ontologies of bounded depth and conjunctive queries of bounded treewidth, (\emph{ii})~ontologies of bounded depth and bounded-leaf tree-shaped conjunctive queries, and (\emph{iii}) arbitrary ontologies and bounded-leaf tree-shaped conjunctive queries, one can construct and evaluate nonrecursive datalog rewritings by, respectively, \LOGCFL, \NL{} and \LOGCFL{} algorithms, which matches the optimal combined complexity.
\end{abstract}

\section{Introduction}

\begin{figure}[t]%
\tikzset{cmplx/.style={draw,thick,rounded corners,inner sep=0mm}}%
\mbox{}\hspace*{-0.3em}\begin{tikzpicture}[x=7.2mm,y=4.5mm]
\draw[thick] (0.4,-0.4) rectangle +(8.2,-6.2);
\node[rotate=90] at (-0.3,-4.0) {\scriptsize ontology depth};
\begin{scope}[ultra thin]
\draw (0.4,-1) -- +(8.2,0); \node at (0.1,-1) {\scriptsize 1};
\draw (0.4,-2) -- +(8.2,0); \node at (0.1,-2) {\scriptsize 2};
\draw (0.4,-3) -- +(8.2,0); \node at (0.1,-3) {\scriptsize 3};
\draw (0.4,-4) -- +(8.2,0); \node at (0,-4) {\scriptsize \dots};
\draw (0.4,-5) -- +(8.2,0); \node at (0.1,-5) {\scriptsize $d$};
\draw (0.4,-6) -- +(8.2,0); \node at (0,-6) {\scriptsize arb.};
\draw (1,-0.4) -- +(0,-6.2); \node at (1,-7) {\scriptsize 2}; 
\draw (2,-0.4) -- +(0,-6.2); \node at (2,-7) {\scriptsize \dots}; 
\draw (3,-0.4) -- +(0,-6.2); \node at (3,-7) {\scriptsize $\ell$}; 
\draw (4,-0.4) -- +(0,-6.2); \node at (4,-7) {\scriptsize trees}; 
\draw (5,-0.4) -- +(0,-6.2); \node at (5,-7) {\scriptsize 2}; 
\draw (6,-0.4) -- +(0,-6.2); \node at (6,-7) {\scriptsize \dots}; 
\draw (7,-0.4) -- +(0,-6.2); \node at (7,-7) {\scriptsize bound.}; 
\draw (8,-0.4) -- +(0,-6.2); \node at (8,-7) {\scriptsize arb.}; 
\end{scope}
\node at (2,-7.5) {\scriptsize number of leaves};
\node at (6.5,-7.5) {\scriptsize treewidth};
\node [fill=gray!25,cmplx,fill opacity=0.9,fit={(0.6,-1.6) (3.4,-6.4)}] 
{\raisebox{-6ex}{\begin{tabular}{c}poly NDL\\[2pt] no poly PE\\[2pt] 
\footnotesize poly FO\\[-3pt]\scriptsize iff\\[-3pt]\scriptsize \NLpoly  $\,\subseteq\,$ \NCone\end{tabular}}};
\node [fill=gray!50,cmplx,fill opacity=0.9,fit={(3.6,-1.6) (7.4,-5.4)}] 
{\raisebox{-7ex}{\begin{tabular}{c}poly NDL\\[1pt] no poly PE\\[1pt] 
\footnotesize poly FO\\[-5pt] \scriptsize iff\\[-5pt] \scriptsize LOGCFL/poly $\!\subseteq\!$ \NCone\end{tabular}}};
\node [fill=black,cmplx,fit={(3.6,-5.6) (8.4,-6.4)}] 
{\hspace*{-1.8em}\raisebox{-9pt}{\textcolor{white}{\begin{tabular}{c}\small no poly NDL\,{\scriptsize\&}\,PE
\end{tabular}}}}; 
\node [fill=black,cmplx,fit={(7.6,-1.6) (8.4,-6.4)}]  
{\rotatebox{90}{\hspace*{-3.5em}\textcolor{white}{\begin{tabular}{c}
\scriptsize \begin{tabular}{c}poly\\FO\end{tabular}\! iff  NP{\tiny\!/\!poly} $\!\!\subseteq\!\!$ NC$^1$
\end{tabular}}}};
\node [fill=gray!5,cmplx,fill opacity=0.9,fit={(0.6,-0.6) (4.4,-1.4)}]  {\raisebox{-1.5ex}{poly $\Pi_4$-PE}};
\node [fill=gray!5,cmplx,fill opacity=0.9,fit={(4.6,-0.6) (7.4,-1.4)}]  {\raisebox{-1.5ex}{poly PE}};
\node [fill=gray!25,cmplx,fill opacity=0.9,fit={(7.6,-0.6) (8.4,-1.4)}] { };
\node[inner sep=0pt] (test) at (5.5,0.3) {\begin{tabular}{c}poly NDL, {\scriptsize but} no poly PE\\[-3pt]\footnotesize poly FO \scriptsize iff \NLpoly  $\,\subseteq\,$ \NCone\end{tabular}};
\draw (8,-1) -- (test.-15);
\node at (-0.05,0.3) {\footnotesize (a)};
\end{tikzpicture}
\hspace*{0.2em}
\begin{tikzpicture}[x=5.8mm,y=4.5mm]
\draw[thick] (0.4,-0.4) rectangle +(8.2,-6.2);
\begin{scope}[ultra thin]
\draw (0.4,-1) -- +(8.2,0); \node at (0.1,-1) {\scriptsize 1};
\draw (0.4,-2) -- +(8.2,0); \node at (0.1,-2) {\scriptsize 2};
\draw (0.4,-3) -- +(8.2,0); \node at (0.1,-3) {\scriptsize 3};
\draw (0.4,-4) -- +(8.2,0); \node at (0,-4) {\scriptsize \dots};
\draw (0.4,-5) -- +(8.2,0); \node at (0.1,-5) {\scriptsize $d$};
\draw (0.4,-6) -- +(8.2,0); \node at (0,-6) {\scriptsize arb.};
\draw (1,-0.4) -- +(0,-6.2); \node at (1,-7) {\scriptsize 2}; 
\draw (2,-0.4) -- +(0,-6.2); \node at (2,-7) {\scriptsize \dots}; 
\draw (3,-0.4) -- +(0,-6.2); \node at (3,-7) {\scriptsize $\ell$}; 
\draw (4,-0.4) -- +(0,-6.2); \node at (4,-7) {\scriptsize trees}; 
\draw (5,-0.4) -- +(0,-6.2); \node at (5,-7) {\scriptsize 2}; 
\draw (6,-0.4) -- +(0,-6.2); \node at (6,-7) {\scriptsize \dots}; 
\draw (7,-0.4) -- +(0,-6.2); \node at (7,-7) {\scriptsize bound.}; 
\draw (8,-0.4) -- +(0,-6.2); \node at (8,-7) {\scriptsize\hspace*{0.5em} arb.}; 
\end{scope}
\node at (2,-7.5) {\scriptsize number of leaves};
\node at (6.5,-7.5) {\scriptsize treewidth};
\node [fill=gray!5,cmplx,fill opacity=0.9,fit={(3.4,-5.4) (0.6,-0.6)}]  {\NL};
\node [fill=gray!40,cmplx,fill opacity=0.9,fit={(3.6,-0.6) (7.4,-5.4)}]  {\LOGCFL};
\node [fill=black,cmplx,fit={(8.4,-6.4) (7.6,-0.6)}]  {}; 
\node [fill=black,cmplx,fit={(3.6,-5.6) (8.4,-6.4)}]  {\hspace*{7em}\raisebox{-1ex}{\textcolor{white}{\NP}}};
\node [fill=gray!40,cmplx,fill opacity=0.9,fit={(0.6,-5.6) (3.4,-6.4)}]  {\raisebox{-1ex}{\LOGCFL}};
\node at (-0.05,0.3) {\footnotesize (b)};
\end{tikzpicture}%
\caption{(a) Size of OMQ rewritings; (b) combined complexity of OMQ evaluation.}
\label{pic:results}
\end{figure}
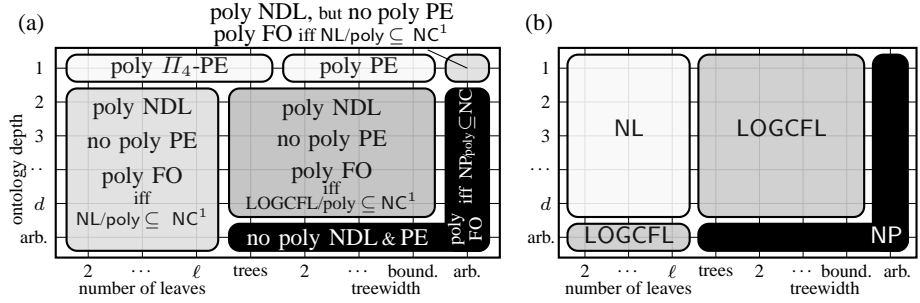

Ontology-based data access (OBDA) via query rewriting~\cite{PLCD*08} reduces the problem of finding answers to conjunctive queries (CQs) mediated by \OWLQL{} ontologies to standard database query answering. The question we are concerned with here is whether this reduction is optimal with respect to the combined complexity of query evaluation. Figure~\ref{pic:results}~(a) summarises what is known  about  the size of positive existential (PE), nonrecursive datalog (NDL) and first-order (FO) rewritings of \OWLQL{} ontology-mediated queries (OMQs) depending on the existential depth of their ontologies and the shape of their CQs~\cite{DBLP:conf/icalp/KikotKPZ12,DBLP:journals/ai/GottlobKKPSZ14,LICS14,DBLP:conf/lics/BienvenuKP15}. Figure~\ref{pic:results}~(b) shows the combined complexity of OMQ evaluation for the corresponding classes of OMQs~\cite{CDLLR07,DBLP:conf/dlog/KikotKZ11,LICS14,DBLP:conf/lics/BienvenuKP15}. Thus, we see, for example, that PE-rewritings for OMQs with ontologies of bounded depth and CQs of bounded treewidth can be of super-polynomial size, and so not evaluable in polynomial time, while the evaluation problem for these OMQs is decidable in $\LOGCFL \subseteq \PTime$. On the other hand, the OMQs in this class enjoy polynomial-size NDL-rewritings. However, these rewritings were defined using an argument from circuit complexity~\cite{DBLP:conf/lics/BienvenuKP15}, and it has been unclear whether they can be constructed and evaluated in \LOGCFL. The same concerns the class of OMQs with ontologies of bounded depth and bounded-leaf tree-shaped queries, which can be evaluated in \NL, and the class of OMQs  with arbitrary ontologies and bounded-leaf tree-shaped queries, which can be evaluated in \LOGCFL. 

In this paper, we consider OMQs in these three classes and construct NDL-rewritings that are theoretically optimal in the sense that the rewriting and evaluation can be carried out by algorithms of optimal combined complexity, that is, from the complexity classes \LOGCFL, \NL{} and \LOGCFL, respectively. Such algorithms are known to be space efficient and highly parallelisable.
We compared our optimal NDL rewritings with those produced by query rewriting engines Clipper~\cite{DBLP:conf/aaai/EiterOSTX12} and Rapid~\cite{DBLP:conf/cade/ChortarasTS11}, using a sequence of OMQs with linear CQs and a fixed ontology of depth 1. 



\section{Preliminaries}\label{sec:prelims}

We give $\OWLQL$ in the DL syntax with \emph{individual names} $a_i$, \emph{concept names}~$A_i$, and \emph{role names} $P_i$ ($i\ge 1$). \emph{Roles} $R$ and \emph{basic concepts} $B$ are defined by
\begin{equation*}
R \quad ::=\quad P_i \quad\mid\quad P_i^-, \qquad\qquad
B \quad ::=\quad 
A_i \quad\mid\quad \exists R.
\end{equation*}
A \emph{TBox}, $\mathcal{T}$, is a finite set of inclusions of the form 
\begin{equation*}
B_1 \sqsubseteq B_2, \qquad B_1 \sqcap B_2 \sqsubseteq \bot, \qquad  R_1 \sqsubseteq R_2,\qquad R_1 \sqcap R_2 \sqsubseteq \bot. \end{equation*}
An \emph{ABox}, $\A$, is a finite set of atoms of
the form $A_k(a_i)$ or $P_k(a_i,a_j)$. We denote by $\ind(\A)$ the set of individual names in $\A$, and by $\role_\T$ the set of role names occurring in $\T$ and their inverses. We use $A \equiv B$ for $A\sqsubseteq B$ and $B \sqsubseteq A$.
The semantics for \OWLQL{} is defined in the usual way based on interpretations $\mathcal{I} = (\Delta^\mathcal{I}, \cdot^\mathcal{I})$~\cite{BCMNP03}. 

For every role $R\in\role_\T$, we take a fresh concept name $A_R$ and add $A_R \equiv \exists R$ to $\T$. The resulting TBox is said to be in \emph{normal form}, and we assume, without loss of generality, that all our TBoxes are  in normal form. The subsumption relation induced by $\T$ is denoted by $\sqsubseteq_\T$: we write $S_1\sqsubseteq_\T S_2$ if $\T\models S_1\sqsubseteq S_2$, where $S_1$, $S_2$ are both either concepts or roles. We write $R(a,b)\in \A$ if $P(a,b)\in \A$ and $R= P$, or $P(b,a)\in\A$ and $R = P^-$; we also write $(\exists R)(a)\in \A$ if $R(a,b)\in\A$ for some $b$. An ABox $\A$ is called \emph{H-complete with respect to $\T$} in case 
\begin{align*}
P(a,b) \in \A & \ \ \text{ if } \ R(a,b)\in \A, \text{ for roles } P \text{ and } R\text{ with } R\sqsubseteq_\T P,\\
A(a) \in \A &  \ \ \text{ if } \ B(a)\in \A, \text{ for a concept name } A \text{ and basic concept }  B \text{ with } B\sqsubseteq_\T A.
\end{align*}

A \emph{conjunctive query} (CQ) $\q(\vec{x})$ is a formula $\exists \vec{y}\, \varphi(\vec{x}, \vec{y})$, where $\varphi$ is a conjunction of atoms $A_k(z_1)$ or $P_k(z_1,z_2)$ with $z_i \in \vec{x} \cup \vec{y}$ (without loss of generality, we assume that CQs do not contain constants). 
We denote by $\vars(\q)$ the variables $\vec{x} \cup \vec{y}$ of $\q$ and by $\avars(\q)$ the \emph{answer variables}~$\vec{x}$. 
An \emph{ontology-mediated query} (OMQ) is a pair $\omq(\vec{x}) = (\T,\q(\vec{x}))$, where $\T$ is a TBox and $\q(\vec{x})$ a CQ. A tuple $\vec{a}$ in $\ind (\mathcal{A})$ is a \emph{certain answer} to $\omq(\vec{x})$ over an ABox $\mathcal{A}$ if  $\mathcal{I} \models \q(\vec{a})$ for all models $\mathcal{I}$ of $\T$ and $\A$; in this case we write \mbox{$\T,\A \models \q(\vec{a})$}. If $\vec{x} = \emptyset$, then a certain answer to $\omq$ over $\A$ is `yes' if $\T,\A \models \q$ and `no' otherwise. We often regard a CQ $\q$ as the set of its atoms.

Every consistent \OWLQL{} \emph{knowledge base} (KB) $(\T,\A)$ has a \emph{canonical model}  
$\can$ with the property that 
$\T,\A \models \q(\vec{a})$ iff $\can \models \q(\vec{a})$,
for any CQ $\q$ and any $\vec{a}$ in $\ind(\A)$.
Thus, CQ answering in \OWLQL{} amounts to finding a homomorphism from  the given CQ into the canonical model.  
Informally, $\can$ is obtained from $\A$ by repeatedly applying the axioms in $\T$, 
introducing fresh elements  
as needed to serve as witnesses for the existential quantifiers.  
According to the standard construction  (cf.~\cite{KR10our}), 
the domain $\Delta^{\can}$ of $\can$ consists of words of the form
$a R_1  \dots R_n$ ($n \geq 0$) with $a\in\ind(\A)$ and $R_1 \dots R_n \in \rni^*$ such that 
(\emph{i}) $\T, \A \models \exists R_1(a)$ and (\emph{ii}) $\exists R_i^- \sqsubseteq_\T \exists R_{i+1}$ and $R_i^- \not \sqsubseteq_\T R_{i+1}$, for $1 \leq i < n$.  
We let $\twords$ consist of all words $R_1 \dots R_n \in \rni^*$ satisfying condition~(\emph{ii}). 
A TBox $\T$ is \emph{of depth} $\omega$ if $\twords$ is infinite, and 
 \emph{of depth} $d< \omega$, if $d$ is the maximum length of the words in $\twords$. 

A \emph{datalog program}, $\Pi$, is a finite set of Horn clauses 
$\forall \vec{z}\, (\gamma_0 \leftarrow \gamma_1 \land \dots \land \gamma_m)$,
where each $\gamma_i$ is an atom $S(\vec{y})$ with $\vec{y} \subseteq  \vec{z}$ or an equality $(z = z')$ with $z,z'\in \vec{z}$. (As usual, when writing clauses, we omit $\forall \vec{z}$.) The atom $\gamma_0$ is the \emph{head} of the clause, and $\gamma_1,\dots,\gamma_m$ its  \emph{body}. All variables in the head must also occur in the body, and $=$ can only occur in the body. 
The predicates in the heads of clauses in $\Pi$ are \emph{IDB predicates},  the rest (including~$=$) \emph{EDB predicates}. 
A predicate $S$ \emph{depends} on $S'$ in $\Pi$ if $\Pi$ has a clause with $S$ in the head and $S'$ in the body; $\Pi$ is a \emph{nonrecursive datalog} (NDL) \emph{program} if the (directed)  \emph{dependence graph} of the  dependence relation is acyclic. 

An \emph{NDL query} is a pair $(\Pi,G(\vec{x}))$, where $\Pi$ is an NDL program and $G(\vec{x})$ a predicate. A tuple $\vec{a}$ in $\ind(\A)$ is an \emph{answer to $(\Pi,G(\vec{x}))$ over} an ABox $\A$ if $G(\vec{a})$ holds in the first-order model with domain $\ind(\A)$ obtained by closing $\A$ under the clauses in $\Pi$; in this case we write $\Pi, \A \models G(\vec{a})$. The problem of checking whether  $\vec{a}$ is an answer to $(\Pi,G(\vec{x}))$ over $\A$ is called the \emph{query evaluation problem}. 
The \emph{arity of} $\Pi$ is the maximal arity, $\ari(\Pi)$, of predicates in $\Pi$. The \emph{depth} of $(\Pi,G(\vec{x}))$ is the length, $\dep(\Pi,G)$, of the longest directed path in the dependence graph for $\Pi$ starting from $G$. 
NDL queries are \emph{equivalent} if they have exactly the same answers over any ABox.

An NDL query $(\Pi,G(\vec{x}))$ is an \emph{\NDL-rewriting of an OMQ $\omq(\vec{x}) = (\T,\q(\vec{x}))$ over H-complete ABoxes} in case $\T,\A\models \q(\vec{a})$ iff  $\Pi,\mathcal{A} \models G(\vec{a})$, for any H-complete ABox $\A$ and any tuple $\vec{a}$ in $\ind (\mathcal{A})$.
Rewritings \emph{over arbitrary ABoxes} are defined by dropping the condition that the ABoxes are H-complete. 
Let $(\Pi, G(\vec{x}))$ be an \NDL-rewriting of $\omq(\vec{x})$ over H-complete ABoxes. Denote by $\Pi^*$ the result of replacing each predicate $S$ in $\Pi$ with a fresh predicate $S^*$ and adding the clauses    
$A^*(x) \leftarrow B'(x)$, for $B \sqsubseteq_\T A$, and
$P^*(x,y) \leftarrow R'(x,y)$, for $R \sqsubseteq_\T P$,
where $B'(x)$ and $R'(x,y)$ are the obvious first-order translations of $B$ and $R$ (for example, $B'(x) = \exists y\, R(x,y)$ if $B = \exists R$).
It is easy to see that $(\Pi^*,G^*(\vec{x}))$ is an \NDL-rewriting of $\omq(\vec{x})$  over arbitrary ABoxes. 

It is well-known~\cite{DBLP:journals/ws/CaliGL12} that, without loss of generality, we can only consider NDL-rewritings of  OMQs $(\T,\q(\vec{x}))$ over ABoxes $\A$ that are \emph{consistent} with $\T$.

We call an NDL query $(\Pi,G(x_1,\dots,x_n))$ \emph{ordered} if each of its IDB predicates $S$ comes with fixed variables $x_{i_1},\dots,x_{i_k}$ ($1 \le i_1 < \dots < i_k \le n$), called the \emph{parameters of} $S$,  such that (\emph{i}) every occurrence of $S$ in $\Pi$ is of the form $S(y_1,\dots,y_m,x_{i_1},\dots,x_{i_k})$, (\emph{ii}) the $x_i$ are the parameters of $G$, and (\emph{iii}) if $\vec{x}'$ are all the parameters in the body of a clause, then the head has $\vec{x}'$ among its parameters.  The \emph{width} $\wid(\Pi,G)$ of an ordered\linebreak $(\Pi,G)$ is the maximal number of non-parameter variables in a clause of $\Pi$. All our NDL-rewritings in Secs.~\ref{sec:boundedtw}--\ref{sec:boundedleaf} are ordered, so we now only consider ordered NDL queries.


\section{NL and LOGCFL Fragments of Nonrecursive Datalog}\label{sec3}

In this section, we identify two classes of (ordered) \NDL{} queries with the evaluation problem in the complexity classes \NL{} and \LOGCFL{} for combined complexity.
Recall~\cite{Abitebouletal95} that an NDL program is called \emph{linear} if the body of its every  clause contains at most one IDB predicate (remember that equality is an EDB predicate).
\begin{theorem}\label{linear-nl}
Fix some $\wid > 0$. The combined complexity of evaluating linear NDL queries of width at most $\wid$ is \NL-complete.
\end{theorem}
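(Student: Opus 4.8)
The statement has two halves: \NL-hardness and membership in \NL.

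For \emph{hardness}, I would reduce directed $s$-$t$ reachability (\textsc{stcon}), which is \NL-complete under logspace reductions, to the evaluation of a linear NDL query of width $2$, so that the bound already holds for every $\wid \ge 2$. Given a digraph with vertices $\{1,\dots,n\}$, edge set $E$, and distinguished $s,t$, take the ABox $\A = \{\,\mathit{edge}(c_u,c_v)\mid (u,v)\in E\,\}\cup\{\mathit{start}(c_s),\mathit{target}(c_t)\}$ over pairwise distinct individuals $c_i$, and the Boolean NDL query $(\Pi,\mathit{goal})$ with fresh unary IDB predicates $\mathit{reach}_0,\dots,\mathit{reach}_n$ and clauses $\mathit{reach}_0(x)\leftarrow\mathit{start}(x)$, together with $\mathit{reach}_{i+1}(x)\leftarrow\mathit{reach}_i(x)$ and $\mathit{reach}_{i+1}(y)\leftarrow\mathit{reach}_i(x)\land\mathit{edge}(x,y)$ for $0\le i<n$, and $\mathit{goal}\leftarrow\mathit{reach}_n(x)\land\mathit{target}(x)$. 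Each clause body has exactly one IDB atom, so $\Pi$ is linear; its dependence graph is the path $\mathit{goal}\to\mathit{reach}_n\to\dots\to\mathit{reach}_0$, so $\Pi$ is nonrecursive; there are no answer variables and the only clause with two variables is the edge clause, so $\wid(\Pi,\mathit{goal})=2$; and $\Pi,\A\models\mathit{goal}$ iff $t$ is reachable from $s$ (within $n$ steps, which suffices), while the whole instance is clearly produced in logspace.

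For \emph{membership}, the plan rests on three observations. (\emph{i})~Because the program is \emph{linear}, every derivation of the query atom $G(\vec a)$ over an ABox $\A$ is a \emph{chain} $\beta_0=G(\vec a),\beta_1,\dots,\beta_k$ of ground IDB atoms in which each $\beta_j$ is the head of a ground instance $C_j\theta_j$ of some clause $C_j\in\Pi$ whose unique IDB body atom, if present, is instantiated to $\beta_{j+1}$ and all of whose EDB body atoms (equalities included) hold in $\A$, and where $C_k$'s body has no IDB atom. (\emph{ii})~Because $\Pi$ is \emph{nonrecursive}, the head predicates of $\beta_0,\dots,\beta_k$ are pairwise distinct (a repetition would close a cycle in the dependence graph), so $k<|\Pi|$. (\emph{iii})~Because the query is ordered of \emph{width $\le\wid$}, in every $\beta_j$ the parameter positions carry components of the fixed tuple $\vec a$, and the at most $\wid$ remaining positions carry the values of the $\le\wid$ non-parameter variables of $C_j$; hence $\beta_j$ is \emph{succinctly} described by the pair $(C_j,\sigma_j)$, where $\sigma_j$ maps the non-parameter variables of $C_j$ to individuals from $\ind(\A)$ --- a representation using only $O(\wid\cdot\log|\A|+\log|\Pi|)$ bits, i.e.\ $O(\log N)$ bits on an input of size $N$, since $\wid$ is a constant. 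The \NL{} procedure then searches for such a chain: guess $(C_0,\sigma_0)$ whose head predicate is $G$ and whose head, under $\sigma_0$ and $\vec a$, is $G(\vec a)$, checking that the EDB body atoms of $C_0$ hold in $\A$; then iterate at most $|\Pi|$ times, keeping only the current $(C_j,\sigma_j)$ on the worktape --- if $C_j$'s body has no IDB atom, \emph{accept}; otherwise reconstruct, one argument position at a time, the atom $\beta_{j+1}$ to which $C_j$'s IDB body atom is instantiated, guess $(C_{j+1},\sigma_{j+1})$, verify that the head of $C_{j+1}$ under $\sigma_{j+1}$ and $\vec a$ equals $\beta_{j+1}$ (comparing the two succinctly represented atoms position by position with an $O(\log|\Pi|)$-bit counter) and that the EDB body atoms of $C_{j+1}$ hold in $\A$, and replace $(C_j,\sigma_j)$ by $(C_{j+1},\sigma_{j+1})$; reject if the loop ends without accepting. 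By~(\emph{i})--(\emph{ii}) an accepting run is exactly a derivation of $G(\vec a)$, so this decides $\Pi,\A\models G(\vec a)$; the worktape only ever holds a constant number of clause indices, counters and $\sigma$'s, so it uses $O(\log N)$ space. Together with the hardness reduction this gives \NL-completeness (for $\wid \ge 2$; for $\wid = 1$ only the membership direction is relevant).

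The step I expect to need the most care is observation~(\emph{iii}): since \emph{width}, but not \emph{arity}, is bounded, the atoms $\beta_j$ may have polynomially many arguments and cannot be stored explicitly on a logarithmic worktape. The resolution is the implicit representation by ``generating clause plus the assignment to its $\le\wid$ non-parameter variables'', which works precisely because the parameter positions are globally pinned to $\vec a$; one must then check equality of two such implicitly represented atoms, and membership of instantiated EDB atoms in $\A$, both of which can be done position by position in logarithmic space.
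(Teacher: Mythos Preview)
Your argument is correct and takes essentially the same route as the paper: both exploit that bounded width makes the space of ground IDB atoms polynomial, so a linear derivation is precisely a path in the ``grounding graph'' --- the paper constructs this graph via a logspace transducer and reduces to reachability, whereas you walk it on-the-fly with the succinct node representation $(C_j,\sigma_j)$, which is the same idea in direct-algorithm form. You also supply the \NL-hardness reduction from \textsc{stcon}, which the paper's proof omits.
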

\begin{proof}
Let $(\Pi,G(\vec{x}))$ be a linear NDL query. 
Deciding whether $\Pi, \A \models G(\vec{a})$ is reducible to finding a path to $G(\vec{a})$ from a certain set $X$ in the \emph{grounding graph} $\mathfrak{G}(\Pi, \A,\vec{a})$ constructed as follows. The vertices of the graph are the ground atoms 
obtained by taking an IDB atom from $\Pi$, replacing each of its parameters by the corresponding constant from $\vec{a}$, and replacing each non-parameter variable by some constant from $\A$. The graph has an edge from $S(\vec{c})$ to $S'(\vec{c}')$
iff the grounding of $\Pi$ contains a clause 
$S'(\vec{c}')\leftarrow S(\vec{c})\land E_1(\vec{e}_1)\land \dots \land E_k(\vec{e}_k)$
with $E_j(\vec{e}_j) \in \A$, for $1\le j \le k$ (we assume that $(c=c) \in \A$).
The set $X$ consists of all vertices $S(\vec{c})$ with IDB predicates $S$ being
of in-degree 0 in the dependency graph of $\Pi$ for which there is a clause 
$S(\vec{c}) \leftarrow E_1(\vec{e}_1)\land \dots \land E_k(\vec{e}_k)$ in the grounding
of $\Pi$ with $E_j(\vec{e}_j) \in \A$ \mbox{($1\le j \le k$)}.
Bounding the width of $(\Pi,G)$ ensures that 
$\mathfrak{G}(\Pi, \A,\vec{a})$ is of polynomial size and can be constructed by a deterministic Turing machine with separate input, write-once output and logarithmic-size working tapes.\qed
\end{proof}

The transformation of NDL-rewritings over H-complete ABoxes into rewritings for arbitrary ABoxes  in Section~\ref{sec:prelims} does not preserve linearity. 
However, we can still show that it suffices to consider the H-complete case: 
\begin{lemma}\label{linear-arbitrary}
For any fixed $\wid > 0$, there is an $\mathsf{L}^\NL$-transducer that, given a linear \NDL-rewriting of an OMQ $\omq(\vec{x})$ over H-complete ABoxes that is of width at most $\wid$, computes a linear NDL-rewriting of $\omq(\vec{x})$ over arbitrary ABoxes whose width is at most $\wid+1$.
\end{lemma}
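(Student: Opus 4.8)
The plan is to define the transducer explicitly and then argue correctness and the width/resource bounds. Recall that the standard transformation $\Pi \mapsto \Pi^*$ of Section~\ref{sec:prelims} introduces, for every concept name $A$ and role name $P$, an auxiliary clause $A^*(x) \leftarrow B'(x)$ (for $B \sqsubseteq_\T A$) and $P^*(x,y) \leftarrow R'(x,y)$ (for $R \sqsubseteq_\T P$). The clauses $P^*(x,y) \leftarrow R'(x,y)$ are harmless for linearity, since $R'$ is an EDB atom. The problem is $A^*(x) \leftarrow B'(x)$ when $B = \exists R$: here $B'(x) = \exists y\, R(x,y)$, still EDB, so in fact \emph{that} clause is linear too. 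The real source of nonlinearity is more subtle: after replacing $S$ by $S^*$, a clause of $\Pi$ that was linear may, in $\Pi^*$, pick up several IDB atoms only if the starred EDB predicates that replace former EDB atoms are themselves made IDB. So the first thing I would do is pin down precisely \emph{why} $\Pi^*$ fails to be linear and observe that the failure is entirely confined to the newly added $A^*/P^*$ clauses being invoked inside bodies that already contain a recursive IDB atom from $\Pi$. The fix is to keep the ``H-completion'' computation separate from the rewriting computation: instead of starring everything, precompute the H-complete closure of the input ABox as EDB facts.

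Concretely, the transducer outputs the program $\Pi'$ consisting of: (a) the original clauses of $\Pi$, but with every EDB body atom $A(x)$ replaced by a fresh EDB atom $\widehat A(x)$ and every $P(x,y)$ replaced by $\widehat P(x,y)$; together with (b) the ``closure'' clauses $\widehat A(x) \leftarrow B'(x)$ for all $B \sqsubseteq_\T A$ and $\widehat P(x,y) \leftarrow R'(x,y)$ for all $R \sqsubseteq_\T P$, where $B'$, $R'$ are the first-order translations, and also $\widehat A(x) \leftarrow A(x)$, $\widehat P(x,y) \leftarrow P(x,y)$ to retain the original facts. Crucially, the $\widehat A$, $\widehat P$ are marked as IDB but the \emph{dependence graph stays stratified}: $\widehat A$, $\widehat P$ depend only on genuine EDB predicates (the translations $B'$, $R'$ involve only EDB atoms since $B = \exists R$ gives $B'(x) = \exists y\,R(x,y)$ with $R$ EDB), and the clauses in (a) depend on the $\widehat{}$ predicates and on the old IDB predicates of $\Pi$. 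Each clause in group (b) has at most one IDB atom in its body (in fact its body is a single translated atom, EDB or a single $\widehat{}$ atom — actually it is a single EDB atom, so zero IDB atoms), and each clause in group (a) has the same number of IDB atoms as the corresponding clause of $\Pi$ plus possibly several $\widehat{}$ atoms. That last point is the obstacle: a group-(a) clause with one old IDB atom and, say, two former-EDB atoms now has three IDB atoms — so $\Pi'$ is \emph{not} linear as stated.

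To repair this I would use the standard trick of turning the $\widehat{}$ predicates back into EDB predicates by precomputing them. Since for each $\widehat A(x)$ the defining clauses have bodies that are single EDB atoms (over the original ABox signature), the relation $\widehat A$ over $\ind(\A)$ is computable by a first-order query — indeed an $\ACz$, hence $\mathsf{L}$, computation — and likewise for $\widehat P$. But the lemma only allows an $\mathsf{L}^\NL$-transducer to \emph{produce the program}, not to evaluate it; the H-completion must be done by the program at evaluation time. The correct resolution, which I expect to be the technical heart of the argument, is: the $\widehat{}$ predicates are \emph{binary-or-unary EDB-definable}, so we may instead substitute their (constantly many, since $\T$ is part of the fixed-size input to the transducer but — wait, $\T$ is part of the OMQ, which is the input) definitions inline. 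I would handle the one genuine increment in width by noting that H-completion of a single atom $A(x)$ or $P(x,y)$ needs, in the worst case $B = \exists R$, one extra variable $y$ for the witness; by processing the closure one atom at a time in a linear chain of IDB predicates $\widehat{A}_1, \widehat{A}_2, \dots$ that successively saturate, one keeps at most one IDB atom per clause and at most $\wid+1$ non-parameter variables, the ``$+1$'' being exactly this witness variable. Correctness — that $\Pi', \A \models G(\vec a)$ iff $\T, \A \models \q(\vec a)$ over arbitrary $\A$ — then follows because, over any $\A$, the $\widehat{}$ predicates compute exactly the atoms of the H-complete closure $\A^h$ of $\A$, and $\Pi$ restricted to reading $\widehat{}$ atoms behaves on $\A$ exactly as $\Pi$ behaves on $\A^h$, while $\T,\A\models\q(\vec a)$ iff $\T,\A^h\models\q(\vec a)$ by the well-known invariance of certain answers under H-completion. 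Finally, the whole construction is purely syntactic, local, and of size polynomial (indeed linear) in the input, so it is carried out by an $\mathsf{L}$-transducer — the $\NL$ oracle is not even needed for this direction — which a fortiori gives the claimed $\mathsf{L}^\NL$ bound; I would remark that the $\NL$ oracle is stated for uniformity with the companion $\LOGCFL$ constructions but here $\mathsf{L}$ suffices.
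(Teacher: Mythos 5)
Your final sketch does converge on the paper's construction: process the EDB atoms of each clause one at a time through a chain of fresh intermediate IDB predicates, replacing each atom by a single atom that implies it with respect to $\T$ (a concept atom $A(u)$ may be replaced by $B(u)$ with $B\sqsubseteq_\T A$ or by $R(u,u')$ with $\exists R\sqsubseteq_\T A$, the fresh witness $u'$ being exactly the source of the width increase to $\wid+1$). But as written there is a genuine gap: you never commit to one construction, and the correctness argument is attached to the wrong one. If the predicates $\widehat{A},\widehat{P}$ are kept as IDB predicates---which is the version your correctness claim speaks about (``the hatted predicates compute exactly the H-complete closure'')---then each chained clause still contains two IDB body atoms (the chain predicate and one hatted atom), so linearity fails, as you yourself observed for the unchained version. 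If instead you inline their one-EDB-atom definitions, one implying atom at a time, you obtain a linear program of polynomial size (essentially the paper's), but then the hatted predicates no longer occur in it and your correctness argument says nothing about it. What is missing is the actual correctness proof for the inlined, chained program: an induction on derivations showing that every derivation of $\Pi$ over the H-completion $\A'$ of an arbitrary ABox $\A$ can be simulated over $\A$ by choosing, for each ground EDB atom used, an assertion of $\A$ that implies it w.r.t.\ $\T$ (soundness being immediate since the original atoms are among the allowed choices). That simulation step, together with a precise definition of the argument tuples of the intermediate chain predicates (needed both for the width bound and for the head-variables condition), is the heart of the lemma and is absent from your writeup.

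Your closing remark that the $\NL$ oracle ``is not even needed'' is also incorrect. The transducer must enumerate clauses indexed by the \emph{entailed} subsumptions $B\sqsubseteq_\T A$, $R\sqsubseteq_\T P$ and $\exists R\sqsubseteq_\T A$, where $\sqsubseteq_\T$ means $\T\models S_1\sqsubseteq S_2$, not a syntactic inclusion of $\T$; deciding such entailments for an \OWLQL{} TBox is an \NL-complete reachability-style problem, so a plain logspace transducer cannot (presumably) produce the clause set. This is precisely why the statement is phrased with an $\mathsf{L}^\NL$-transducer: membership of a candidate atom in the set of atoms implying a given $E_i$ is decidable in \NL, and the rest of the construction is indeed purely syntactic and logspace.
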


The complexity class \LOGCFL{} can be defined in terms of \emph{nondeterministic auxiliary pushdown automata}  (NAuxPDAs)~\cite{DBLP:journals/jacm/Cook71}, which are nondeterministic Turing machines with an additional work tape constrained to operate as a pushdown store. Sudborough~\cite{sudborough78} proved that \LOGCFL{} coincides with the class of problems that are solved by NAuxPDAs running in logarithmic space and polynomial time (the space on the pushdown tape is not subject to the logarithmic  bound).

We call an NDL query $(\Pi,G)$ \emph{skinny} if the body of any  clause in $\Pi$ has $\le 2$ atoms. 
\begin{lemma}\label{prop:SkinnyNDLEvaluation}
For any skinny NDL query $(\Pi, G(\vec{x}))$ and ABox $\A$, query evaluation can be done by an NAuxPDA in space $\log |\Pi| + \wid(\Pi,G) \cdot \log |\A|$ and time $2^{O(\dep(\Pi,G))}$.
\end{lemma}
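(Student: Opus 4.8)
The plan is to simulate the bottom-up evaluation of $(\Pi,G(\vec a))$ by an NAuxPDA whose finite control carries a single ground atom at a time, using the pushdown store to organise the recursion through the dependence graph. Since $(\Pi,G)$ is ordered, every IDB atom that can arise during evaluation has the form $S(c_1,\dots,c_m,a_{i_1},\dots,a_{i_k})$, where the parameters $a_{i_j}$ are determined by $\vec a$; hence such an atom is described by the name of $S$ together with the at most $\wid(\Pi,G)$ constants filling its non-parameter positions, which takes $\log|\Pi| + \wid(\Pi,G)\cdot\log|\A|$ bits. This is exactly the claimed working-space bound, so one item of the recursion fits in the allotted logspace, and the pushdown will hold a stack of such items.

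First I would describe the recursive procedure $\mathsf{Prove}(\alpha)$ that checks whether a given ground atom $\alpha$ is derivable: if $\alpha$ is an EDB atom (including an equality), accept iff $\alpha\in\A$ (directly checkable from the input in logspace); if $\alpha$ is an IDB atom, nondeterministically guess a clause of $\Pi$ whose head unifies with $\alpha$ and a grounding of its body variables by constants of $\A$ consistent with the parameters, then recursively prove each body atom. Because $(\Pi,G)$ is \emph{skinny}, the body has at most two atoms, so $\mathsf{Prove}(\alpha)$ spawns at most two recursive calls. I would implement this as: prove the first body atom by a recursive call; upon its success, prove the second. The key point for the pushdown discipline is that at any moment only one branch of the computation is "active" and its ancestors sit on the stack in order; a standard argument (as in the simulation of logspace-bounded AND/OR computations by NAuxPDAs, or the classic characterisation that semi-unbounded log-depth circuits $=$ \LOGCFL{}) shows that a binary recursion with logspace-sized frames is realisable on an NAuxPDA with logspace auxiliary work tape and pushdown of polynomial height. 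I would push the "return address" (which clause, which body atom is still pending, and the local grounding) when descending into the first child and pop it to continue with the second.

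Next I would bound the time. The dependence graph of $\Pi$ is acyclic, and $\dep(\Pi,G)$ is the length of the longest directed path from $G$; hence the recursion tree has depth $O(\dep(\Pi,G))$, and since each node has $\le 2$ children the whole tree has $2^{O(\dep(\Pi,G))}$ nodes. At each node the machine does a polynomial (indeed $\mathsf{poly}(|\Pi|,|\A|)$, which is $2^{O(\log|\Pi|+\wid\cdot\log|\A|)}$) amount of work to guess a clause, guess the grounding of its $\le\wid$ non-parameter variables, and check the parameter bookkeeping; multiplying, the total running time is $2^{O(\dep(\Pi,G))}\cdot\mathsf{poly}$, which is $2^{O(\dep(\Pi,G))}$ absorbing the polynomial factor into the exponent. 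Finally I would note that correctness is immediate: the procedure accepts $\alpha$ on some nondeterministic branch iff $\alpha$ follows from closing $\A$ under $\Pi$, so running it on $\alpha=G(\vec a)$ decides query evaluation.

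The main obstacle is the pushdown-discipline argument: one must be careful that the recursion really can be linearised onto a stack, i.e.\ that when $\mathsf{Prove}$ returns from the first child we have exactly the information needed to launch the second child and, on its return, to report success to the parent — and that the stack never needs random access. This is where skinniness ($\le 2$ body atoms) is essential: a ternary or unbounded branching would still be handleable, but the clean two-call structure makes the correspondence with an NAuxPDA transparent and keeps the per-frame data at $O(\log|\Pi|+\wid\cdot\log|\A|)$ without having to store a list of pending siblings. I expect the rest (space accounting, time accounting, soundness/completeness of the guessing procedure against the fixpoint semantics) to be routine.
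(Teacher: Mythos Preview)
Your argument is correct and is essentially the paper's proof with the intermediate object inlined: the paper first builds the semi-unbounded fan-in Boolean circuit of the grounded program (unbounded \textsc{or} over clauses sharing a head, fan-in-$2$ \textsc{and} over the $\le 2$ body atoms, size $O(|\Pi|\cdot|\A|^{\wid(\Pi,G)})$, depth $O(\dep(\Pi,G))$) and then invokes Venkateswaran's NAuxPDA evaluation of $\SAC$ circuits, whereas your $\mathsf{Prove}$ procedure \emph{is} that evaluation described directly. The space and time accounting coincide.
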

\begin{proof}
Let $\Pi_{\A}^{\vec{a}}$ be the set of ground clauses obtained by first replacing each parameter in $\Pi$ by the corresponding constant from $\vec{a}$, 
and then performing the standard grounding of $\Pi$ using the constants from $\A$. 
Consider the monotone Boolean circuit $\cir(\Pi,\A,\vec{a})$ constructed as follows. 
The output of $\cir(\Pi,\A,\vec{a})$ is $G(\vec{a})$. 
For every atom $\gamma$ occurring in the head of a clause in $\Pi_{\A}^{\vec{a}}$, we take an \textsc{or}-gate whose output is $\gamma$ and inputs are the bodies of the clauses with head~$\gamma$; for every such body, we take an \textsc{and}-gate whose inputs are the atoms in the body. We set an input gate $\gamma$ to 1 iff $\gamma \in \A$. Clearly, $\cir(\Pi,\A,\vec{a})$ is a semi-unbounded fan-in circuit (where \textsc{or}-gates have arbitrarily many inputs, and \textsc{and}-gates two inputs) with $O(|\Pi| \cdot |\A|^{\wid(\Pi,G)})$ gates and depth $O(\dep(\Pi,G))$. 
It is known that the nonuniform analog of \LOGCFL{} can be defined using families of semi-unbounded fan-in circuits of polynomial size and logarithmic depth. Moreover,  there is an algorithm that, given such a circuit $\cir$, computes the output using an NAuxPDA 
in logarithmic space in the size of $\cir$ and exponential time in the depth of $\cir$~\cite[pp.~392--397]{DBLP:journals/jcss/Venkateswaran91}.
Observing that $\cir(\Pi,\A,\vec{a})$ can be computed by a deterministic logspace Turing machine, we conclude that the query evaluation problem can be solved by an NAuxPDA in space $\log |\Pi| + \wid(\Pi,G) \cdot \log |\A|$ and time $2^{O(\dep(\Pi,G))}$.\qed
\end{proof}

A function $\nu$ from the predicate names in $\Pi$ to $\mathbb N$ is a \emph{weight function for} an NDL-query $(\Pi, G(\vec{x}))$ if $\nu(P) > 0$, for any IDB $P$ in $\Pi$, and $\nu(P) \ge \nu(Q_1) + \dots + \nu(Q_n)$, for any   
$P(\vec{z}) \leftarrow Q_1(\vec{z}_1) \land \dots \land Q_n(\vec{z}_n)$ in $\Pi$.  
\begin{lemma}\label{thm:NDLToSkinny}
If $(\Pi, G(\vec{x}))$ has a weight function $\nu$, then it is equivalent to a skinny NDL query $(\Pi',G(\vec{x}))$ such that $|\Pi'|$ is polynomial in $|\Pi|$, $\dep(\Pi',G) \le \dep(\Pi,G) + \log \nu (G)$ and $\wid(\Pi',G) \le \wid(\Pi,G)$. 
\end{lemma}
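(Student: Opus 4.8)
The plan is to rewrite each clause of $\Pi$ with more than two body atoms into a chain of clauses, each with at most two body atoms, introducing fresh intermediate IDB predicates whose parameter lists are inherited from the original clause. Concretely, given a clause $P(\vec z)\leftarrow Q_1(\vec z_1)\wedge\dots\wedge Q_n(\vec z_n)$ with $n\ge 3$, I would split it with a balanced binary tree of conjunctions: introduce a fresh IDB predicate $R$ for the conjunction $Q_1\wedge\dots\wedge Q_{\lceil n/2\rceil}$ and another fresh IDB $R'$ for $Q_{\lceil n/2\rceil+1}\wedge\dots\wedge Q_n$, add the clause $P(\vec z)\leftarrow R(\dots)\wedge R'(\dots)$, and recurse on the two halves. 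The non-parameter variables carried by each fresh predicate are exactly those non-parameter variables of the original clause that occur in the corresponding block of atoms \emph{and} are shared with atoms outside that block (or with the head); its parameters are the parameters appearing in that block together with those forced by condition (iii) of orderedness. Doing this for every fat clause and then also, if necessary, padding clauses with fewer than two body atoms to skinny form produces $(\Pi',G(\vec x))$.

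First I would verify equivalence: the fresh predicates are genuinely fresh, so their extensions in the closure are exactly the projections onto the retained variables of the satisfying assignments of the corresponding sub-conjunctions, and unfolding $\Pi'$ bottom-up reconstructs precisely the derivations of $\Pi$. Second, the width bound $\wid(\Pi',G)\le\wid(\Pi,G)$ holds because every new clause has at most two body atoms and the non-parameter variables appearing in any such clause form a subset of the non-parameter variables of the original clause it came from — splitting a conjunction cannot create a variable that was not already there, and we only keep variables that were present. Third, the size bound: the binary splitting of an $n$-atom clause yields $O(n)$ new clauses, so $|\Pi'|=O(|\Pi|)$ in the number of clauses, and each new clause has bounded arity (at most $\wid+|\vec x|$), so $|\Pi'|$ is polynomial in $|\Pi|$.

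The main obstacle, and the reason the hypothesis about the weight function $\nu$ is needed, is the depth bound $\dep(\Pi',G)\le\dep(\Pi,G)+\log\nu(G)$. Naively, balanced binary splitting of a single clause adds $\lceil\log n\rceil$ to the depth, and these logarithms could accumulate along a path in the dependence graph of length $\dep(\Pi,G)$, giving a bound like $\dep(\Pi,G)\cdot\log(\max\text{arity of a clause})$ rather than an additive one. To get the additive bound I would instead do the splitting \emph{guided by the weights}: when splitting $P(\vec z)\leftarrow Q_1\wedge\dots\wedge Q_n$, I partition the body atoms into the two blocks so that the sums of $\nu(Q_i)$ over the two blocks are as balanced as possible, and I assign to the fresh predicate for a block the weight equal to the sum of the weights of the atoms it contains (EDB atoms getting weight $0$, say, or more carefully one uses that they contribute nothing to recursive depth). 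Because $\nu(P)\ge\sum_i\nu(Q_i)$, the weight of each child predicate is at most $\nu(P)$, and a weight-balanced split halves (up to rounding) the weight along each recursive step, so the number of new IDB layers inserted above any atom $Q_i$ is at most $\log\nu(P)-\log\nu(Q_i)$. Telescoping this inequality along any root-to-leaf path through the dependence graph of $\Pi'$, the extra depth collapses to $\log\nu(G)$ minus the (nonnegative) log-weight of the bottom predicate, yielding $\dep(\Pi',G)\le\dep(\Pi,G)+\log\nu(G)$. The one technical point to be careful about here is handling atoms of weight $0$ (EDB atoms, or IDB atoms of weight $0$ if any — though the definition forbids IDB weight $0$), which one can place so they do not deepen the recursion, e.g.\ peel them off first into a linear chain of one-extra-atom clauses that is absorbed into the $\dep(\Pi,G)$ term; I would confirm this does not spoil the width or size bounds.
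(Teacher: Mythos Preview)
Your approach is essentially the paper's: replace each clause body by a binary tree of fresh two-atom clauses, with the tree shaped so that leaf $Q_i$ sits at depth roughly $\log\nu(P)-\log\nu(Q_i)$, and then telescope along a root-to-leaf path; the paper phrases the same argument as an induction on $\dep(\Pi,G)$. The only substantive difference is the tree construction: the paper builds a \emph{Huffman tree} for the body atoms $P_1,\dots,P_k$ with frequencies $\nu(P_j)/\nu(G)$ and invokes the leaf-depth bound $\lceil\log(\nu(G)/\nu(P_j))\rceil$ directly, whereas you propose top-down weight-balanced partitioning. Your claim that such a split ``halves (up to rounding) the weight along each recursive step'' is not quite right as stated --- the block containing $Q_i$ can have weight up to about $W/2+\nu(Q_i)$ after a most-balanced split, not $W/2$ --- so the per-leaf depth bound needs a more careful argument (or you can sidestep the issue by appealing to Kraft's inequality: since $\sum_j \nu(P_j)/\nu(G)\le 1$, a binary tree with leaf $j$ at depth $\lceil\log(\nu(G)/\nu(P_j))\rceil$ exists).

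One point that is actually wrong: your proposed handling of weight-$0$ EDB atoms by ``peeling them off first into a linear chain'' is \emph{not} absorbed into the $\dep(\Pi,G)$ term. A clause with $\ell$ EDB atoms contributes $1$ to $\dep(\Pi,G)$, but your linear chain adds $\ell$ levels to $\dep(\Pi',G)$; if $\ell>\log\nu(G)$ the depth bound fails. The fix is to put the EDB atoms into the same binary tree (e.g.\ hang each one as a sibling of some IDB leaf, or give them nominal positive weight), so their depth is bounded by the tree height, which you already control. The paper's proof is silent on this issue as well.
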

\begin{proof}
The proof is by induction on $\dep(\Pi,G)$. If $\dep(\Pi,G)=0$, we take $\Pi'=\Pi$. 
Suppose $\Pi$ contains a clause $\psi$ of the form $G(\vec{z}) \leftarrow P_1(\vec{z}_1) \land \dots \land 
P_k(\vec{z}_k)$ and, for each $1\le j \le k$, we have an NDL query $(\Pi'_{P_j},P_j)$ which is equivalent to $(\Pi, P_j)$ and such that
\begin{equation}\label{eq:depPj}
\dep(\Pi'_{P_j},P_j) ~\le~ \dep(\Pi_{P_j},P_j) + \log \nu(P_j) ~\le~ \dep(\Pi,G) - 1 + \log \nu(P_j).
\end{equation}
We construct the Huffman tree~\cite{huf52} for the alphabet $\{1,\dots,k\}$, where the frequency of $j$ is $\nu(P_j)/\nu(G)$ (by definition, $\nu(G) >0$). The Huffman tree is  binary and has $k$ leaves, denoted $1,\dots,k$,  and $k-1$ internal nodes (including the root, $g$), and the length of the path from $g$ to any leaf $j$ at most $\lceil \log(\nu(G)/\nu(P_j))\rceil$. For each internal node $v$ of the tree (but the root), we take a predicate $P_v(\vec{z}_v)$, where $\vec{z}_v$ is the union of $\vec{z}_u$ for all descendants $u$ of $v$; for the root $g$, we take $P_g(\vec{z}_g) = G(\vec{z})$. Let $\Pi'_\psi$ be the extension of the union of $\Pi'_{P_j}$, for $1\leq j \leq k$, with clauses $P_v(\vec{z}_{v}) \leftarrow P_{u_1}(\vec{z}_{u_1}) \land P_{u_2}(\vec{z}_{u_2})$, for each $v$ with immediate successors  $u_1$ and $u_2$.
 The number of the new clauses is $k-1$. Consider the NDL query $(\Pi'_\psi,G(\vec{z}))$. By~\eqref{eq:depPj}, we  have:
\begin{multline*}
\dep(\Pi'_\psi,G) \le \max\nolimits_j \{ \lceil \log (\nu(G) / \nu(P_j)) \rceil + \dep(\Pi'_{P_j},P_j) \} \le{} \\
\max\nolimits_j \{ \log (\nu(G) / \nu(P_j)) + \dep(\Pi,G) + \log \nu(P_j) \} =
\log \nu(G) + \dep(\Pi,G).
\end{multline*}
Let $\Pi'$ be the result of applying this transformation to
each clause in $\Pi$ with head $G(\vec{z})$. 
It is readily seen that $(\Pi',G)$ is as required; in particular, $|\Pi'|= O(|\Pi|^2)$. \qed 
\end{proof}

\begin{theorem} Fix $c \ge 1$, $\wid \ge 1$ and a polynomial $p$. Query evaluation for NDL queries $(\Pi, G(\vec{x}))$ with a weight function $\nu$ such that $\nu(G) \le p(|\Pi|)$, $\wid(\Pi,G) \le \wid$ and  
$\dep(\Pi,G) \le c \log \nu(G)$ is in \LOGCFL{} for combined complexity.
\end{theorem}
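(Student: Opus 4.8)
The plan is to compose the two structural lemmas above with Sudborough's machine characterisation of \LOGCFL. Given $(\Pi,G(\vec{x}))$ together with its weight function $\nu$, first apply Lemma~\ref{thm:NDLToSkinny} to obtain an equivalent \emph{skinny} NDL query $(\Pi',G(\vec{x}))$ with $|\Pi'|$ polynomial in $|\Pi|$, $\wid(\Pi',G)\le\wid(\Pi,G)\le\wid$, and
\[
\dep(\Pi',G)\ \le\ \dep(\Pi,G)+\log\nu(G)\ \le\ (c+1)\log\nu(G)\ \le\ (c+1)\log p(|\Pi|),
\]
using the hypothesis $\dep(\Pi,G)\le c\log\nu(G)$ in the middle and $\nu(G)\le p(|\Pi|)$ at the end; in particular $\dep(\Pi',G)=O(\log|\Pi'|)$. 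Then run the NAuxPDA of Lemma~\ref{prop:SkinnyNDLEvaluation} on $(\Pi',G(\vec{x}))$ and $\A$.

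By Lemma~\ref{prop:SkinnyNDLEvaluation}, this NAuxPDA uses space $\log|\Pi'|+\wid(\Pi',G)\cdot\log|\A|$ and time $2^{O(\dep(\Pi',G))}$. Since $\wid$ is a fixed constant and $|\Pi'|$ is polynomial in $|\Pi|$, the space is $O(\log(|\Pi|+|\A|))$, i.e.\ logarithmic in the combined input size; since $\dep(\Pi',G)=O(\log|\Pi|)$, the running time is $2^{O(\log|\Pi|)}=|\Pi|^{O(1)}$, again polynomial in the combined input size. By Sudborough's theorem~\cite{sudborough78}, any problem solved by a logarithmic-space polynomial-time NAuxPDA is in \LOGCFL, and \LOGCFL{} is closed under logspace many-one reductions, which yields the claim.

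The point that needs care is that the whole pipeline be realised within these resource bounds. Concretely, one views $(\Pi,\A,\vec{a},\nu)\mapsto(\Pi',\A,\vec{a})$ as a logspace many-one reduction to the evaluation problem for skinny queries of bounded width and logarithmic depth, so one must check that the construction of $\Pi'$ in Lemma~\ref{thm:NDLToSkinny} is computable by a deterministic logspace transducer — in particular the Huffman trees, or a simpler logspace substitute that still places each leaf $j$ at depth $O(\log(\nu(G)/\nu(P_j)))$ — and likewise that the circuit $\cir(\Pi',\A,\vec{a})$ in the proof of Lemma~\ref{prop:SkinnyNDLEvaluation} is logspace-constructible (as is already asserted there). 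One also has to fix how $\nu$ enters: either it is supplied with the input, or one exhibits a concrete logspace-computable weight function (e.g.\ sizes of minimal derivation trees capped at $p(|\Pi|)$). None of this is deep, but the bookkeeping around logspace-computability of the $\Pi\mapsto\Pi'$ transformation is the main thing to get right; the depth, width and size estimates themselves follow immediately from the two lemmas and the hypotheses on $\nu(G)$ and $\dep(\Pi,G)$.
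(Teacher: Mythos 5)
Your argument is essentially the paper's own proof: apply Lemma~\ref{thm:NDLToSkinny} to get an equivalent skinny query of bounded width, polynomial size and depth $(c+1)\log\nu(G)=O(\log|\Pi|)$, then run the NAuxPDA of Lemma~\ref{prop:SkinnyNDLEvaluation}, whose space and time bounds become logarithmic and polynomial respectively, and conclude by Sudborough's characterisation of \LOGCFL. The additional bookkeeping you flag (logspace computability of the $\Pi\mapsto\Pi'$ transformation and how $\nu$ is supplied) is a legitimate point of care, but the paper does not treat it either and it does not alter the argument.
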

\begin{proof}
By Lemma~\ref{thm:NDLToSkinny}, $(\Pi,G)$ is equivalent to a skinny NDL query $(\Pi',G')$ with $|\Pi'|$ polynomial in $|\Pi|$, $\wid(\Pi',G) \le \wid$, and $\dep(\Pi',G') \le (c + 1)\log \nu(G)$. By  
Lemma~\ref{prop:SkinnyNDLEvaluation}, query evaluation for $(\Pi',G')$ over $\A$ is   solved by an NAuxPDA in space $\log |\Pi'| + \wid(\Pi',G) \cdot \log |\A| = O(\log |\Pi|+ \log |\A|)$ and time $2^{O(\dep(\Pi',G'))} \le 2^{O(\log \nu(G))} = (\nu(G))^{O(1)} \le p'(|\Pi|)$, for some polynomial $p'$.\qed
\end{proof}

\begin{corollary}\label{cor:weight}
Suppose there is an algorithm that, given any OMQ $\omq(\vec{x})$ from some class~$\mathcal{C}$, constructs its NDL-rewriting $(\Pi,G(\vec{x}))$ over H-complete ABoxes having a weight function $\nu$ with $\nu(G) \le |\omq|$ and  $\dep(\Pi,G) \le c \log \nu(G)$, and such that $\wid(\Pi,G) \le \wid$ and $|\omq| \le |\Pi| \le p(|\omq|)$, for some fixed constants $c$, $\wid$ and polynomial $p$. 
Then the evaluation problem for the NDL-rewritings $(\Pi^*,G^*(\vec{x}))$ of the OMQs in $\mathcal{C}$ over arbitrary ABoxes \textup{(}defined in Section~\ref{sec:prelims}\textup{)} is in \LOGCFL{} for combined complexity.
\end{corollary}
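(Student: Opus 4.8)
The plan is to reduce to the skinny‑query machinery of Lemmas~\ref{thm:NDLToSkinny} and~\ref{prop:SkinnyNDLEvaluation}, taking care that the $(\cdot)^*$‑transformation of Section~\ref{sec:prelims} behaves well on skinny programs. Fix $\omq(\vec x) \in \mathcal C$ and let $(\Pi, G(\vec x))$ and $\nu$ be as supplied by the algorithm, so that $\nu(G) \le |\omq| \le |\Pi| \le p(|\omq|)$, $\dep(\Pi,G) \le c\log\nu(G) \le c\log|\omq|$ and $\wid(\Pi,G) \le \wid$. First I would apply Lemma~\ref{thm:NDLToSkinny} to $(\Pi,G)$ with $\nu$ to obtain an equivalent skinny query $(\Pi',G)$ with $|\Pi'|$ polynomial in $|\Pi|$ (hence in $|\omq|$), $\wid(\Pi',G) \le \wid$ and $\dep(\Pi',G) \le \dep(\Pi,G) + \log\nu(G) \le (c+1)\log|\omq|$. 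Since $(\Pi',G)$ is equivalent to $(\Pi,G)$, it is again an \NDL-rewriting of $\omq$ over H-complete ABoxes.

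Next I would form $((\Pi')^*, G^*(\vec x))$ by the construction of Section~\ref{sec:prelims}; by the remark there it is an \NDL-rewriting of $\omq$ over arbitrary ABoxes. The key observation is that this construction only adds clauses $A^*(x) \leftarrow B'(x)$ and $P^*(x,y) \leftarrow R'(x,y)$, each with a single body atom over an EDB predicate (and at most two variables, all non-parameters). Consequently: $((\Pi')^*,G^*)$ is still skinny and $\wid((\Pi')^*,G^*) \le \max\{\wid, 2\}$; in the dependence graph of $(\Pi')^*$ these new clauses merely add, beneath the former EDB concept and role predicates of $\Pi'$, edges into fresh EDB sinks, so $\dep((\Pi')^*, G^*) \le \dep(\Pi',G) + 1 \le (c+1)\log|\omq| + 1$; and the number of new clauses is $O(|\T|^2) = O(|\omq|^2)$, whence $|(\Pi')^*|$ is polynomial in $|\omq|$.

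Then I would invoke Lemma~\ref{prop:SkinnyNDLEvaluation} for $((\Pi')^*,G^*)$ and an ABox $\A$: evaluation is done by an NAuxPDA in space $\log|(\Pi')^*| + \wid((\Pi')^*,G^*)\cdot\log|\A| = O(\log|\omq| + \log|\A|)$, i.e.\ logarithmic in the combined input, and time $2^{O(\dep((\Pi')^*,G^*))} = 2^{O(\log|\omq|)} = |\omq|^{O(1)}$, i.e.\ polynomial in the combined input. By Sudborough's characterisation of \LOGCFL{} via logspace polynomial-time NAuxPDAs this places the problem in \LOGCFL. Finally, since $(\Pi^*,G^*)$ and $((\Pi')^*,G^*)$ are both \NDL-rewritings of $\omq$ over arbitrary ABoxes, they return the same answers over every ABox consistent with $\T$, and (recalling that it suffices to consider such ABoxes) evaluating $(\Pi^*,G^*)$ reduces to evaluating $((\Pi')^*,G^*)$; the deterministic logspace preprocessing that produces $(\Pi')^*$ from the given data is absorbed into the \LOGCFL{} bound.

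I expect the main obstacle to be the claim that $(\cdot)^*$ raises the depth by only an additive constant rather than composing with the $\sqsubseteq_\T$-chains — this is what makes the skinnified program genuinely of logarithmic depth — together with the bookkeeping needed to justify working with the equivalent program $((\Pi')^*,G^*)$ in place of the literal $(\Pi^*,G^*)$ named in the statement; note also that a direct weight function for $\Pi^*$ is awkward, since the completion clauses turn EDB predicates into IDB ones. An alternative to the last two paragraphs is to equip $((\Pi')^*,G^*)$ with the weight function $\nu^*(S^*) = 2^{d}$, where $d$ is the depth of $S^*$ in $(\Pi')^*$ (legitimate precisely because the program is skinny, so every clause body has at most two atoms), and then apply the preceding theorem directly: $\nu^*(G^*) \le 2\,\nu(G)^{c+1}$ is polynomial in $|\omq|$ and $\dep((\Pi')^*,G^*) \le \log\nu^*(G^*)$.
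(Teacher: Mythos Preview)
Your argument is correct. The paper leaves this corollary unproved, so there is no explicit proof to compare against; the implicit intent is presumably a direct appeal to the preceding theorem, but as you observe, applying that theorem straight to $(\Pi^*,G^*)$ is not immediate: the $(\cdot)^*$-transformation promotes the EDB predicates of $\Pi$ to IDB predicates, and since clause bodies of $\Pi$ may contain unboundedly many atoms, the original weight function $\nu$ does not extend in any obvious way. Your remedy --- reopening the theorem's proof and inserting the $(\cdot)^*$-step \emph{between} the skinnification (Lemma~\ref{thm:NDLToSkinny}) and the NAuxPDA evaluation (Lemma~\ref{prop:SkinnyNDLEvaluation}) --- is exactly the right fix: starring a skinny program keeps it skinny (the new clauses have a single body atom), raises the depth by one, and keeps the width bounded by $\max\{\wid,2\}$, so Lemma~\ref{prop:SkinnyNDLEvaluation} yields an NAuxPDA running in logarithmic space and polynomial time. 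Your alternative at the end, equipping the skinny $((\Pi')^*,G^*)$ with the weight function $\nu^*(S^*)=2^{d}$ (legitimate precisely because each clause body has at most two atoms) and then invoking the theorem, is equally valid and perhaps closer in spirit to how the paper frames the corollary.

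Two minor remarks. First, your equivalence of $(\Pi^*,G^*)$ and $((\Pi')^*,G^*)$ actually holds over \emph{all} ABoxes, not just those consistent with $\T$: since $(\Pi,G)$ and $(\Pi',G)$ are equivalent NDL queries, and since evaluating $(\Pi^*,G^*)$ over $\A$ amounts to evaluating $(\Pi,G)$ over the H-completion of $\A$ (and likewise for $(\Pi')^*$), the two starred programs agree everywhere. Second, the one claim that could use more care is the ``deterministic logspace preprocessing'' that produces $(\Pi')^*$: the skinnification of Lemma~\ref{thm:NDLToSkinny} builds Huffman trees, and it is not obvious this is a logspace operation as stated. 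However, the very same issue is already implicit in the paper's own proof of the theorem, so this is not a defect of your argument relative to the paper.
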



\section{Bounded Treewidth CQs and Bounded-Depth TBoxes}\label{sec:boundedtw}

With every CQ $\q$, we associate its \emph{Gaifman graph} $\gfmn$ whose vertices are the variables of $\q$ and edges are the  pairs $\{u,v\}$ such that $P(u,v)\in\q$, for some $P$. We call $\q$ \emph{tree-shaped} if $\gfmn$ is a tree; $\q$ is \emph{connected} if the graph $\gfmn$ is connected.
A \emph{tree decomposition} of an undirected graph $\gfmn=(V,E)$ is a pair $(T,\lambda)$, where $T$ is an (undirected) tree and $\lambda$ a function from the set of nodes of $T$ to $2^V$ such that 
the following conditions hold:
\begin{itemize}
\item[--] for every $v \in V$, there exists a node $\nd$ with $v \in \lambda(\nd)$;

\item[--] for every $e \in E$, there exists a node $\nd$ with $e \subseteq \lambda(\nd)$;

\item[--] for every $v \in V$, the nodes $\{\nd\mid v \in \lambda(\nd)\} $ induce a connected subtree of~$T$.
\end{itemize}
We call the set $\lambda(\nd) \subseteq V$ a \emph{bag for} $\nd$. The \emph{width} of $(T, \lambda)$ is $\max_{\nd\in T} |\lambda(\nd)| - 1$. The \emph{treewidth of a graph} $\gfmn$ is the minimum width over all tree decompositions of $\gfmn$. The \emph{treewidth of a CQ} is the treewidth of its Gaifman graph.

\begin{example}\label{ex:rewriting:1}
Consider CQ $\q(x_0, x_7)$ depicted below (black nodes are answer variables):\\
\centerline{\begin{tikzpicture}[>=latex]\footnotesize
\node[bpoint,label=below:{$x_0$}] (v0) at (0,0) {};
\node[wpoint,label=below:{$x_1$}] (v1) at (1.5,0) {};
\node[wpoint,label=below:{$x_2$}] (v2) at (3,0) {};
\node[wpoint,label=below:{$x_3$}] (v3) at (4.5,0) {};
\node[wpoint,label=below:{$x_4$}] (v4) at (6,0) {};
\node[wpoint,label=below:{$x_5$}] (v5) at (7.5,0) {};
\node[wpoint,label=below:{$x_6$}] (v6) at (9,0) {};
\node[bpoint,label=below:{$x_7$}] (v7) at (10.5,0) {};
\begin{scope}[thick,shorten >= 2pt, shorten <= 2pt]\scriptsize
\draw[->] (v0) to node[above] {$R$} (v1);
\draw[->] (v1)to node[above] {$S$}  (v2);
\draw[->] (v2) to node[above] {$R$}  (v3);
\draw[->] (v3) to node[above] {$R$}  (v4);
\draw[->] (v4) to node[above] {$S$}  (v5);
\draw[->] (v5) to node[above] {$R$}  (v6);
\draw[->] (v6) to node[above] {$R$}  (v7);
\end{scope}
\end{tikzpicture}}\\
Its natural tree decomposition of treewidth 1 is based on the the chain  $T$ of 7~vertices, which are represented as bags as follows:\\[2pt]
\centerline{\begin{tikzpicture}[>=latex,yscale=0.8]\scriptsize
\draw[rounded corners=3mm,fill=gray!7] (-0.7,-0.6) rectangle (9.7,1.6);
\foreach \x in {0,1.5,3,4.5,6,7.5,9} {
\draw[fill=gray!20,thin] (\x,0.5) ellipse (0.4 and 1);
}
\foreach \x in {0,1.5,3,4.5,6,7.5} {
\draw[thick] (\x+0.5,0.5) -- +(0.5,0);
}
\node[bpoint,label=below:{$x_0$}] (v0) at (0,0) {};
\node[wpoint,label=above:{$x_1$}] (v1p) at (0,1) {};
\node[wpoint,label=below:{$x_1$}] (v1) at (1.5,0) {};
\node[wpoint,label=above:{$x_2$}] (v2p) at (1.5,1) {};
\node[wpoint,label=below:{$x_2$}] (v2) at (3,0) {};
\node[wpoint,label=above:{$x_3$}] (v3p) at (3,1) {};
\node[wpoint,label=below:{$x_3$}] (v3) at (4.5,0) {};
\node[wpoint,label=above:{$x_4$}] (v4p) at (4.5,1) {};
\node[wpoint,label=below:{$x_4$}] (v4) at (6,0) {};
\node[wpoint,label=above:{$x_5$}] (v5p) at (6,1) {};
\node[wpoint,label=below:{$x_5$}] (v5) at (7.5,0) {};
\node[wpoint,label=above:{$x_6$}] (v6p) at (7.5,1) {};
\node[wpoint,label=below:{$x_6$}] (v6) at (9,0) {};
\node[bpoint,label=above:{$x_7$}] (v7p) at (9,1) {};
\begin{scope}[thick,shorten >= 2pt, shorten <= 2pt]\scriptsize
\draw[->] (v0) to node[left] {$R$} (v1p);
\draw[->] (v1)to node[left] {$S$}  (v2p);
\draw[->] (v2) to node[left] {$R$}  (v3p);
\draw[->] (v3) to node[left] {$R$}  (v4p);
\draw[->] (v4) to node[left] {$S$}  (v5p);
\draw[->] (v5) to node[left] {$R$}  (v6p);
\draw[->] (v6) to node[left] {$R$}  (v7p);
\end{scope}
\end{tikzpicture}}
\end{example}

Fix a connected CQ $\q(\vec{x})$ and a tree decomposition $(T, \lambda)$ of its Gaifman graph $\gfmn = (V,E)$. 
Let $D$ be a subtree of $T$. The \emph{size} of $D$ is the number of nodes in it.
We call a node $\nd$ of $D$ \emph{boundary} if $T$ has an edge $\{\nd,\nd'\}$ with $\nd'\notin D$, and let the \emph{degree} $\degree(D)$ of $D$ be the number of its boundary nodes. Note that $T$ itself  is the only subtree of $T$ of degree  $0$.  We say that a node $\nd$ \emph{splits} $D$ into subtrees $D_1,\dots,D_k$ if the $D_i$ partition $D$ without $\nd$: each node of $D$ except $\nd$ belongs to exactly one $D_i$.
\begin{lemma}[\cite{DBLP:conf/lics/BienvenuKP15}]\label{l:6.8}
Let $D$ be a subtree of $T$ of size $m > 1$.\\ 
If $\degree(D) =2$, then there is a node $\nd$ splitting $D$ into subtrees of size $\leq m/2 $ and degree~$\leq 2$ and, possibly, one subtree of size $<m-1$ and degree~$1$.\\ 
If $\degree(D) \leq 1$, then there is $\nd$ splitting $D$ into subtrees of size $\leq m/2 $ and degree $\leq 2$.
\end{lemma}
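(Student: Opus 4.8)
\medskip\noindent\textbf{Proof plan.} The plan is to reduce both cases to the classical tree \emph{centroid} --- the fact that any tree with $m$ nodes has a node $\nd$ such that every component of it minus $\nd$ has size $\le m/2$ (if some component $W$ had $|W| > m/2$, moving $\nd$ to its neighbour inside $W$ strictly decreases the size of the largest component, so a node minimising the largest-component size works). I will also rely on an elementary observation about a component $D'$ of $D$ minus a node $\nd$: writing $r$ for the unique node of $D'$ adjacent to $\nd$ in $D$, a node $s \in D'$ is a boundary node of $D'$ (with respect to $T$) iff $s \in \partial D$ or $s = r$; hence $\degree(D') \le |\partial D \cap D'| + 1$, and, as $D'$ ranges over the components of $D$ minus $\nd$, the sets $\partial D \cap D'$ partition $\partial D \setminus \{\nd\}$.

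For $\degree(D) \le 1$ I would simply take $\nd$ to be a centroid of $D$: each component then has size $\le m/2$ and contains at most one node of $\partial D$, so, by the observation above, has degree $\le 2$.

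For $\degree(D) = 2$, let $\nd_0, \dots, \nd_\ell$ (with $\ell \ge 1$) be the path in $D$ joining its two boundary nodes. Deleting $\nd_j$ splits $D$ into the component $A_j$ containing $\nd_0$, the component $B_j$ containing $\nd_\ell$, and a (possibly empty) union $C_j$ of further components; each component inside $C_j$ contains neither $\nd_0$ nor $\nd_\ell$, hence has degree exactly $1$, whereas $A_j$ and $B_j$ contain exactly one boundary node of $D$ and so have degree $\le 2$. I would then record the identities $|A_{j+1}| = |A_j| + |C_j| + 1$ and $|B_j| = m - |A_{j+1}|$ (with the conventions $A_0 = B_\ell = \emptyset$ and $|A_{\ell+1}| = m$), so that $|A_0|, \dots, |A_\ell|$ strictly increases from $0$ while $|B_0|, \dots, |B_\ell|$ strictly decreases to $0$; choosing $j^*$ to be the largest index with $|A_{j^*}| \le m/2$ gives $|A_{j^*}| \le m/2 \le |A_{j^*+1}|$ and hence $|B_{j^*}| = m - |A_{j^*+1}| \le m/2$. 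Deleting $\nd_{j^*}$ then produces $A_{j^*}$ and $B_{j^*}$ of size $\le m/2$ and degree $\le 2$, together with degree-$1$ components from $C_{j^*}$; at most one component of $C_{j^*}$ can have size $> m/2$ (two of them would sum to more than $m$), and when one does, the remaining components have total size $\ge 1$, which forces its size below $m - 1$ --- exactly the single exceptional subtree permitted by the statement.

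The step I expect to require the most care is the boundary behaviour of $j^*$, namely $j^* = 0$ or $j^* = \ell$, where $\nd_{j^*}$ is itself a boundary node of $D$ and one of $A_{j^*}, B_{j^*}$ is empty: here the bound $< m - 1$ on the possibly-large leftover component has to be derived directly (it comes out at most $m - 2$), and the small degenerate instances such as $m = 2$ need a separate sanity check. Everything else is routine bookkeeping with the partition of $D$ minus $\nd_{j^*}$ and the centroid inequality.
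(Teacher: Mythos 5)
Your proposal is correct. Note that this paper does not actually prove Lemma~\ref{l:6.8} --- it is imported from Bienvenu, Kikot and Podolskii (LICS 2015) --- so there is no in-paper proof to compare against; judged on its own, your argument is a complete and correct proof. The degree-$\le 1$ case via the Jordan centroid, combined with your observation that $\degree(D')\le|\{\text{boundary nodes of }D\}\cap D'|+1$ for each component $D'$ of $D$ minus $\nd$, gives exactly what is claimed. In the degree-$2$ case the bookkeeping $|A_{j+1}|=|A_j|+|C_j|+1$ and $|B_j|=m-|A_{j+1}|$ along the path $\nd_0,\dots,\nd_\ell$ between the two boundary nodes is valid and produces an index $j^*$ with $|A_{j^*}|,|B_{j^*}|\le m/2$; the remaining components attach only to $\nd_{j^*}$ and contain no boundary node of $D$, so each has degree exactly $1$, at most one can exceed $m/2$, and since the two boundary nodes are distinct (hence $\ell\ge 1$) at least one of $A_{j^*},B_{j^*}$ is nonempty (it contains $\nd_0$ or $\nd_\ell$), which yields size $\le m-2<m-1$ for the exceptional component, including in the boundary cases $j^*\in\{0,\ell\}$ that you flagged. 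Two cosmetic remarks: you reuse the symbol $\partial D$ for the set of boundary nodes of $D$, whereas in Section~\ref{sec:boundedtw} of the paper $\partial D$ denotes a set of variables (intersections of bags), so a different symbol would avoid confusion; and it is worth stating explicitly that $\ell\ge 1$ is what rules out the degenerate situation in which both $A_{j^*}$ and $B_{j^*}$ are empty and the leftover component could be all of $D$ minus $\nd_{j^*}$.
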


In Example~\ref{ex:rewriting:1}, $t$ splits $T$ into $T_1$ and $T_2$ depicted below:\\[2pt]
\centerline{\begin{tikzpicture}[>=latex,yscale=0.8]\scriptsize
\draw[rounded corners=3mm,fill=gray!7] (-0.9,-0.7) rectangle (9.9,1.7);
\draw[rounded corners=3mm,fill=gray!50] (-0.7,-0.6) rectangle (3.7,1.6);
\node at (2.3,1.3) {\normalsize $T_1$};
\draw[rounded corners=3mm,fill=gray!50] (5.3,-0.6) rectangle (9.7,1.6);
\node at (6.8,1.3) {\normalsize $T_2$};
\node at (4.85,1.5) {\normalsize $t$};
\foreach \x in {0,1.5,3,4.5,6,7.5,9} {
\draw[fill=gray!20,thin] (\x,0.5) ellipse (0.4 and 1);
}
\foreach \x in {0,1.5,3,4.5,6,7.5} {
\draw[thick] (\x+0.5,0.5) -- +(0.5,0);
}
\node[bpoint,label=below:{$x_0$}] (v0) at (0,0) {};
\node[wpoint,label=above:{$x_1$}] (v1p) at (0,1) {};
\node[wpoint,label=below:{$x_1$}] (v1) at (1.5,0) {};
\node[wpoint,label=above:{$x_2$}] (v2p) at (1.5,1) {};
\node[wpoint,label=below:{$x_2$}] (v2) at (3,0) {};
\node[wpoint,label=above:{$x_3$}] (v3p) at (3,1) {};
\node[wpoint,label=below:{$x_3$}] (v3) at (4.5,0) {};
\node[wpoint,label=above:{$x_4$}] (v4p) at (4.5,1) {};
\node[wpoint,label=below:{$x_4$}] (v4) at (6,0) {};
\node[wpoint,label=above:{$x_5$}] (v5p) at (6,1) {};
\node[wpoint,label=below:{$x_5$}] (v5) at (7.5,0) {};
\node[wpoint,label=above:{$x_6$}] (v6p) at (7.5,1) {};
\node[wpoint,label=below:{$x_6$}] (v6) at (9,0) {};
\node[bpoint,label=above:{$x_7$}] (v7p) at (9,1) {};
\begin{scope}[thick,shorten >= 2pt, shorten <= 2pt]\scriptsize
\draw[->] (v0) to node[left] {$R$} (v1p);
\draw[->] (v1)to node[left] {$S$}  (v2p);
\draw[->] (v2) to node[left] {$R$}  (v3p);
\draw[->] (v3) to node[left] {$R$}  (v4p);
\draw[->] (v4) to node[left] {$S$}  (v5p);
\draw[->] (v5) to node[left] {$R$}  (v6p);
\draw[->] (v6) to node[left] {$R$}  (v7p);
\end{scope}
\end{tikzpicture}}

\smallskip

We define recursively a set $\R$ of subtrees of $T$, a binary relation $\prec$ on $\R$  and a function $\sigma$ on $\R$ indicating the splitting node. We begin by adding $T$ to $\R$. Take $D\in \R$ that has not been split yet. If $D$  is of size~1 then let $\sigma(D)$ be the only node of $D$. Otherwise, by Lemma~\ref{l:6.8}, we find a node $\nd$ in $D$ that splits it into $D_1,\dots,D_k$. We set $\sigma(D) = t$ and, for each $1\leq i\leq k$, add  $D_i$ to $\R$ and set $D_i \prec D$; then, we apply the procedure recursively to each of $D_1,\dots,D_k$.
In Example~\ref{ex:rewriting:1} with $\nd$ splitting $T$, we have $\sigma(T) = t$, $T_1 \prec T$ and 
$T_2 \prec T$.

For each $D\in\R$, we recursively define a set of atoms $\q_D$ by taking 
$$
\q_D \ \ = \ \ \bigl\{S(\vec{v}) \in \q \mid \vec{v} \subseteq \lambda(\sigma(D)) \bigr\} \ \cup \ \bigcup\nolimits_{D' \prec D} \q_{D'}.
$$ 
By the definition of tree decomposition, $\q_T = \q$. Denote by $\vec{x}_D$ the subset of  $\avars(\q)$ that occur in $\q_D$. In our running example, $\vec{x}_{T} = \{x_0, x_7\}$,  $\vec{x}_{T_1} = \{x_0\}$ 
and $\vec{x}_{T_2} = \{x_7\}$.
Denote by $\dD$  the union of all $\lambda(\nd) \cap\lambda(\nd')$ for
a boundary node $\nd$ of $D$ and its unique neighbour $\nd'$ in $T$ \emph{outside} $D$.  If $D$ is a singleton $\{d\}$, then $\partial D$ consists of those variables in $\lambda(d)$ that occur in at least one other bag. 
In our example, $\partial T=\emptyset$, $\partial T_1 =\{x_3\}$ and $\partial T_2 =\{x_4\}$.

Let $\T$ be a TBox of finite depth $k$. A \emph{type} is a partial map $\vec{w}$ from 
$V$ to $\twords$; its domain is denoted by $\dom(\vec{w})$. By  $\vec{\varepsilon}$ we denote the unique partial type with \mbox{$\dom(\vec{\varepsilon}) = \emptyset$}.
We use types to represent how variables are mapped into $\can$, with $\vec{w}(u)=w$ indicating that $u$ is mapped to an element of the form $a w$ (for some $a \in \ind(\A)$), and 
with $\vec{w}(u)=\varepsilon$ that $u$ is mapped to an ABox individual. 
We say that a type $\vec{w}$ is \emph{compatible} with
a bag $\nd$ if, for all $u,v\in\lambda(\nd)\cap\dom(\vec{w})$, we have
\begin{itemize}\itemsep=0pt
\item  if $v \in \avars(\q)$, then $\vec{w}(v) = \varepsilon$; 
\item if $A(v) \in \q$, then either $\vec{w}(v)=\varepsilon$ or
$\vec{w}(v)= w R \text{ with } \exists R^- \sqsubseteq_\T A$;
\item if $R(v, u)\in \q$, then
$\vec{w}(v) = \vec{w}(u) =\varepsilon$, or \mbox{$\vec{w}(u) = \vec{w}(v) R'$} with $R' \sqsubseteq_\T R$, 
or $\vec{w}(v) = \vec{w}(u) R'$ with  $R' \sqsubseteq_\T R^-$.
\end{itemize}

In the sequel, we abuse notation and use sets of variables in place of sequences assuming that they are ordered in some (fixed) way. For example, we use $\vec{x}_D$ for a tuple of variables in the set $\vec{x}_D$ (ordered in some way). Also, given a tuple $\vec{a}$ in $\ind(\A)$ of length $|\vec{x}_D|$   and $x\in\vec{x}_D$, we write $\vec{a}(x)$ to refer to the element of $\vec{a}$ that corresponds to $x$ (that is, to the component of the tuple with the same index).

Let $\Pi_\omq$ be an NDL program that---for any $D\in\R$, any types $\tpd$ and $\tpr$ for which $\dom (\tpd)=\dD$, $\dom(\tpr) = \lambda(\sigma(D))$, $\tpr$ is compatible with $\sigma(D)$ and agrees with $\tpd$ on their common domain---contains the clause
\begin{equation}
\rpred^{\tpd}_D(\dD, \vec{x}_D) \ \ \leftarrow \ \ \mathsf{At}^{\tpr} \ \land 
\bigwedge\nolimits_{D' \prec D} \rpred^{(\tpr\cup\tpd) \restr\dDp}_{D'}(\dDp,\vec{x}_{D'}),
\end{equation}
where $\vec{x}_D$ are the parameters of predicate $\rpred^{\tpd}_D$, 
$(\tpr\cup\tpd) \restr\dDp$ is
the restriction\footnote{
By construction,  
$\dom(\tpr\cup\tpd)$  covers $\dDp$, and so  
the domain of
$(\tpr\cup\tpd) \restr \dDp$ is $\dDp$. 
} of the union $\tpr\cup\tpd$ of $\tpr$ and $\tpd$ to $\dDp$,
and $\mathsf{At}^{\tpr}$ is defined as follows:
\begin{equation}\label{eq:at}
\mathsf{At}^{\tpr} \ \ \ =\ \ \bigwedge_{\substack{A(u)\in\q\\\tpr(u) = \varepsilon}} \hspace*{-0.5em}A(u) \ \ \ \land 
\bigwedge_{\substack{R(u,v)\in \q\\ \tpr(u) = \tpr(v) = \varepsilon}} \hspace*{-1.5em}R(u, v) \ \ \ \land
\bigwedge_{\substack{R(u,v)\in \q\\ \tpr(u) \ne\varepsilon \text{ or } \tpr(v) \ne \varepsilon}} \hspace*{-2em} (u =  v) \ \ \ \land
\bigwedge_{\substack{\tpr(u) = Sw'\\\text{ for some } w'}} \hspace*{-1em}A_S(u).
\end{equation}
The first two conjunctions in $\mathsf{At}^{\tpr}$ ensure that atoms all of whose variables are assigned $\varepsilon$
are present in the ABox. The third conjunction ensures that if one of the variables in a role atom is not mapped to $\varepsilon$,
then the images of the variables share the same initial individual. 
Finally, atoms in the final conjunction ensure that if a variable is to be mapped to $aSw'$,
then the individual $a$ satisfies $\exists S$ (so $aSw'$ is part of the domain of $\can$).

\begin{example}\label{ex:rewriting:2}
Now we fix an ontology $\T$ with the following axioms:
\begin{align*}
 A & \equiv  \exists P, &  P & \sqsubseteq S, & P & \sqsubseteq R^-, &\qquad
B & \equiv \exists Q, & Q & \sqsubseteq R, & Q & \sqsubseteq S^-. 
\end{align*}
Since $\lambda(\nd) = \{ x_3, x_4\}$, there are only three types compatible with $\nd$:
\begin{equation*}
\tpr_1\colon x_3\mapsto \varepsilon, x_4\mapsto \varepsilon,\qquad \tpr_2\colon x_3 \mapsto P, x_4\mapsto \varepsilon\quad \text{ and }\quad 
\tpr_3\colon x_3 \mapsto \varepsilon, x_4\mapsto Q. 
\end{equation*}
So,
$\mathsf{At}^{\tpr_1}  = R(x_3, x_4)$,
$\mathsf{At}^{\tpr_2}  = A(x_3) \land (x_3 = x_4)$, 
$\mathsf{At}^{\tpr_3}  = B(x_4) \land (x_3 = x_4)$.
Thus, predicate $\rpred^{\vec{\varepsilon}}_{T}$ is defined by the following clauses, for $\tpr_1$, $\tpr_2$ and $\tpr_3$, respectively:
\begin{align*}
\rpred^{\vec{\varepsilon}}_{T} (x_0,x_7) &\leftarrow \rpred^{x_3\mapsto\varepsilon}_{T_1} (x_3, x_0)\land R(x_3, x_4)\land \rpred^{x_4\mapsto\varepsilon}_{T_2}(x_4, x_7),\\
\rpred^{\vec{\varepsilon}}_{T} (x_0,x_7) &\leftarrow \rpred^{x_3\mapsto P}_{T_1}  (x_3,x_0)\land A(x_3)\land (x_3 =  x_4)\land \rpred^{x_4\mapsto\varepsilon}_{T_2}(x_4, x_7),\\
\rpred^{\vec{\varepsilon}}_{T} (x_0,x_7) &\leftarrow \rpred^{x_3\mapsto\varepsilon}_{T_1} (x_3,x_0)\land B(x_4)\land (x_3 =  x_4)\land\rpred^{x_4\mapsto Q}_{T_2} (x_4, x_7).
\end{align*}
\end{example}

By induction on $\prec$ on $\R$, 
we show that $(\Pi_\omq, \rpred^{\vec{\varepsilon}}_T)$  is a rewriting of $\omq(\vec{x})$.
\begin{lemma}\label{prop:logdepth:rewriting}
For any ABox $\A$, any $D \in \R$, any type $\tpd$ with $\dom(\tpd) = \dD$, any $\vec{b} \in \ind(\A)^{|\dD|}$ and $\vec{a}\in \ind(\A)^{|\vec{x}_D|}$, we have $\Pi_\omq,\A \models \rpred^{\tpd}_D(\vec{b}, \vec{a})$  iff there is a homomorphism $h\colon \q_D \to \can$ such that  
\begin{equation*}
h(x) = \vec{a}(x), \text{ for } x\in \vec{x}_D, \quad\text{ and }\quad 
h(v) = \vec{b}(v) \tpd(v), \text{ for  } v\in \dD.
\end{equation*}
\end{lemma}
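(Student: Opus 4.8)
The plan is to prove the biconditional by induction on the well-founded relation $\prec$ on $\R$, treating the two directions separately; the base case (when $D$ is a singleton, so the defining clause of $\rpred^{\tpd}_D$ has no recursive conjuncts) is just the degenerate case of the inductive step. Throughout I rely on the following standard facts about the canonical model, for which it is convenient to assume, as befits $\Pi_\omq$, that $\A$ is H-complete and consistent with $\T$: the restriction of $\can$ to $\ind(\A)$ is exactly $\A$; for an element $\xi = a R_1 \cdots R_n \in \Delta^{\can}$ with $n \ge 1$, we have $\xi \in A^{\can}$ iff $\exists R_n^- \sqsubseteq_\T A$, and $(a R_1\cdots R_{n-1}, \xi) \in R_n^{\can}$; and $c\,w \in \Delta^{\can}$ for $c \in \ind(\A)$ and $w = S w' \in \twords$ iff $A_S(c) \in \A$.

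For the \emph{``if''} direction, suppose $h\colon \q_D \to \can$ is as in the statement, write $\nd = \sigma(D)$, and for each $v \in \lambda(\nd)$ decompose $h(v) = c_v\,w_v$ with $c_v \in \ind(\A)$ and $w_v \in \twords$, setting $\tpr(v) = w_v$. Using that $h$ is a homomorphism into $\can$ together with the facts above, one checks that $\tpr$ is compatible with $\nd$, and, since $h(v) = \vec b(v)\,\tpd(v)$ on $\dD$, that $\tpr$ agrees with $\tpd$ on $\lambda(\nd) \cap \dD$; hence $\Pi_\omq$ contains the corresponding instance of the defining clause. Its ABox conjunct $\mathsf{At}^{\tpr}$ is satisfied under $v \mapsto c_v$ (H-completeness handles the $A(\cdot)$, $R(\cdot,\cdot)$ and $A_S(\cdot)$ atoms, and the equalities $u=v$ hold because, by compatibility, a role atom with a non-$\varepsilon$ endpoint forces a common initial individual). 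For each $D' \prec D$, the restriction $h\restr\q_{D'}$ is a homomorphism into $\can$ meeting the hypotheses of the lemma for $D'$ with the type $(\tpr\cup\tpd)\restr\dDp$ — every $v \in \dDp$ lies in $\lambda(\nd)$ or in $\dD$, so the word attached to $h(v)$ is $\tpr(v)$ or $\tpd(v)$ accordingly — and the induction hypothesis yields the corresponding recursive atom. As all body atoms of the clause are derivable, so is $\rpred^{\tpd}_D(\vec b, \vec a)$.

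For the \emph{``only if''} direction, a derivation of $\rpred^{\tpd}_D(\vec b, \vec a)$ must fire an instance of the defining clause: there are a type $\tpr$ compatible with $\nd = \sigma(D)$ and agreeing with $\tpd$, and a ground substitution $\theta$ of the non-parameter variables with $\theta\restr\dD = \vec b$ and $\vec x_D \mapsto \vec a$, such that $\mathsf{At}^{\tpr}$ holds in $\A$ under $\theta$ and $\Pi_\omq, \A \models \rpred^{(\tpr\cup\tpd)\restr\dDp}_{D'}(\theta(\dDp), \theta(\vec x_{D'}))$ for every $D' \prec D$. I assemble $h$ by putting $h(v) = \theta(v)\,\tpr(v)$ for $v \in \lambda(\nd)$ — this is an element of $\Delta^{\can}$ by the $A_S(\cdot)$ atoms of $\mathsf{At}^{\tpr}$ and H-completeness — and, on the remaining variables, taking the homomorphisms $h_{D'}\colon \q_{D'} \to \can$ given by the induction hypothesis. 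These partial maps agree on overlaps because, by the connectedness condition of the tree decomposition, a variable shared between $\lambda(\nd)$ and some $\q_{D'}$, or between two of the $\q_{D'}$, lies in the relevant separator $\dDp$, where all the maps return $\theta(v)$ concatenated with the word dictated by $\tpr\cup\tpd$. That $h$ is a homomorphism $\q_D \to \can$ is checked atom by atom: an atom of $\q_D$ either comes from some $\q_{D'}$, where $h_{D'}$ handles it, or has all its variables in $\lambda(\nd)$, where the three compatibility cases for $\tpr$, together with the $A(\cdot)$, $R(\cdot,\cdot)$, $(u=v)$ and $A_S(\cdot)$ atoms of $\mathsf{At}^{\tpr}$ and the structure of $\can$, force the atom to hold. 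Finally, $h(x) = \vec a(x)$ on $\vec x_D$ and $h(v) = \vec b(v)\,\tpd(v)$ on $\dD$ follow from $\theta\restr\dD = \vec b$, the agreement of $\tpr$ with $\tpd$, and the induction hypothesis on those $\dD$-variables that fall inside a $D' \prec D$.

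I expect the main obstacle to be bookkeeping rather than any single hard step: one must keep straight which variables of $\q$ occur in $\q_D$, in $\lambda(\sigma(D))$, and in each separator $\dDp$, and verify that the type read off a homomorphism (respectively, the homomorphism glued from a derivation) matches the restriction $(\tpr\cup\tpd)\restr\dDp$ plugged into the defining clauses of the children. The connectedness condition of the tree decomposition is precisely what makes the gluing in the ``only if'' direction well defined, and H-completeness of $\A$ is what makes $\mathsf{At}^{\tpr}$ faithfully record which atoms hold in the ABox part of $\can$.
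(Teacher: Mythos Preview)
Your proposal is correct and follows essentially the same approach as the paper: induction on $\prec$, extracting the type $\tpr$ from a given homomorphism in one direction, and gluing the homomorphisms $h_{D'}$ obtained inductively (with well-definedness guaranteed by the connectedness condition of the tree decomposition) in the other. The paper's proof is organised identically, and your observation that $\dDp \subseteq \lambda(\sigma(D)) \cup \dD$ and that H-completeness of $\A$ is tacitly needed are both on point.
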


Fix now $k$ and $t$, and consider the class of OMQs $\omq(\vec{x}) = (\T,\q(\vec{x}))$ with $\T$ of depth~$\leq k$ and $\q$ of treewidth  $\leq t$. Let $T$ be a tree decomposition of $\q$ of treewidth~$\leq t$. We take the following weight function:  $\nu(\rpred^{\tpd}_D) = |D|$. Clearly, $\nu(\rpred^{\vec{\varepsilon}}_T) \le |\omq|$.  By Lemma~\ref{l:6.8}, 
$\dep(\Pi_\omq,\rpred^{\vec{\varepsilon}}_T) \le 2 \log |T| = 2 \log \nu(\rpred^{\vec{\varepsilon}}_T) \le 2 \log |\omq|$.
Since $|\R| \le |T|^2$ and
there are at most $|\T|^{2tk}$ options for $\tpd$, there are polynomially
many predicates $\rpred^{\tpd}_D$, and so $\Pi_\omq$ is of polynomial size. Thus, by Corollary~\ref{cor:weight}, the obtained NDL-rewriting over arbitrary ABoxes can be evaluated in \LOGCFL{}. Finally, we note that a tree decomposition of treewidth $\leq t$ can be computed using an $\textsf{L}^{\LOGCFL}$-transducer~\cite{DBLP:conf/icalp/GottlobLS99}, and so the NDL-rewriting can also be constructed by an $\textsf{L}^{\LOGCFL}$-transducer.

\section{Bounded-Leaf CQs and Bounded-Depth TBoxes}\label{sec:5}

We next consider OMQs with tree-shaped CQs
in which both the depth of the ontology and the number of leaves in the CQ are bounded. 
Let $\T$ be a TBox of finite depth $k$, 
and let $\q(\vec{x})$ be a tree-shaped CQ with at most $\ell$ leaves.
Fix one of the variables of $\q$ as root, and let $M$ be the maximal distance to a leaf from the root. 
For $n \leq M$, let 
$\vec{z}^n$ denote the set of all variables of $\q$ at distance $n$
from the root; clearly, $|\vec{z}^n| \le \ell$. 
We call the $\vec{z}^n$ \emph{slices} of $\q$ and observe that 
they satisfy the following: for every 
$R(u,v) \in \q$ with $u \neq v$, there exists $0 \leq n < M$ such that either $u\in \vec{z}^n$ and $v\in \vec{z}^{n+1}$ or $u\in\vec{z}^{n+1}$ and $v\in \vec{z}^n$.
For $0 \leq n \leq M$, we denote by $\q_n(\vec{z}^n_\exs, \vec{x}^n)$ the query
consisting of all atoms $S(\vec{u})$ of $\q$ such that $\vec{u} \subseteq \bigcup_{n \leq m \leq M} \vec{z}^m$, 
where 
$\vec{x}^n=\vars(\q_n) \cap \vec{x}$ and
$\vec{z}^n_\exs = \vec{z}^n \setminus \vec{x} $. 

By \emph{type of a slice} $\vec{z}^n$, we mean 
a total map $\vec{w}$ from $\vec{z}^n$ to $\twords$. 
Analogously to Section~\ref{sec:boundedtw}, 
we define what it means for a type (or pair of types) to be compatible 
with a slice (pair of adjacent slices). 
We call $\vec{w}$ \emph{locally compatible} with $\vec{z}^{n}$ if for every $z \in \vec{z}^n$:
\begin{itemize}
\item  if $z \in \avars(\q)$, then $\vec{w}(z) = \varepsilon$; 
\item if $A(z) \in \q$, then either $\vec{w}(z)= \varepsilon$ or
$\vec{w}(z)= w R$ with $\exists R^-  \sqsubseteq_\T A$;
\item if $R(z, z) \in \q$, then $\vec{w}(z) = \varepsilon$. 
\end{itemize}
If $\tpd, \tpr$ are types for $\vec{z}^{n}$ and $\vec{z}^{n+1}$ respectively, 
then we call $(\tpd, \tpr)$ \emph{compatible} with 
$(\vec{z}^{n}, \vec{z}^{n+1})$ if 
 $\vec{w}$ is locally compatible with $\vec{z}^{n}$, 
$\tpr$ is locally compatible with $\vec{z}^{n+1}$, and
for every atom 
$R(z^n, z^{n+1}) \in \q$, one of the following holds: 
(\emph{i}) $
\tpd(z^{n}) =\tpr(z^{n+1})
= \varepsilon$, 
(\emph{ii}) $\tpr(z^{n+1})= \tpd(z^{n}) R'$ with $ R' \sqsubseteq_\T R$, or
(\emph{iii}) $\tpd(z^{n}) = \tpr(z^{n+1}) R' $ with $R' \sqsubseteq_\T R^-$.

Consider the \NDL\ program $\Pi_\omq'$ defined as follows. 
For every $0 \leq n < M$
and every pair of types $(\vec{w}, \tpr)$ that is compatible with $(\vec{z}^n, \vec{z}^{n+1})$, we include the clause: 
\begin{equation*}
P^{\vec{w}}_{n}(\vec{z}^{n}_\exs, \vec{x}^n) \leftarrow 
\mathsf{At}^{\vec{w}\cup\tpr}(\vec{z}^n,\vec{z}^{n+1})  \land P^{\tpr}_{n+1}(\vec{z}^{n+1}_\exs, \vec{x}^{n+1}),
\end{equation*}
where $\vec{x}^n$ are the parameters of 
$P^{\vec{w}}_{n}$ and $\mathsf{At}^{\vec{w}\cup\tpr}(\vec{z}^n,\vec{z}^{n+1})$ 
is the conjunction of atoms~\eqref{eq:at}, as defined in Section~\ref{sec:boundedtw}, for the union $\vec{w}\cup\tpr$ of types $\vec{w}$ and $\tpr$.

For every type $\vec{w}$ locally compatible with $\vec{z}^M$, we include the clause: 
\begin{equation*}
P^{\vec{w}}_{M}(\vec{z}^{M}_\exs, \vec{x}^M) \leftarrow \mathsf{At}^{\vec{w}}(\vec{z}^M). 
\end{equation*}
(Recall that $\vec{z}^M$ is a disjoint union of $\vec{z}^M_\exs$ and $\vec{x}^M$.) We use $G$, with parameters $\vec{x}$, as the goal predicate and include
 $G(\vec{x}) \leftarrow P^{\vec{w}}_{0}(\vec{z}^{0}, \vec{x})$ 
for every predicate $P^{\vec{w}}_{0}(\vec{z}^{0}, \vec{x}^0) $ occurring in the head of one of the preceding clauses. 

The following lemma (which is proved by induction) is the key step in showing that $(\Pi_\omq',G(\vec{x}))$
is a rewriting of $(\T, \q)$ over H-complete ABoxes: 
\begin{lemma}\label{nl-rewriting-correct}
For any H-complete ABox $\A$, any $0 \leq n \leq M$, any predicate $P^{\vec{w}}_n$, 
any $\vec{b} \in \ind(\A)^{|\vec{z}^n_\exists|}$ and any $\vec{a}\in \ind(\A)^{|\vec{x}^n|}$,
we have  
$\Pi_\omq',\A \models P^{\vec{w}}_n(\vec{b}, \vec{a})$  iff there is a homomorphism 
$h\colon \q_n \to \can$ such that  
\begin{equation}\label{nl-rewriting-eq}
h(x) = \vec{a}(x), \text{ for } x\in \vec{x}^n, \quad\text{ and }\quad 
h(z) = \vec{b}(z) \vec{w}(z), \text{ for  } z\in\vec{z}^n_\exs.
\end{equation}
\end{lemma}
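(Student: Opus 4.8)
The plan is to prove Lemma~\ref{nl-rewriting-correct} by induction on $M-n$, i.e.\ starting from the bottom slice $\vec{z}^M$ and working up towards the root. This mirrors the structure of $\Pi'_\omq$, whose clauses for $P^{\vec{w}}_n$ (with $n<M$) refer only to $P^{\tpr}_{n+1}$, so the dependence relation decreases $M-n$.

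\textbf{Base case ($n=M$).} Here $\q_M$ consists exactly of the atoms of $\q$ all of whose variables lie in $\vec{z}^M$, and $\Pi'_\omq$ has the single clause $P^{\vec{w}}_M(\vec{z}^M_\exs,\vec{x}^M)\leftarrow\mathsf{At}^{\vec{w}}(\vec{z}^M)$ for each locally compatible $\vec{w}$. So $\Pi'_\omq,\A\models P^{\vec{w}}_M(\vec{b},\vec{a})$ iff $\mathsf{At}^{\vec{w}}(\vec{z}^M)$ holds in $\A$ under the assignment $\vec{z}^M_\exs\mapsto\vec{b}$, $\vec{x}^M\mapsto\vec{a}$. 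I would then argue this is equivalent to the existence of the desired homomorphism $h\colon\q_M\to\can$ with $h(z)=\vec{b}(z)\vec{w}(z)$ on $\vec{z}^M_\exs$ and $h(x)=\vec{a}(x)$ on $\vec{x}^M$. For the forward direction, the map $h$ is forced: send each variable $z$ to $\vec{b}(z)\vec{w}(z)$ (or $\vec{a}(z)$ if it is an answer variable, which by local compatibility has $\vec{w}(z)=\varepsilon$); the four conjuncts of $\mathsf{At}^{\vec{w}}$ plus local compatibility guarantee that every atom of $\q_M$ is satisfied in $\can$ — the ABox conjuncts handle atoms mapped entirely into $\ind(\A)$; the $A_S(u)$ conjuncts ensure the words $\vec{b}(u)\vec{w}(u)$ are legitimate elements of $\Delta^{\can}$; and compatibility with the concept/role axioms of $\T$ takes care of atoms touching the anonymous part. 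For the converse, given such an $h$, the structure of $\can$ (elements are words $aR_1\cdots R_m$, and the only atoms holding there are those forced by the word and by $\A$) lets me read off that each conjunct of $\mathsf{At}^{\vec{w}}$ must hold. This is essentially the same bookkeeping as in the proof of Lemma~\ref{prop:logdepth:rewriting}.

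\textbf{Inductive step ($n<M$).} Suppose the claim holds for $n+1$. A derivation of $P^{\vec{w}}_n(\vec{b},\vec{a})$ must use a clause $P^{\vec{w}}_n(\vec{z}^n_\exs,\vec{x}^n)\leftarrow\mathsf{At}^{\vec{w}\cup\tpr}(\vec{z}^n,\vec{z}^{n+1})\land P^{\tpr}_{n+1}(\vec{z}^{n+1}_\exs,\vec{x}^{n+1})$ for some $\tpr$ with $(\vec{w},\tpr)$ compatible with $(\vec{z}^n,\vec{z}^{n+1})$, instantiated by some tuple $\vec{c}$ for $\vec{z}^{n+1}_\exs$ and $\vec{a}'$ for $\vec{x}^{n+1}$ (consistent with $\vec{a}$ on the shared answer variables and with $\vec{b}$/$\vec{c}$ via the $(u=v)$ equalities in $\mathsf{At}$). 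By the induction hypothesis, the $P^{\tpr}_{n+1}$ atom holds iff there is a homomorphism $h'\colon\q_{n+1}\to\can$ realising $\vec{c}$ and $\vec{a}'$ through $\tpr$; and the $\mathsf{At}^{\vec{w}\cup\tpr}$ conjunct, as in the base case, encodes precisely that the atoms of $\q_n$ not already in $\q_{n+1}$ — namely those touching slice $\vec{z}^n$ — can be satisfied when $\vec{z}^n$ is mapped according to $\vec{w}$. I then glue $h'$ with this partial map on $\vec{z}^n$ to get $h\colon\q_n\to\can$; well-definedness on the overlap $\vec{z}^{n+1}$ is exactly the agreement of the instantiating tuple with $\tpr$, and the slice property (every edge of $\q$ goes between consecutive slices, so $\q_n=\q_{n+1}\cup\{$atoms inside $\vec{z}^n$ or across $(\vec{z}^n,\vec{z}^{n+1})\}$) ensures no atom is left unchecked. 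Conversely, given $h\colon\q_n\to\can$ realising $\vec{w}$, I set $\tpr$ to be the type read off from $h$ on $\vec{z}^{n+1}$ (i.e.\ $\tpr(z)$ is the suffix of the word $h(z)$ after the ABox individual), check $(\vec{w},\tpr)$ is compatible, restrict $h$ to $\q_{n+1}$ to invoke the induction hypothesis, and verify $\mathsf{At}^{\vec{w}\cup\tpr}$ holds — which then assembles into a derivation via the corresponding clause. Finally the goal clause $G(\vec{x})\leftarrow P^{\vec{w}}_0(\vec{z}^0,\vec{x})$ and $\q_0=\q$ give the rewriting claim.

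\textbf{Main obstacle.} The routine-looking but genuinely delicate point is the correspondence between a homomorphism $h$ and the slice-by-slice types, together with the role of $(u=v)$ atoms in $\mathsf{At}$: when a variable $v$ in a role atom is mapped to an anonymous element $aR'w'$, the $(u=v)$ conjuncts collapse $u$ and $v$ to the same ABox constant $a$, and one must check this is consistent across all clauses and both directions — in particular that the ``shared initial individual'' is propagated correctly between $P^{\vec{w}}_n$ and $P^{\tpr}_{n+1}$ through the variables in $\vec{z}^{n+1}_\exs$ that appear as non-parameters. One also has to be a little careful that $h'(z^{n+1})=\vec{c}(z^{n+1})\tpr(z^{n+1})$ is compatible with the constraint imposed by the cross-slice atoms of $\mathsf{At}^{\vec{w}\cup\tpr}$, i.e.\ that the two descriptions of the element $h(z^{n+1})$ — one coming from the $n$-side via $\vec{w}$ and the axioms, one from the $(n+1)$-side via $\tpr$ — genuinely coincide; this is where compatibility of the pair $(\vec{w},\tpr)$ is used. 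Modulo this, the argument is a straightforward adaptation of Lemma~\ref{prop:logdepth:rewriting} to the linear (slice) decomposition in place of the tree decomposition.
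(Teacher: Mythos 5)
Your proposal is correct and follows essentially the same route as the paper's proof: induction on the slice index starting from $n=M$ (equivalently, on $M-n$), with the base case handled by unfolding the single clause for $P^{\vec{w}}_M$ and the inductive step gluing the homomorphism obtained for $\q_{n+1}$ with the slice-$n$ assignment, using compatibility of the pair of types for the cross-slice role atoms and H-completeness of $\A$ in the converse directions. The delicate points you flag (propagation of the shared ABox individual via the equality atoms, and reading off the type $\tpr$ from $h$ on $\vec{z}^{n+1}$) are exactly the ones the paper's argument addresses.
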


It should be clear that $\Pi_\omq'$ is a linear NDL program 
of  width at most~$2 \ell$. Moreover, when $\ell$ and $k$ are bounded by fixed constants, 
it takes only logarithmic space to store a type $\vec{w}$, which allows us to show
that 
 $\Pi'_\omq$ can be computed 
by an $\mathsf{L}^\NL$-transducer. We can apply Lemma~\ref{linear-arbitrary} to obtain 
an NDL rewriting for arbitrary ABoxes, 
and then use Theorem~\ref{linear-nl} to conclude that the resulting 
program can be evaluated in \NL.

\section{Bounded-Leaf CQs and Arbitrary TBoxes}\label{sec:boundedleaf}

For OMQs with bounded-leaf CQs and ontologies of unbounded depth,
our rewriting utilises the notion of tree witness~\cite{DBLP:conf/kr/KikotKZ12}.  
Let $\omq(\vec{x}) = (\T,\q(\vec{x}))$ with $\q(\vec{x}) = \exists \vec{y}\, \varphi(\vec{x}, \vec{y})$. 
For a pair $\t = (\tr, \ti)$ of disjoint sets of variables in $\q$, with $\ti\subseteq \vec{y}$
and $\ti \ne\emptyset$, set 
\begin{equation*}
\q_\t \ = \ \bigl\{\, S(\vec{z}) \in \q \mid \vec{z} \subseteq \tr\cup \ti \text{ and } \vec{z}\not\subseteq \tr\,\bigr\}.
\end{equation*}
If $\q_\t$ is a minimal subset of $\q$ for which there is a homomorphism $h \colon \q_\t  \to \C_\T^{\smash{A_R(a)}}$ such that $\tr = h^{-1}(a)$ and $\q_\t$ contains every atom of $\q$ with at least one variable from $\ti$, then we call $\t = (\tr, \ti)$ a \emph{tree witness for $\omq$ generated by $R$}. 
Note that the same tree witness $\t$ can be generated by different roles $R$.

The logarithmic-depth NDL-rewriting for  bounded-leaf queries and ontologies
of unbounded depth is based upon the following observation~\cite{LICS14}. 
\begin{lemma}\label{PrepMiddleVertex}
Every tree $T$ of size $m$ has a node splitting it into subtrees of size  
$\leq\! \lceil m/2 \rceil$.
\end{lemma}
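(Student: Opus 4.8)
The plan is to prove that the required node is the \emph{centroid} of $T$: the node that minimises the size of the largest subtree it splits $T$ into. For a node $\nd$, write $f(\nd)$ for the number of nodes in the largest subtree into which $\nd$ splits $T$ (so $f(\nd)=0$ when $m=1$), and fix a node $\nd$ for which $f(\nd)$ is minimal. It then suffices to show $f(\nd)\le\lceil m/2\rceil$.

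First I would record the elementary counting fact underlying everything: if $\nd$ splits $T$ into $D_1,\dots,D_k$, then the $D_i$ partition the $m-1$ nodes of $T$ other than $\nd$, so $\sum_i |D_i| = m-1$; hence at most one $D_i$ can have more than $m/2$ nodes, since two such subtrees would already contain more than $m > m-1$ nodes in total.

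Next, suppose for contradiction that $f(\nd) > \lceil m/2\rceil \ge m/2$. By the counting fact there is a \emph{unique} subtree $D$ among $D_1,\dots,D_k$ with $|D| = f(\nd)$, and $|D| > m/2$. Since $T$ is a tree, $\nd$ has exactly one neighbour $\nd'$ lying in $D$; I would now examine how $\nd'$ splits $T$. Removing $\nd'$ separates $D\setminus\{\nd'\}$ from $\nd$, so one of the resulting subtrees consists of $\nd$ together with all nodes of $T$ outside $D$, and it has $m - |D|$ nodes, where $m - |D| < m - m/2 = m/2 \le \lceil m/2\rceil < f(\nd)$. Every other subtree into which $\nd'$ splits $T$ is contained in $D\setminus\{\nd'\}$, hence has fewer than $|D| = f(\nd)$ nodes. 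Therefore $f(\nd') < f(\nd)$, contradicting the minimality of $f(\nd)$.

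Consequently $f(\nd) \le \lceil m/2\rceil$, i.e.\ $\nd$ splits $T$ into subtrees each of size at most $\lceil m/2\rceil$, which is what is claimed. The degenerate case $m=1$ is immediate, and the only points requiring a little care are the strict inequalities around $m/2$ versus $\lceil m/2\rceil$; beyond that this is the standard tree-centroid argument and I do not anticipate any real obstacle.
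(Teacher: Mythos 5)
Your proof is correct: the extremal (centroid) argument goes through, including the delicate arithmetic around $\lceil m/2\rceil$ --- under the assumption $f(\nd)>\lceil m/2\rceil$ the unique largest component $D$ satisfies $|D|>m/2$, so passing to the neighbour $\nd'$ of $\nd$ inside $D$ strictly decreases $f$, contradicting minimality. The paper gives no proof of this lemma at all (it is imported from~\cite{LICS14}), and your argument is the standard self-contained one; note it in fact establishes the slightly stronger bound $\lfloor m/2\rfloor$, of which the stated $\lceil m/2\rceil$ is an immediate consequence.
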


We will use repeated applications of this lemma to decompose the input CQ into smaller and smaller subqueries. 
Formally, for every tree-shaped CQ $\q$, we use $v_\q$ to denote a vertex in the Gaifman graph $\gfmn$ of $\q$  
that satisfies the condition of Lemma~\ref{PrepMiddleVertex}.
If $|\vars(\q)| = 2$ and $\q$ has at least one existential variable, 
we assume that $v_\q$ is existentially quantified.
Then, for an OMQ $\omq=(\T, \q_0(\vec{x}))$, we define $\sqset$ as the smallest set of queries 
that contains $\q_0(\vec{x})$ and is such that, for every $\q(\vec{z}) \in \sqset$ with $\vars(\q) \ne \vec{z}$,
the following queries also belong to $\sqset$:\label{page:decomposition}
\begin{itemize}
\item for every $u_i$ adjacent to $v_\q$ in $\gfmn$, the query $\q_i(\vec{z}_i)$ 
comprising all role atoms linking $v_\q$ and $u_i$,
as well as all atoms whose variables cannot reach $v_\q$ in $\gfmn$ without passing by $u_i$,
and with $\vec{z}_i = \vars(\q_i) \cap (\vec{z} \cup \{v_\q\})$;
\item for every tree witness $\t$ for $(\T,\q(\vec{z}))$ with  $\tr\neq \emptyset$ and $v_\q\in\ti$, the queries\linebreak $\q_1^\t(\vec{z}_1^\t), \dots, \q_m^\t(\vec{z}_m^\t)$ that correspond 
to the connected components of the set of atoms of $\q$ that are not in  $\q_\t$, with $\vec{z}_i^\t = \vars(\q_i^\t) \cap (\vec{z} \cup \tr)$. 
\end{itemize}

The NDL program $\Pi''_{\omq}$ uses IDB predicates $P_\q$, for $\q(\vec{z}) \in \sqset$,
with arity $|\vec{z}|$ and parameters $\vars(\q) \cap \vec{x}$.  For each $\q(\vec{z}) \in \sqset$ with $\vars(\q) = \vec{z}$, we include the clause $\rew_\q(\vec{z}) \leftarrow \q(\vec{z})$. 
For each $\q(\vec{z}) \in \sqset$ with $\vars(\q) \ne \vec{z}$, we include the clause 
\begin{equation*}
\rew_\q(\vec{z}) \ \ \ \leftarrow \bigwedge_{A(v_\q)\in \q}\hspace*{-1em} A(v_\q) \ \ \ \land \bigwedge_{R(v_\q,v_\q) \in \q} \hspace*{-1em}R(v_\q,v_\q)  \ \ \  \land \bigwedge_{1\leq i\leq n} \rew_{\q_i}(\vec{z}_i), 
\end{equation*}
where $\q_1(\vec{z}_1), \ldots, \q_n(\vec{z}_n)$ are the subqueries induced by the neighbours of $v_\q$ in $\gfmn$, and the following clause
\begin{equation*}
\rew_\q(\vec{z}) \ \ \ \leftarrow \bigwedge_{u,u' \in \tr} (u=u') \ \ \ \land \ \bigwedge_{u \in \tr} A_R(u)\  \
\land \bigwedge_{1\leq i \leq m} \rew_{\q_i^\t}(\vec{z}_i^\t)
\end{equation*}
for every tree witness $\t$ for $(\T,\q(\vec{z}))$ with $\tr\neq \emptyset$ and $v_\q \in\ti$ and for every role~$R$ generating $\t$, 
where $\q_1^\t, \dots, \q_m^\t$ are the connected components of $\q$ without $\q_\t$.
Finally, if $\q_0$ is Boolean, then we additionally include clauses 
$P_{\q_0} \leftarrow A(x)$ for all concept names~$A$ such that $\T, \{A(a)\} \models \q_0$. 

The program $\Pi''_{\omq}$ is inspired by a similar construction from~\cite{LICS14}. By adapting results from the latter paper,
we can show that $(\Pi''_{\omq}, P_{\q_0}(\vec{x}))$ is indeed a rewriting: 
\begin{lemma}\label{bl-logcfl-rewriting-correct}
For any tree-shaped OMQ $\omq(\vec{x})=(\T, \q_0(\vec{x}))$, any $\q(\vec{z}) \in \sqset$, any H-complete ABox $\A$, and any tuple $\vec{a}$ in $\ind(\mathcal{A})$, 
$\Pi''_{\omq},\A \models P_\q(\vec{a})$ iff there exists a homomorphism $h\colon \q \to \can$ such that $h(\vec{z})= \vec{a}$. 
\end{lemma}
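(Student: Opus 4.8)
The plan is to establish the equivalence by induction on the well-founded ``generated from'' relation on $\sqset$ (equivalently, on the depth of $P_\q$ in the dependence graph of $\Pi''_\omq$, which is acyclic), after recording two standard facts about canonical models; throughout we may assume $\A$ is consistent with $\T$, so that $\can$ exists. First, since $\A$ is H-complete and consistent with $\T$, the restriction of $\can$ to $\ind(\A)$ is isomorphic to $\A$ viewed as a relational structure, so the identity is a homomorphism $\A\to\can$. Second, $\can$ has the usual forest shape: from every $a\in\ind(\A)$ with $\T,\A\models\exists R(a)$ a copy of $\C_\T^{A_R(a)}$ (with root identified with $a$) hangs off $a$; distinct such trees are disjoint apart from their roots; and whenever $R(x,y)$ holds in $\can$ with $x$ strictly inside such a tree and $y\in\ind(\A)$, then $y$ is the root $a$ of that tree. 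The base cases are immediate: if $\vars(\q)=\vec{z}$, the only clause for $P_\q$ is $P_\q(\vec{z})\leftarrow\q(\vec{z})$, so $\Pi''_\omq,\A\models P_\q(\vec{a})$ iff all atoms of $\q$ with $\vec{z}$ instantiated by $\vec{a}$ lie in $\A$, which by the first fact is exactly the existence of a homomorphism $h\colon\q\to\can$ with $h(\vec{z})=\vec{a}$; and the extra clauses $P_{\q_0}\leftarrow A(x)$ for Boolean $\q_0$ are handled directly, using $\T,\{A(a)\}\models\q_0$ together with the forest fact.

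For the inductive step, consider first the direction from a homomorphism to a derivation. Let $h\colon\q\to\can$ with $h(\vec{z})=\vec{a}$ and $\vars(\q)\ne\vec{z}$. If $h(v_\q)\in\ind(\A)$, put $b=h(v_\q)$; then each restriction $h\restr\q_i$ to a neighbour-induced subquery is a homomorphism $\q_i\to\can$ whose values on $\vec{z}_i$ are the corresponding entries of $\vec{a}$ together with $b$ at $v_\q$, so by the induction hypothesis $\Pi''_\omq,\A\models P_{\q_i}(\cdot)$, and since $A(b),R(b,b)\in\can\restr\ind(\A)=\A$ for the relevant self-atoms, the split-clause for $\q$ fires and derives $P_\q(\vec{a})$. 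If $h(v_\q)\notin\ind(\A)$, then $h(v_\q)$ lies strictly inside the tree $\C_\T^{A_R(a)}$ hanging off some $a\in\ind(\A)$, where $R$ is the first role of the word labelling $h(v_\q)$ relative to $a$. Put $\tr=h^{-1}(a)$ and $\ti=\{u\mid h(u)\text{ lies strictly inside }\C_\T^{A_R(a)}\}$. Since $\vec{a}$ is a tuple in $\ind(\A)$, any variable of $\vec{z}$ is sent to $\ind(\A)$; if $\vec{z}=\emptyset$ then $\q=\q_0$ is Boolean, and either some variable is still sent to $\ind(\A)$ or all of $\q_0$ maps into one tree $\C_\T^{A_R(a)}$ and a Boolean-concept clause fires. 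In the remaining case, walking in $\q$ from $v_\q$ to a variable sent to $\ind(\A)$ and using the ``come back to the root'' property shows $\tr\ne\emptyset$, $v_\q\in\ti$, and that $\t=(\tr,\ti)$ is a tree witness for $(\T,\q(\vec{z}))$ generated by $R$, with $\q_\t$ exactly the atoms of $\q$ touching $\ti$ and with the required homomorphism $\q_\t\to\C_\T^{A_R(a)}$ read off from $h$. Restricting $h$ to the connected components $\q_i^\t$ of $\q\setminus\q_\t$ gives homomorphisms $\q_i^\t\to\can$ whose values on $\vec{z}_i^\t$ agree with $\vec{a}$ on $\vec{z}$ and equal $a$ on $\tr$; the induction hypothesis gives $\Pi''_\omq,\A\models P_{\q_i^\t}(\cdot)$, and since all of $\tr$ is sent to $a$ and $A_R(a)\in\A$ (because $\T,\A\models\exists R(a)$ and $\A$ is H-complete), the tree-witness clause for $\t$ and $R$ fires and derives $P_\q(\vec{a})$.

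Conversely, suppose $\Pi''_\omq,\A\models P_\q(\vec{a})$ via a ground instance of some clause for $P_\q$, and inspect which clause it is. The base clause $P_\q(\vec{z})\leftarrow\q(\vec{z})$ puts us in a base case. For a split-clause, there is $b\in\ind(\A)$ with $A(b),R(b,b)\in\A$ for the relevant self-atoms and $\Pi''_\omq,\A\models P_{\q_i}(\cdot)$ for all $i$; the induction hypothesis yields homomorphisms $h_i\colon\q_i\to\can$ all sending $v_\q$ to $b$, and since removing $v_\q$ disconnects $\gfmn$ into exactly the subtrees underlying the $\q_i$, so that the $\q_i$ pairwise share only the variable $v_\q$, the $h_i$ glue into a homomorphism $h\colon\q\to\can$ with $h(\vec{z})=\vec{a}$ (the self-atoms at $v_\q$ being satisfied in $\A\subseteq\can$). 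For a tree-witness clause for $\t=(\tr,\ti)$ and $R$, all variables of $\tr$ are forced equal, say to $a\in\ind(\A)$, with $A_R(a)\in\A$; hence $\T,\A\models\exists R(a)$ and $\C_\T^{A_R(a)}$ embeds into $\can$ with its root at $a$, so composing the tree-witness homomorphism $\q_\t\to\C_\T^{A_R(a)}$ (which sends $\tr$ to $a$) with this embedding and gluing the result with the homomorphisms $\q_i^\t\to\can$ obtained from the induction hypothesis (which also send $\tr$ to $a$) — using that $\q_\t$ and the $\q_i^\t$ cover $\q$ and pairwise share only $\tr$-variables — produces a homomorphism $h\colon\q\to\can$ with $h(\vec{z})=\vec{a}$. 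Finally, a Boolean-concept clause $P_{\q_0}\leftarrow A(x)$ fires only when $A(c)\in\A$ for some $c$ with $\T,\{A(a)\}\models\q_0$, and then $\q_0$ maps into the copy of $\C_\T^{A(c)}$ inside $\can$.

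The main obstacle is the anonymous case of the direction from a homomorphism to a derivation: extracting from the given $h$ the correct tree witness $\t$, the generating role $R$, and the partition $\q=\q_\t\cup\bigcup_i\q_i^\t$, and checking that $\t$ satisfies the tree-witness definition verbatim — in particular its minimality clause and the existence of the required homomorphism $\q_\t\to\C_\T^{A_R(a)}$ — as well as that the restrictions of $h$ to the components carry exactly the parameter values recorded by the clause. This is where the detailed forest structure of $\can$ and the tree-witness analysis of~\cite{LICS14} are needed; the remaining ingredients — gluing of homomorphisms that agree on a variable-separating set, the identification of $\can$ restricted to $\ind(\A)$ with $\A$, and the finiteness and acyclicity of $\sqset$ — are routine.
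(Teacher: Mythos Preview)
Your approach is the same as the paper's: induction on $\prec$ over $\sqset$, with the same case split in each direction (which clause fires, resp.\ whether $h(v_\q)\in\ind(\A)$), and the same separate treatment of Boolean $\q_0$. The gluing arguments and the use of H-completeness are as in the paper.

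There is, however, a genuine gap in your anonymous case. Your choice $\tr=h^{-1}(a)$ and $\ti=\{u\mid h(u)\text{ strictly inside the tree at }a\}$ is too coarse and need not yield a tree witness. Take $\q$ the path $v_\q - u_1 - u_2 - u_3$ and a homomorphism with $h(v_\q)=aR$, $h(u_1)=a$, $h(u_2)=b\in\ind(\A)\setminus\{a\}$, $h(u_3)=a$. Then your $\ti=\{v_\q\}$ and $\tr=\{u_1,u_3\}$, but $\q_\t$ contains only the edge between $v_\q$ and $u_1$, so $u_3\notin\vars(\q_\t)$ and the required condition $\tr=h_\t^{-1}(a)$ fails; $(\tr,\ti)$ is not a tree witness, and no clause with this $\t$ exists in $\Pi''_\omq$. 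The paper avoids this by taking, instead of $h^{-1}(a)$, the \emph{local} boundary: let $V$ be the least set containing $v_\q$ and closed under ``if $v\in V$, $h(v)\notin\ind(\A)$, and $\q$ has an atom on $v,v'$, then $v'\in V$''; set $\ti=\{v\in V\mid h(v)\notin\ind(\A)\}$ and $\tr=V\setminus\ti$. This picks out exactly the connected region around $v_\q$ that $h$ sends into the anonymous subtree, together with its immediate frontier (all of which is necessarily mapped to $a$ by the forest property). With this $\t$ one checks directly the minimality clause and obtains a bona fide tree witness generated by $R$; the rest of your argument then goes through unchanged. You correctly flagged this step as the main obstacle, but the specific construction you wrote down needs to be replaced by the closure-based one.
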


Now fix $\ell > 1$, and consider the class of OMQs $\omq(\vec{x})=(\T, \q(\vec{x}))$ with tree-shaped $\q(\vec{x})$ having at most $\ell$ leaves. 
The size of $\Pi''_{\omq}$ is polynomially bounded in $|\omq|$, since bounded-leaf CQs have polynomially many tree witnesses 
and also polynomially many tree-shaped subCQs. 
It is readily seen that 
the function $\nu$ defined by setting $\nu (P_{\q'}) = |\q'|$
is a weight function for $(\Pi''_{\omq},P_\q)$ such that $\nu (P_{\q}) \leq |\omq|$. 
Moreover, by Lemma~\ref{PrepMiddleVertex}, $\dep(\Pi,G) \le \log \nu(P_{\q}) +1$. 
We can thus apply Corollary~\ref{cor:weight} to conclude that the obtained NDL-rewritings can be 
evaluated in \LOGCFL. Finally, we note that since the number of leaves is bounded, 
it is in $\NL$ to decide whether a vertex satisfies the conditions of Lemma~\ref{PrepMiddleVertex},
and it is in \LOGCFL{} to decide whether $\T, \{A(a)\} \models \q_0$ \cite{DBLP:conf/lics/BienvenuKP15}
or whether a (logspace) representation of a possible tree witness is 
indeed a tree witness. This allows us to show that $(\Pi''_{\omq},P_\q)$ can be generated by an $\mathsf{L}^\LOGCFL$-transducer.


\section{Conclusions}

As shown above, for three important classes of OMQs, NDL-rewritings can be constructed and evaluated by theoretically optimal \NL{} and \LOGCFL{} algorithms. To see whether these rewritings are viable in practice, we generated three sequences of OMQs with the  ontology from Example~\ref{ex:rewriting:2} and linear CQs of up to 15 atoms as in Example~\ref{ex:rewriting:1}. We compared our \NL{} and \LOGCFL{} rewritings from Secs.~\ref{sec:5} and~\ref{sec:boundedtw}  (called \textsc{Lin} and \textsc{Log})\linebreak  with those produced by Clipper~\cite{DBLP:conf/aaai/EiterOSTX12} and Rapid~\cite{DBLP:conf/cade/ChortarasTS11}. The barcharts below show the \mbox{number} of clauses in the rewritings over H-complete ABoxes. While \textsc{Lin} and \textsc{Log} grow linearly\linebreak (in accord with theory), Clipper and Rapid failed to produce rewritings for longer CQs.\\[-1pt]
\begin{tikzpicture}[xscale=0.43,yscale=0.95]
\def\ysc{0.025}
\def\cwidth{0.2}
%
\begin{scope}
\clip (-0.5,-1) rectangle (16,145*\ysc);
\fill[white] (-0.1,-0.4) rectangle (15.5,125*\ysc);
\foreach \x in {1,...,15} {
\draw (\x,-0.1) -- (\x,0);
\node at (\x,-0.25) {\scriptsize \x};
}
\draw (0.7,0) -- ++(0,150*\ysc);
\foreach \y in {10,25,50,100} {
\draw (0.7,\y*\ysc) -- +(-0.2,0);
\draw[ultra thin,gray] (0.7,\y*\ysc) -- +(14.8,0);
\node at (0.2,\y*\ysc) {\scriptsize \y};
}
\foreach \x/\v in {1/2,2/5,3/8,4/11,5/14,6/17,7/20,8/23,9/26,10/29,11/32,12/35,13/38,14/41,15/44} { 
\filldraw (\x-2*\cwidth,0) -- ++(0,\v*\ysc) -- ++(\cwidth,0) -- (\x-\cwidth,0);
}
\foreach \x/\v in {1/1,2/2,3/5,4/8,5/12,6/16,7/20,8/24,9/27,10/32,11/36,12/40,13/45,14/47,15/51} {
\filldraw[pattern=north west lines, pattern color=black] (\x-\cwidth,0) -- ++(0,\v*\ysc) -- ++(\cwidth,0) -- (\x,0);
}
\foreach \x/\v in {1/1,2/1,3/2,4/3,5/5,6/7,7/10,8/13,9/13,10/26,11/39,12/250,13/250,14/250,15/250} {
\filldraw[fill=gray!50,ultra thin] (\x,0) -- ++(0,\v*\ysc) -- ++(\cwidth,0) -- (\x+\cwidth,0);
}
\foreach \x/\v in {1/1,2/1,3/2,4/3,5/5,6/7,7/11,8/16,9/16,10/44,11/72,12/126,13/241,14/250,15/250} {
\filldraw[fill=white,ultra thin] (\x+\cwidth,0) -- ++(0,\v*\ysc) -- ++(\cwidth,0) -- (\x+2*\cwidth,0);
}
\filldraw[white,draw=white,decorate, decoration={coil,segment length=20pt,aspect=0}] (0,140*\ysc) rectangle ++(17,1);
\draw (0.5,0) -- (15.5,0);
\end{scope}
%
%
%
\begin{scope}[xshift=-60mm,yshift=12mm]
\clip (-0.5,-1) rectangle (16,135*\ysc);
\fill[white] (-0.1,-0.4) rectangle (15.5,130*\ysc);
\foreach \x in {1,...,15} {
\draw (\x,-0.1) -- (\x,0);
\node at (\x,-0.25) {\scriptsize \x};
}
\draw (0.7,0) -- ++(0,100*\ysc);
\foreach \y in {10,25,50,100} {
\draw (0.7,\y*\ysc) -- +(-0.2,0);
\draw[ultra thin,gray] (0.7,\y*\ysc) -- +(14.8,0);
\node at (0.2,\y*\ysc) {\scriptsize \y};
}
\foreach \x/\v in {1/2,2/5,3/8,4/11,5/14,6/17,7/20,8/23,9/26,10/29,11/32,12/35,13/38,14/41,15/44} { 
\filldraw (\x-2*\cwidth,0) -- ++(0,\v*\ysc) -- ++(\cwidth,0) -- (\x-\cwidth,0);
}
\foreach \x/\v in {1/1,2/4,3/5,4/8,5/10,6/15,7/18,8/21,9/27,10/33,11/37,12/42,13/46,14/51,15/52} {
\filldraw[pattern=north west lines, pattern color=black] (\x-\cwidth,0) -- ++(0,\v*\ysc) -- ++(\cwidth,0) -- (\x,0);
}
\foreach \x/\v in {1/1,2/2,3/2,4/4,5/4,6/8,7/11,8/18,9/24,10/34,11/43,12/56,13/250,14/250,15/250} {
\filldraw[fill=gray!50,ultra thin] (\x,0) -- ++(0,\v*\ysc) -- ++(\cwidth,0) -- (\x+\cwidth,0);
}
\foreach \x/\v in {1/1,2/2,3/2,4/4,5/4,6/8,7/11,8/24,9/35,10/63,11/100,12/302,13/250,14/250,15/250} {
\filldraw[fill=white,ultra thin] (\x+\cwidth,0) -- ++(0,\v*\ysc) -- ++(\cwidth,0) -- (\x+2*\cwidth,0);
}
\draw (0.5,0) -- (15.5,0);
\filldraw[white,draw=white,decorate, decoration={coil,segment length=20pt,aspect=0}] (8,130*\ysc) rectangle ++(8,1);
\end{scope}
%
\begin{scope}[xshift=-120mm,yshift=27mm]
\clip (-0.5,-1) rectangle (16,75*\ysc);
\fill[white] (-0.1,-0.4) rectangle (15.5,55*\ysc);
\foreach \x in {1,...,15} {
\draw (\x,-0.1) -- (\x,0);
\node at (\x,-0.25) {\scriptsize \x};
}
\draw (0.7,0) -- ++(0,70*\ysc);
\foreach \y in {10,25,50,100} {
\draw (0.7,\y*\ysc) -- +(-0.2,0);
\draw[ultra thin,gray] (0.7,\y*\ysc) -- +(14.8,0);
\node at (0.2,\y*\ysc) {\scriptsize \y};
}
\foreach \x/\v in {1/2,2/5,3/8,4/11,5/14,6/17,7/20,8/23,9/26,10/29,11/32,12/35,13/38,14/41,15/44} { 
\filldraw (\x-2*\cwidth,0) -- ++(0,\v*\ysc) -- ++(\cwidth,0) -- (\x-\cwidth,0);
}
\foreach \x/\v in {1/1,2/4,3/5,4/5,5/8,6/10,7/13,8/16,9/22,10/27,11/29,12/33,13/35,14/36,15/37} {
\filldraw[pattern=north west lines, pattern color=black] (\x-\cwidth,0) -- ++(0,\v*\ysc) -- ++(\cwidth,0) -- (\x,0);
}
%
\foreach \x/\v in {1/1,2/2,3/2,4/2,5/2,6/2,7/4,8/6,9/10,10/14,11/14,12/14,13/250,14/250,15/250} {
\filldraw[fill=gray!50,ultra thin] (\x,0) -- ++(0,\v*\ysc) -- ++(\cwidth,0) -- (\x+\cwidth,0);
}
\foreach \x/\v in {1/1,2/2,3/2,4/2,5/2,6/2,7/4,8/7,9/13,10/26,11/26,12/26,13/30,14/31,15/30} {
\filldraw[fill=white,ultra thin] (\x+\cwidth,0) -- ++(0,\v*\ysc) -- ++(\cwidth,0) -- (\x+2*\cwidth,0);
}
\draw (0.5,0) -- (15.5,0);
\filldraw[white,draw=white,decorate, decoration={coil,segment length=20pt,aspect=0}] (13,70*\ysc) rectangle ++(3,1);
\end{scope}
\begin{scope}[xshift=-5mm,yshift=-2mm]
\filldraw (-11,0.5) rectangle +(\cwidth,30*\ysc);
\node at (-10.7,0.2) {\footnotesize \textsc{Lin}};
\filldraw[pattern=north west lines, pattern color=black] (-10,0.8) rectangle +(\cwidth,30*\ysc);
\node at (-9.4,0.5) {\footnotesize \textsc{Log}};
\filldraw[fill=gray!50,ultra thin]  (-9,1.1) rectangle +(\cwidth,30*\ysc);
\node at (-8.2,0.8) {\footnotesize Rapid};
\filldraw[fill=white,ultra thin]  (-8,1.4) rectangle +(\cwidth,30*\ysc);
\node at (-7,1.1) {\footnotesize Clipper};
\end{scope}
\end{tikzpicture}\\[-18pt]
We evaluated the rewritings over a few randomly generated ABoxes using off-the-shelf datalog engine RDFox~\cite{DBLP:conf/semweb/NenovPMHWB15}. The experiments (see the full version) show that our rewritings are usually executed  faster than Clipper's and Rapid's  when the number of answers is relatively small ($\lesssim 10^4$); for queries with $\gtrsim 10^6$ answers, the execution times are comparable. The version of RDFox we used did not seem to take advantage of the structure of the \NL{}/\LOGCFL{} rewritings, and it would be interesting to see whether their nonrecursiveness and  parallelisability can be utilised to produce efficient execution plans.

\nopagebreak[4]

\smallskip

\noindent\textbf{Acknowledgements\textup{:}} The first author was supported by contract ANR-12-JS02-007-01, the fourth by the Russian Foundation for Basic Research and the grant MK-7312.2016.1.

\pagebreak

\appendix

\newpage

\section{Proofs}

\noindent\textbf{Lemma \ref{linear-arbitrary}}.  \ \
{\it For any fixed $\wid > 0$, there is an $\mathsf{L}^\NL$-transducer that, given a linear \NDL-rewriting of an OMQ $\omq(\vec{x})$ over H-complete ABoxes that is of width at most $\wid$, computes a linear NDL-rewriting of $\omq(\vec{x})$ over arbitrary ABoxes whose width is at most $\wid+1$.}
\begin{proof}
Let $(\Pi, G(\vec{x}))$ be a linear \NDL-rewriting of the OMQ $\omq(\vec{x}) = (\T,\q(\vec{x}))$ over H-complete ABoxes of width $\wid$. 
and we will replace every clause $\lambda$ in $\Pi$ by a set of clauses $\lambda^*$ defined as follows. 
Suppose $\lambda$ is of the form
\begin{equation*}
H(\vec{z}) \gets I \land \textit{EQ} \land E_1 \land \ldots \land E_n,
\end{equation*}
where $I$ is the only IDB body atom in $\lambda$, $\textit{EQ}$ contains all equality body atoms, and $E_1, \ldots, E_n$ are the EDB body atoms not involving equality. 
For every atom $E_i$, we define a set $\upsilon(E_i)$ of atoms by taking 
\begin{align*}
\upsilon(E_i) & =
\bigl\{B(u) \mid B \sqsubseteq_\T A\bigr\} \cup \bigl\{R(u,u_i) \mid \exists R \sqsubseteq_\T A\bigr\},  &&\text{ if } E_i= A(u),\\
\upsilon(E_i) & = \bigl\{R'(u,v) \mid R' \sqsubseteq_\T R\bigr\},  &&\text{ if } E_i = R(u,v),
\end{align*}
where $u_i$ is a fresh variable not occurring in $\lambda$; 
we assume $P^-(u,v)$ coincides with $P(v,u)$, for all role names $P$.
Intuitively, $\upsilon(E_i)$ captures all atoms that imply $E_i$ with respect to $\T$. 
Then $\lambda^*$ consists of the following clauses: 
\begin{align*}
H_0(\vec{z}_0) & \gets I,\\
H_{i+1}(\vec{z}_i) & \gets H_i(\vec{z}_i) \land E_i', && \text{ for every } 1 \leq i \leq n \text{ and every }E_i' \in \upsilon(E_i),\\
H(\vec{z}) & \gets H_{n+1}(\vec{z}_n) \land \textit{EQ},
\end{align*}
where $\vec{z}_i$ is the restriction of $\vec{z}$ to variables occurring in  $I$ if $i = 0$ and in  $H_i(\vec{z}_i)$ and $E_i'$ except for $u_i$ if $i > 0$ (note that $\vec{z}_n = \vec{z}$).
%
Let $\Pi'$ be the program obtained from $\Pi$ by replacing each clause $\lambda$ by the set of clauses~$\lambda^*$. 
By construction, $\Pi'$ is a linear NDL program and its width cannot exceed $\wid+1$ (the possible increase of $1$ is due to  
the replacement of concept atoms by role atoms). 

We now argue that $(\Pi', G(\vec{x}))$ is a rewriting of $\omq(\vec{x})$ over arbitrary ABoxes. 
It is easily verified that $(\Pi', G(\vec{x}))$ is equivalent to $(\Pi'', G(\vec{x}))$, where 
$\Pi''$ is obtained from $\Pi$ by replacing each clause 
$H(\vec{z}) \gets I \land \textit{EQ} \land E_1 \land \ldots \land E_n$ by the (possibly exponentially larger) set of clauses 
\begin{equation*}
\bigl\{H(\vec{z}) \gets I \land \textit{EQ} \land E_1' \land \ldots \land E_n' \mid E_i' \in \upsilon(E_i), 1 \leq i \leq n\bigr\}.
\end{equation*}
It thus suffices to show that  $(\Pi'', G(\vec{x}))$ is a rewriting of $\omq(\vec{x})$ over arbitrary ABoxes. 

First suppose that $\T, \A \models \q(\vec{a})$, where $\A$ is an arbitrary ABox. 
Let $\A'$ be the H-complete ABox obtained from $\A$ by adding the assertions:
\begin{itemize}
\item $P(a,b)$ whenever $R(a,b) \in \A$ and $R \sqsubseteq_\T P$;
\item $A(a)$ whenever $B(a) \in \A$ (with $B$ a basic concept) and $B \sqsubseteq_\T A$.
\end{itemize}
Clearly, $\T, \A' \models \q(\vec{a})$, so we must have $\Pi, \A' \models G(\vec{a})$. 
A simple inductive argument (on the order of derivation of ground atoms) 
shows that whenever a clause \mbox{$H(\vec{z}) \gets I \land \textit{EQ}\land E_1 \land \ldots \land E_n$}
is applied using a substitution $\vec{c}$ for the\linebreak variables in the body to derive $H(\vec{c}(\vec{z}))$ using $\Pi$,
we can find a corresponding clause\linebreak \mbox{$H(\vec{z}) \gets I \land \textit{EQ}\land E_1' \land \ldots \land E_n'$} and a substitution $\vec{c}'$ extending $\vec{c}$ (on the fresh variables $u_i$)
that allows us to derive $H(\vec{c}'(\vec{z}))$ using $\Pi''$.
Indeed, 
if $E_i=A(u)$, then $A(\vec{c}(u)) \in \A'$, so there must exist either a concept assertion $A'(\vec{c}(u)) \in \A$ such that $A' \sqsubseteq_\T A$
or a role assertion $R(a,b) \in \A$ such that $\exists R \sqsubseteq_\T A$. 
Similarly, if $E_i=R(u,v)$, then there must exist a role assertion $R'(u,v) \in \A$ such that $R' \sqsubseteq_\T R$. 
It then suffices to choose a clause  $H(\vec{z}) \gets I \land \textit{EQ} \land E_1' \land \ldots \land E_n'$
with atoms $E_i'$ whose form matches that of the assertion in $\A$ corresponding to $E_i$.  

For the converse direction, it suffices to observe that $\Pi\subseteq \Pi''$. 


To complete the proof, we note that it is in \NL\ to decide whether an atom belongs to $\upsilon(E_i)$, 
and thus we can construct the program $\Pi'$ by means of an $\mathsf{L}^\NL$-transducer. \qed
\end{proof}

\bigskip

\noindent\textbf{Lemma~\ref{prop:logdepth:rewriting}.} \ \ {\itshape
For any ABox $\A$, any $D \in \R$, any type $\tpd$ with $\dom(\tpd) = \dD$, any $\vec{b} \in \ind(\A)^{|\dD|}$ and $\vec{a}\in \ind(\A)^{|\vec{x}_D|}$, we have $\Pi_\omq,\A \models \rpred^{\tpd}_D(\vec{b}, \vec{a})$  iff there is a homomorphism $h\colon \q_D \to \can$ such that  
\begin{equation}\label{eq3}
h(x) = \vec{a}(x), \text{ for } x\in \vec{x}_D, \quad\text{ and }\quad 
h(v) = \vec{b}(v) \tpd(v), \text{ for  } v\in \dD.
\end{equation}
}
\begin{proof}
$(\Rightarrow)$
The proof is by induction on $\prec$. For the base of induction, let $D$ be of size~1. 
By the definition of $\Pi_\omq$, there exists
a type $\tpr$  such that  $\dom(\tpr) = \lambda(\sigma(D))$ and $\tpd$ agrees with $\tpr$ on $\dD$ and a respective tuple 
$\vec{c} \in \ind(\A)^{|\lambda(\sigma(D))|}$ such that $\vec{c}(v) = \vec{b}(v)$, 
for all $v \in \dD$,  and $\vec{c}(x) = \vec{a}(x)$, for all $x\in \vec{x}_D$, and
$\Pi_\omq,\A \models \mathsf{At}^{\tpr}(\vec{c})$. Then, for any atom $S(\vec{v}) \in \q_D$, we have $\vec{v}\subseteq\lambda(\sigma(D))$, whence $\can \models S(h(\vec{v}))$ as $\vec{w}$ agrees with $\vec{s}$ on $\dD$.

For the inductive step, suppose that $\Pi_\omq,\A\models \rpred^{\tpd}_D(\vec{b}, \vec{a})$. By the definition of $\Pi_\omq$, there exists
a type $\tpr$  such that  $\dom(\tpr) = \lambda(\sigma(D))$ and $\tpd$ agrees with $\tpr$ on their common domain and a respective tuple 
$\vec{c} \in \ind(\A)^{|\lambda(\sigma(D))|}$ such that $\vec{c}(v) = \vec{b}(v)$, 
for all $v \in \dD$,  and $\vec{c}(x) = \vec{a}(x)$, for all $x\in \vec{x}_D$, and
\begin{equation*}
\Pi_\omq,\A \models \mathsf{At}^{\tpr}(\vec{c}) \land 
\bigwedge\nolimits_{D' \prec D} \rpred^{(\tpr\cup\tpd) \restr\dDp}_{D'}(\vec{b}_{D'},\vec{a}_{D'}),
\end{equation*}
where $\vec{b}_{D'}$ and $\vec{a}_{D'}$ are the restrictions of 
$\vec{b} \cup \vec{c}$ to $\dDp$ and  of $\vec{a}$ to 
$\vec{x}_{D'}$, respectively. 
By the induction hypothesis, for any $D'\prec D$, there is a  homomorphism $h_{D'}\colon\q_{D'}\to\can$ such that~\eqref{eq3} is satisfied.
%

Let us show that the $h_{D'}$ agree on common variables. Suppose that $v$ is shared by $\q_{D'}$ and
$\q_{D''}$ for $D' \prec D$ and $D'' \prec D$.  By 
the definition of tree decomposition,  
for every $v \in V$, the nodes $\{\nd\mid v \in \lambda(\nd)\} $ induce a connected subtree of~$T$, and so 
$v \in \lambda(\sigma(D)) \cap \lambda(\nd') \cap \lambda(\nd'')$, where
$\nd'$ and $\nd''$ are the unique neighbours of $\sigma(D)$ lying in $D'$ and $D''$, respectively.
%
%
Since  
$\tpd'=(\tpd\cup\tpr) \restr\dDp$ and $\tpd''=(\tpd\cup\tpr) \restr\dDpp$ are the restrictions of 
$\tpd \cup\tpr$, we have
$\tpd'(v)  = \tpd''(v)$.
This implies that $h_{D'}(v)  = \vec{c}(v) \vec{w}'(v) = \vec{c}(v) \vec{w}''(v) = h_{D''}(v)$.

Now we define $h$ on every $v$ in $\q_D$ by taking
\begin{equation*}
h(v) = \begin{cases}
h_{D'}(v) & \text{ if }  v \in\lambda(t), \text{ for }t\in D' \text{ and } D' \prec D,\\
\vec{c}(v)\cdot (\tpd\cup\tpr)(v), & \text{ if } v \in\lambda(\sigma(D)).	
\end{cases}
\end{equation*}
If follows that $h$ is well defined, $h$ satisfies (\ref{eq3}) and that $h$ is a homomorphism
from $\q_D$ to $\can$. 
Indeed, take an atom $S(\vec{v}) \in \q_D$. Then either $\vec{v}\subseteq\lambda(\sigma(D))$,  
in which case $\can \models S(h(\vec{v}))$ since $\vec{w}$ is compatible
with $\sigma(D)$ and $\Pi_\omq, \A \models \mathsf{At}^{\tpr}(\vec{c})$, or
\mbox{$S(\vec{v}) \in \q_{D'}$} for some $D' \prec D$, in which case we use the fact
that $h$ extends a homomorphism $h_{D'}$.

\medskip

$(\Leftarrow)$
The proof is by induction on $\prec$. Fix $D$ and $\tpd$ such that $|\tpd| = |\dD|$.
Take $\vec{b} \in \ind(\A)^{|\dD|}$, $\vec{a} \in \ind(\A)^{|\vec{x}_D|}$, and
a homomorphism $h\colon\q_D\to\can$ satisfying~\eqref{eq3}. 
Define a type $\tpr$ and a tuple $\vec{c} \in \ind(\A)^{|\lambda(\sigma(D))|}$ by taking, for all $v\in \lambda(\sigma(D))$,
\begin{equation*}
\tpr(v) = w \ \ \text{ and } \ \  \vec{c}(v) = a, \ \ \ \ \ \text{ if } h(v) = a w, \text{ for  } a\in\ind(\A).
\end{equation*}
By definition, $\dom(\tpr) = \lambda(\sigma(D))$ and, by~\eqref{eq3}, $\tpr$ and $\tpd$ agree on the common domain.
For the inductive step, for each $D'\prec D$,  let
$h_{D'}$ be the restriction of $h$ to $\q_{D'}$ and let $\vec{b}_{D'}$ and and $\vec{a}_{D'}$  be the restrictions of 
$\vec{b} \cup \vec{c}$ to $\dDp$ and  of $\vec{a}$ to 
$\vec{x}_{D'}$, respectively. 
By the inductive hypothesis, $\Pi_\omq, \A \models \rpred^{\tpd'}_{D'}( \vec{b}_{D'},\vec{a}_{D'})$. (This argument is not needed for the basis of induction.)
Since $h$ is a homomorphism, we have
$\Pi_\omq,\A \models \mathsf{At}^{\tpr}(\vec{c})$,
whence, $\Pi_\omq,\A\models  \rpred^{\tpd}_D(\vec{b}, \vec{a})$.
\qed\end{proof}

\bigskip

\noindent\textbf{Lemma \ref{nl-rewriting-correct}.} \ \
{\it For any H-complete ABox $\A$, any $0 \leq n \leq M$, any predicate $P^{\vec{w}}_n$, 
any $\vec{b} \in \ind(\A)^{|\vec{z}^n_\exists|}$ and any $\vec{a}\in \ind(\A)^{|\vec{x}^n|}$,
we have  
$\Pi_\omq',\A \models P^{\vec{w}}_n(\vec{b}, \vec{a})$  iff there is a homomorphism 
$h\colon \q_n \to \can$ such that  
\begin{equation}\tag{\ref{nl-rewriting-eq}}
h(x) = \vec{a}(x), \text{ for } x\in \vec{x}^n, \quad\text{ and }\quad 
h(z) = \vec{b}(z) \vec{w}(z), \text{ for  } z\in\vec{z}^n_\exs.
\end{equation}
}
\begin{proof}
The proof is by induction on $n$. For the base case ($n=M$), first suppose that we have
$\Pi_\omq',\A \models P^{\vec{w}}_M(\vec{b}, \vec{a})$. 
The only rule in $\Pi_\omq'$ with head predicate $P^{\vec{w}}_M$ is $P^{\vec{w}}_{M}(\vec{z}^{M}_\exs, \vec{x}^M) \leftarrow \mathsf{At}^{\vec{w}}(\vec{z}^M)$ with $\vec{z}^M = \vec{z}^M_\exs \uplus \vec{x}^M$, which is equivalent to
\begin{equation}\label{rule-M}
P^{\vec{w}}_{M}(\vec{z}^{M}_\exs, \vec{x}^M) \leftarrow \bigwedge_{z \in \vec{z}^M} \Bigl(\bigwedge_{\substack{A(z) \in \q\\ \vec{w}(z) = \varepsilon}} \!\!\!\!A(z)
\,\, \land \!\! \bigwedge_{\substack{R(z, z) \in \q\\ \vec{w}(z) = \varepsilon}} \!\!\!\! \!\! R(z, z) \,\,\land \!\!\! 
\bigwedge_{\substack{\vec{w}(z) = S w'}} \!\!\!A_S(z)\Bigr).
\end{equation}
So the body of this rule must be satisfied when $\vec{b}$ and $\vec{a}$ are substituted for $\vec{z}^M_\exs$ and $\vec{x}^M$ respectively. 
Moreover, by local compatibility of $\vec{w}$ with $\vec{z}^M$, we know that $\vec{w}(x) = \varepsilon$ for every $x \in \vec{x}^M$. 
It follows that
\begin{itemize}
\item[--] $A(\vec{a}(x)) \in \A$ for every $A(x) \in \q$ such that $x \in \vec{x}^M$; 
\item[--] $A(\vec{b}(z)) \in \A$ for every $A(z) \in \q$ such that $z \in \vec{z}^M_\exs$ and $\vec{w}(z) = \varepsilon$;
\item[--] $R(\vec{a}(x),\vec{a}(x)) \in \A$ for every $R(x, x) \in \q$ such that $x \in \vec{x}^M$;
\item[--] $R(\vec{b}(z),\vec{b}(z)) \in \A$ for every $R(z, z) \in \q$ such that $z \in \vec{z}^M_\exs$ and $\vec{w}(z) = \varepsilon$;
\item[--] $A_S(z) \in \A$ for every $z\in \vec{z}^M$ with $\vec{w}(z) = S w'$.
\end{itemize}
Now let $h^M$ be the unique mapping from $\vec{z}^M$ to $\Delta^{\can}$ satisfying~\eqref{nl-rewriting-eq}. First note that $h^M$ is well-defined, since by the last item, whenever $\vec{w}(z) = S w'$,
we have $A_S(z) \in \A$ and $S w' \in \twords$, so $\vec{b}(z) S w'$ belongs to $\Delta^{\can}$. 
To show that $h^M$ is a homomorphism of $\q_M$ into $\can$, first recall that the atoms of $\q_M$ are of two types: $A(z)$ or $R(z, z)$, 
with $z \in \vec{z}^M$. 
Take some $A(z) \in \q_M$. If $\vec{w}(z) = \varepsilon$, then we immediately obtain either 
$A(h^M(z))=A(\vec{a}(z)) \in \A$ or  $A(h^M(z))=A(\vec{b}(z)) \in \A$, depending on whether $z \in \vec{z}^M_\exs$ or in $\vec{x}^M$. 
Otherwise, if $\vec{w}(z) \neq \varepsilon$, then the local compatibility of $\vec{w}$ with $\vec{z}^M$ means that the final letter $R$ in $\vec{w}(z)$ is such that $\exists R^- \sqsubseteq_\T A$, hence $h^M(z) = \vec{b}(z) \vec{w}(z) \in A^{\can}$. 
Finally, suppose that $R(z, z) \in \q$. The local compatibility of $\vec{w}$ with $\vec{z}^M$ ensures that $\vec{w}(z) = \varepsilon$, and thus 
we have either $R(\vec{a}(z),\vec{a}(z)) \in \A$ or $R(\vec{b}(z),\vec{b}(z)) \in \A$, depending again on whether  
$z \in \vec{z}^M_\exs$ or $z \in\vec{x}^M$.

For the other direction of the base case, suppose that the mapping $h^M$ given by~\eqref{nl-rewriting-eq} defines a homomorphism from $\q_M$ into $\can$. 
We therefore have:
\begin{itemize}
\item[--] $\vec{a}(x) \in A^{\can}$ for every $A(x) \in \q$ with $x \in \vec{x}^M$;
\item[--] $\vec{b}(z) \vec{w}(z) \in A^{\can}$ for every $A(z) \in \q$ with $z \in \vec{z}^M_\exs$;
\item[--] $(\vec{a}(x),\vec{a}(x)) \in R^{\can}$ for every $R(x, x) \in \q$ such that $x \in \vec{x}^M$;
\item[--] $(\vec{b}(z),\vec{b}(z)) \in R^{\can}$ for every $R(z, z) \in \q$ such that $z \in \vec{z}^M_\exs$; 
\item[--]  $\T, \A \models \exists S(\vec{b}(z))$ for every $z \in \vec{z}^M_\exs$  with $\vec{w}(z)=S w'$ (for otherwise $\vec{b}(z)\vec{w}(z)$
would not belong to the domain of $\can$).
\end{itemize}
The first two items, together with H-completeness of the ABox $\A$, ensure that all atoms in $\{A(z) \mid A(z) \in \q, z\in\vec{z}^M, \vec{w}(z) = \varepsilon\}$ 
are present in $\A$ when $\vec{b}$ and $\vec{a}$  substituted for $\vec{z}^M_\exs$ and $\vec{x}^M$ respectively. 
The third and fourth items, again together with H-completeness, ensure the presence of the atoms in $\{R(z, z) \mid R(z, z) \in \q, z\in\vec{z}^M, \vec{w}(z) = \varepsilon\}$.
Finally, the fifth item plus H-completeness ensures that $\A$ contains all atoms in $\{A_S(z) \mid z\in\vec{z}^M, \vec{w}(z) = S w'\}$. 
It follows that the body of the unique rule for $P^{\vec{w}}_{M}$
is satisfied when $\vec{b}$ and $\vec{a}$ are substituted for $\vec{z}^M_\exs$ and $\vec{x}^M$ respectively, and thus
$\Pi_\omq',\A \models P^{\vec{w}}_M(\vec{b}, \vec{a})$. 

\bigskip

For the induction step, assume that the statement has been shown to hold for all $n \leq k+1 \leq M$,
and let us show that it holds when $n=k$. For the first direction, suppose $\Pi_\omq',\A \models P^{\vec{w}}_k(\vec{b}, \vec{a})$. 
It follows that there exists a pair of types $(\vec{w}, \tpr)$ compatible with $(\vec{z}^k, \vec{z}^{k+1})$
and an assignment $\vec{c}$ of individuals from $\A$ to the variables in $\vec{z}^k \cup \vec{z}^{k+1}$
such that $\vec{c}(x)=\vec{a}(x)$ 
for all $x \in (\vec{z}^k \cup \vec{z}^{k+1})\cap \vec{x}$, and $\vec{c}(z)=\vec{b}(z)$ for all $z \in \vec{z}^k_\exs$, and 
such that every atom in the body of the clause
\begin{equation*}
P^{\vec{w}}_{k}(\vec{z}^{k}_\exs, \vec{x}^k) \leftarrow 
\mathsf{At}^{\vec{w}\cup\tpr}(\vec{z}^k,\vec{z}^{k+1})  \land P^{\tpr}_{k+1}(\vec{z}^{k+1}_\exs, \vec{x}^{k+1})
\end{equation*}
is entailed from $\Pi_\omq',\A$ when the individuals in $\vec{c}$ are substituted for $\vec{z}^k \cup \vec{z}^{k+1}$. We
recall that  $\mathsf{At}^{\vec{w}\cup\tpr}(\vec{z}^k,\vec{z}^{k+1})$ is the conjunction of the following atoms, for $z,z'\in \vec{z}^k\cup\vec{z}^{k+1}$:
\begin{itemize}
\item[--] $A(z)$, if $A(z) \in \q$ and $(\vec{w}\cup\tpr)(z) = \varepsilon$,
\item[--] $R(z, z')$,  if $R(z, z') \in \q$ and $(\vec{w}\cup\tpr)(z) = (\vec{w}\cup\tpr)(z') = \varepsilon$,
\item[--] $z = z'$, if $R(z, z') \in \q$ and either $(\vec{w}\cup\tpr)(z) \neq \varepsilon$  or $(\vec{w}\cup\tpr)(z') \neq \varepsilon$,
\item[--] $A_S(z)$, if $(\vec{w}\cup\tpr)(z)$ is of the form $S w'$. 
\end{itemize}
In particular, we have $\Pi_\omq',\A \models  P^{\tpr}_{k+1}(\vec{c}(\vec{z}^{k+1}_\exs), \vec{c}({\vec{x}^{k+1}}))$. 
By the induction hypothesis, there exists a homomorphism $h^{k+1}\colon \q_{k+1} \to \can$ such that
$h^{k+1}(u) = \vec{c}(u) \vec{s}(u)$ for every $u \in \vec{z}^{k+1}_\exs \cup \vec{x}^{k+1}$. 
Define a mapping $h^k$ from $\vars(\q_k)$ to $\Delta^{\can}$ by setting $h^k(u)=h^{k+1}(u)$ for every variable $u\in \vars(\q_{k+1})$,
setting $h^k(x) = \vec{a}(x)$ for every $x \in \vec{z}^k \cap \vec{x}$, and setting 
$h^k(z)=\vec{b}(z) \vec{w}(z)$ for every $z \in \vec{z}^k$. 
Using the same argument as was used in the base case, 
we can show that $h^k$ is well-defined. For atoms from $\q_k$ involving only variables from $\q_{k+1}$, we can 
use the induction hypothesis to conclude that they are satisfied under $h^k$, 
and for atoms only involving variables from $\vec{z}^k$, we can argue as in the base case. 
It thus remains to handle role atoms that contain one variable from $\vec{z}^k$ and one variable from $\vec{z}^{k+1}$. 
Consider such an atom $R(z, z') \in \q_k$, for $z\in\vec{z}^k$ and $z'\in\vec{z}^{k+1}$.
If $\vec{w}(z) = \tpr(z') = \varepsilon$, then the atom $R(z, z')$ appears in the body of the clause we are considering.
It follows that $\Pi_\omq',\A \models  R(\vec{c}(z), \vec{c}(z'))$,
hence $(\vec{c}(z), \vec{c}(z')) \in R^{\can}$. 
It then suffices to note that $\vec{c}$ agrees with $\vec{a}$ and $\vec{b}$ on the variables in $\vec{z}^k$.  
Next suppose that either $\vec{w}(z) \neq \varepsilon$ or $ \tpr(z') \neq \varepsilon$. 
It follows that the clause body contains $z = z'$, hence $\vec{c}(z) = \vec{c}(z')$. 
As $(\tpd, \tpr)$ is compatible with $(\vec{z}^k, \vec{z}^{k+1})$, one of the following must hold:
(a) $\tpr(z')= \tpd(z) R'$ with $ R' \sqsubseteq_\T R$, or
(b) $\tpd(z) = \tpr(z') R' $ with $R' \sqsubseteq_\T R^-$. 
We give the argument in the case where $z \in \vec{z}^k_\exs$ (the argument is entirely similar if $z \in \vec{x}^k$). 
If (a) holds, then 
$$
(h^k(z), h^k(z'))= (\vec{b}(z) \vec{w}(z), \vec{c}(z') \vec{s}(z'))= 
(\vec{b}(z) \vec{w}(z), \vec{c}(z') \tpd(z) R')\in R^{\can}
$$
since $ R' \sqsubseteq_\T R$ and $\vec{c}(z') =\vec{c}(z) = \vec{b}(z)$. 
If (b) holds, 
then 
$$
(h^k(z), h^k(z'))= (\vec{b}(z) \vec{w}(z), \vec{c}(z') \vec{s}(z'))= 
(\vec{b}(z) \vec{s}(z') R', \vec{c}(z') \vec{s}(z'))\in R^{\can}
$$
since $R' \sqsubseteq_\T R^-$. 

\smallskip

For the converse direction of the induction step, let $\vec{w}$ be a type that is locally compatible with $\vec{z}^k$,
 let  $\vec{a}\in \ind(\A)^{|\vec{x}^k|}$ and $\vec{b} \in \ind(\A)^{|\vec{z}^k_\exists|}$, and
let $h^k\colon \q_k \to \can$ be a homomorphism satisfying 
\begin{equation}\label{indstep-sec5}
h^k(x) = \vec{a}(x), \text{ for } x\in \vec{x}^k, \quad\text{ and }\quad 
h^k(z) = \vec{b}(z) \vec{w}(z), \text{ for  } z\in\vec{z}^k_\exs.
\end{equation}
We let $\vec{c}$ for $\vec{z}^{k+1}$ be defined by setting $\vec{c}(z)$ equal to the unique individual $c$ such that $h(z)$ is of the form $c w$ (for some $w \in \twords$),
and let $\vec{s}$ be the unique type for $\vec{z}^{k+1}$ satisfying $h(z) = \vec{c}(z) \vec{s}(z)$ for every $z \in \vec{z}^{k+1}$; in other words,  we obtain $\vec{s}(z)$ from $h(z)$ by omitting the initial individual name $\vec{c}(z)$. 
Note that since $\vec{x}^{k+1} \subseteq \vec{x}^{k}$, we have $\vec{a}(x)=\vec{c}(x)$ for every $x\in \vec{x}^{k+1}$.
It follows from the fact that $h^k$ is a homomorphism that $\vec{s}$ is locally compatible with $\vec{z}^{k+1}$
and that, for every role atom 
$R(z, z') \in \q_k$ with $z\in\vec{z}^k$ and $z'\in\vec{z}^{k+1}$, one of the following holds: 
(\emph{i}) $\tpd(z) =\tpr(z')= \varepsilon$, 
(\emph{ii}) $\tpr(z')= \tpd(z) R'$ with $R' \sqsubseteq_\T R$, or
(\emph{iii}) $\tpd(z) = \tpr(z') R'$ with $R' \sqsubseteq_\T R^-$.
Thus, the pair of types $(\vec{w}, \vec{s})$ is compatible with $(\vec{z}^{k}, \vec{z}^{k+1})$,
and so the following rule appears in $\Pi_\omq'$:
\begin{equation*}
P^{\vec{w}}_{k}(\vec{z}^{k}_\exs, \vec{x}^k) \leftarrow 
\mathsf{At}^{\vec{w}\cup\tpr}(\vec{z}^k,\vec{z}^{k+1})  \land P^{\tpr}_{k+1}(\vec{z}^{k+1}_\exs, \vec{x}^{k+1}),
\end{equation*}
where we recall that $\mathsf{At}^{\vec{w}\cup\tpr}(\vec{z}^k,\vec{z}^{k+1})$ is the conjunction of the following atoms, for $z,z'\in \vec{z}^k\cup\vec{z}^{k+1}$:
\begin{itemize}
\item[--] $A(z)$, if $A(z) \in \q$ and $(\vec{w}\cup\tpr)(z) = \varepsilon$,
\item[--] $R(z, z')$,  if $R(z, z') \in \q$ and $(\vec{w}\cup\tpr)(z) = (\vec{w}\cup\tpr)(z') = \varepsilon$,
\item[--] $z = z'$, if $R(z, z') \in \q$ and either $(\vec{w}\cup\tpr)(z) \neq \varepsilon$  or $(\vec{w}\cup\tpr)(z') \neq \varepsilon$,
\item[--] $A_S(z)$, if $(\vec{w}\cup\tpr)(z)$ is of the form $S w'$. 
\end{itemize}
It follows from Equation~\eqref{indstep-sec5} and the fact that $h^k$ is a homomorphism that each of the ground atoms obtained 
by taking an atom from $\mathsf{At}^{\vec{w}\cup\tpr}(\vec{z}^k,\vec{z}^{k+1})$ and substituting 
 $\vec{a}$, $\vec{b}$, and $\vec{c}$ for  $\vec{x}^k$, $\vec{z}^k_\exs$ and $\vec{z}^{k+1}$, respectively, is present in $\A$. 
By applying the induction hypothesis to the predicate $P^{\tpr}_{k+1}$ and the homomorphism $h^{k+1}\colon \q_{k+1} \to \can$
obtained by restricting $h^k$ to $\vars(\q_{k+1} )$, we obtain that  $\Pi_\omq',\A \models P^{\tpr}_{k+1}(\vec{c}(\vec{z}^{k+1}_\exs), \vec{a}(\vec{x}^{k+1}))$. 
Since for the considered substitution, all body atoms are entailed, we can conclude that \mbox{$\Pi_\omq',\A \models P^{\vec{w}}_{k}( \vec{b},\vec{a})$}.\qed
\end{proof}

\bigskip

\noindent\textbf{Lemma \ref{bl-logcfl-rewriting-correct}.} \ \
{\it For any tree-shaped OMQ $\omq(\vec{x})=(\T, \q_0(\vec{x}))$, any $\q(\vec{z}) \in \sqset$, any H-complete ABox $\A$, and any tuple $\vec{a}$ in $\ind(\mathcal{A})$, 
$\Pi''_{\omq},\A \models P_\q(\vec{a})$ iff there exists a homomorphism $h\colon \q \to \can$ such that $h(\vec{z})= \vec{a}$. 
}
\begin{proof}
An inspection of the definition of the set $\sqset$ shows that every $\q(\vec{z}) \in \sqset$ is a tree-shaped query having at least one answer variable,
with the possible exception of the original query $\q_0(\vec{x})$, which may be Boolean.

Just as we did for subtrees in Section~\ref{sec:boundedtw}, we associate a binary relation on the queries in $\sqset$ by setting 
$ \q'(\vec{z}') \prec \q(\vec{z})$ whenever $\q'(\vec{z}')$ was introduced when applying one of the two decomposition conditions on p.~\pageref{page:decomposition} to $\q(\vec{z})$. 
The proof is by induction on the subqueries in $\sqset$, according to $\prec$. 
We will start by establishing the statement 
for all queries in $\sqset$ other than $\q_0(\vec{x})$, 
and afterwards, we will complete the proof by giving an argument for $\q_0(\vec{x})$.

\smallskip

For the basis of induction, take some $\q(\vec{z}) \in \sqset$ that is minimal in the ordering induced by $\prec$, which means that
$\vars(\q) = \vec{z}$. Indeed, if there is an existentially quantified variable, then the first decomposition rule will give rise to a `smaller' query (in particular, if $|\vars(\q)| = 2$, then although the `smaller' query may have the same atoms, the selected existential variable will become an answer variable).
For the first direction, suppose that $\Pi''_{\omq},\A \models P_\q(a)$. By definition, $P_\q(z) \gets \q(z)$ is the only clause   with head predicate~$P_\q$. Thus, all atoms in the ground CQ $\q(\vec{a})$ are present in $\A$, and hence the desired homomorphism exists. 
For the converse direction, suppose there is a homomorphism $h\colon \q(\vec{z}) \to \can$ such that $h(\vec{z}) = \vec{a}$. 
It follows that every atom in the ground CQ $\q(\vec{a})$ is entailed from $\T, \A$. 
H-completeness of $\A$ ensures that all of the ground atoms in $\q(\vec{a})$ are present in $\A$, and thus we can apply the clause $P_\q(\vec{z}) \gets \q(\vec{z})$
to derive $P_\q(\vec{a})$. 

\smallskip

For the induction step, consider $\q(\vec{z}) \in \sqset$ with $\vars(\q)\ne\vec{z}$ and 
suppose that the claim holds for all $\q'(\vec{z}')  \in  \sqset$ with
$ \q'(\vec{z}') \prec \q(\vec{z})$. For the first direction, let $\Pi''_{\omq},\A \models P_\q(\vec{a})$.
There are two cases, depending on which type of clause was used to derive $P_\q(\vec{a})$.
\begin{itemize}
\item Case 1: $P_\q(\vec{a})$ was derived by an application of the following clause:
\begin{equation*}
\rew_\q(\vec{z}) \ \ \ \leftarrow \bigwedge_{A(v_\q)\in \q}\hspace*{-1em} A(v_\q) \ \ \ \land \bigwedge_{R(v_\q,v_\q) \in \q} \hspace*{-1em}R(v_\q,v_\q)  \ \ \  \land \bigwedge_{1\leq i\leq n} \rew_{\q_i}(\vec{z}_i),
\end{equation*}
where $\q_1(\vec{z}_1), \ldots, \q_n(\vec{z}_n)$ are the subqueries induced by the neighbours of $v_\q$ in the Gaifman graph $\gfmn$ of $\q$.
Then there exists a substitution $\vec{c}$ for the variables in the body of this rule that coincides with $\vec{a}$ on $\vec{z}$ and is 
such that the ground atoms obtained by applying $\vec{c}$ to the variables in the body are all entailed from $\Pi''_{\omq},\A$. 
In particular,  $\Pi''_{\omq},\A  \models \rew_{\q_i}(\vec{c}(\vec{z}_i))$
for every $1 \leq i \leq n$. We can apply the induction hypothesis to the $\q_i(\vec{z}_i)$ to obtain homomorphisms $h_i\colon \q_i \to \can$ such that  $h_i(\vec{z}_i)=\vec{c}(\vec{z}_i)$. 
Let $h$ be the mapping from $\vars(\q)$ to $\Delta^{\can}$ defined by taking 
$h(v)=h_i(v)$, for $v \in \vars(\q_i)$.  
Note that $h$ is well-defined since \mbox{$\vars(\q)=\bigcup_{i = 1}^n\vars(\q_i)$}, and 
the $\q_i$ have no variable in common other than $v_\q$, which is sent to $\vec{c}(v_\q)$ by every $h_i$. 
To see why $h$ is a homomorphism from $\q$ to $\can$, observe that \mbox{$\q = \bigcup_{i =1}^n \q_i \cup \{A(v_\q)\in \q\} \cup \{R(v_\q,v_\q) \in \q\}$}. 
By the definition of~$h$, all atoms in $\bigcup_{i = 1}^n \q_i$ hold under $h$. 
If $A(v_\q)\in \q$, then $A(\vec{c}(v_\q))$ is entailed from $\Pi''_{\omq},\A$, and hence is present in $\A$. 
Similarly, we can show that for every $R(v_\q,v_\q) \in \q$, the ground atom $R(\vec{c}(v_\q),\vec{c}(v_\q))$ belongs to $\A$. 
It follows that all of these atoms hold in $\can$ under $h$. Finally, we recall that $\vec{c}$ coincides with $\vec{a}$
on $\vec{z}$, so we have $h(\vec{z})=\vec{a}$, as required. 

\medskip

\item Case 2: $P_\q(\vec{a})$ was derived by an application of the following clause, for a tree witness $\t$ for $(\T,\q(\vec{z}))$ with $\tr\neq \emptyset$ and $v_\q\in\ti$ and role $R$ generating $\t$: 
\begin{equation*}
\rew_\q(\vec{z}) \ \ \ \leftarrow \bigwedge_{u,u' \in \tr} (u=u') \ \ \ \land \ \bigwedge_{u \in \tr} A_R(u)\  \
\land \bigwedge_{1\leq i \leq m} \rew_{\q_i^\t}(\vec{z}_i^\t),
\end{equation*}
where $\q_1^\t, \dots, \q_m^\t$ are the connected components of $\q$ without  $\q_\t$.
There must exist a substitution $\vec{c}$ for the variables in the body of this rule that coincides with $\vec{a}$ on $\vec{z}$ and is 
such that the ground atoms obtained by applying $\vec{c}$ to the variables in the body are all entailed from $\Pi''_{\omq},\A$. 
In particular, for every $1 \leq i \leq m$, we have $\Pi''_{\omq},\A \models \rew_{\q_i^\t}(\vec{c}(\vec{z}_i^\t))$. 
We can apply the induction hypothesis to the $\q_i^\t(\vec{z}_i^\t)$ to find homomorphisms $h_1, \ldots, h_m$ of $\q_1^\t, \ldots, \q_m^\t$ into $\can$
such that $h_i(\vec{z}_i^\t)=\vec{c}(\vec{z}_i^\t)$. 
Since $\t$ is a tree witness for $(\T,\q(\vec{z}))$ generated by $R$, there exists a homomorphism $h_\t$ of $\q_\t$ into 
$\C_\T^{\smash{A_R(a)}}$ with $\tr = h_\t^{-1}(a)$ and such that $h_\t(v)$ begins by $a R$ for every $v \in \ti$. 
Now pick some $u_0 \in \tr$ (recall that $\tr \neq \emptyset$). Then $A_R(u_0)$ is an atom in the clause body, and so $\Pi''_{\omq},\A \models A_R(\vec{c}(u_0))$,
which means that $A_R(\vec{c}(u_0))$ must appear in $\A$. It follows that for every element in $ \C_\T^{\smash{A_R(a)}}$
of the form $a R w$,  there exists a corresponding element $\vec{c}(u_0) R w$ in $\Delta^{\can}$. 
We now define a mapping $h$ from $\vars(\q)$ to $\Delta^{\can}$ as follows:
\begin{equation*}
h(v) = \begin{cases}
h_i(v), & \text{ for every } v \in \vars(\q_i^\t),\\
\vec{c}(u_0) R w, & \text{ if } v \in \ti \text{ and } h_\t(v)= a R w,\\
\vec{c}(u_0) & \text{ if } v \in \tr.
\end{cases}
\end{equation*}
Every variable in $\vars(\q)$ occurs  in $\tr\cup\ti$ or in exactly one of the $\q_i^\t$, and so is assigned a unique value by $h$. Note that although $\tr\cap\vars(\q_i^\t)$ is not necessarily empty, due to the equality atoms, we have $h(v) = h(v')$, for all $v,v'\in\tr$, and so the function is well-defined. 
We claim that $h$ is a homomorphism from $\q$ into $\can$. Clearly, the atoms occurring in some $\q_i^\t$ are preserved under $h$. Now consider some $A(v)$ with $v \in \ti$. Then $h(v)=\vec{c}(u_0) R w$, where $h_\t(v)= a R w$. 
Since $h_\t$ is a homomorphism, we know that $w$ ends with a role $S$ such that $\exists S^- \sqsubseteq_\T A$. 
It follows that $h(v)$ also ends with $S$, and thus $h(v) \in A^{\can}$. 
Next, consider a role atom $S(v,v')$, where at least one of $v$ and $v'$ belongs to $\ti$. 
As $h_\t$ is a homomorphism, either $h_\t(v') = h_\t(v) S'$ with $S' \sqsubseteq_\T S$,
or $h_\t(v)= h_\t(v') S'$ with $S' \sqsubseteq_\T S^-$, for some $S'$. 
We also know that $\vec{c}(u)=\vec{c}(u_0)$ for all $u \in \tr$, hence $h(u)=h(u_0)$ for all $u\in \tr$. 
It follows  that either $h(v') = h(v) S'$ with $S' \sqsubseteq_\T S$,
or $h(v)= h(v') S'$ with $S' \sqsubseteq_\T S^-$, and so $S(v,v')$ is preserved under  $h$. Finally, since $\vec{c}$ coincides with $\vec{a}$ on $\vec{z}$, we have $h(\vec{z})=\vec{a}$.  
\end{itemize}

For the converse direction of the induction step, suppose that $h$ is a homomorphism of $\q$ into $\can$ such that $h(\vec{z})=\vec{a}$. 
There are two cases to consider, depending on where $h$ maps the `splitting' variable $v_\q$.
\begin{itemize}
\item Case 1: $h(v_\q) \in \ind(\A)$. In this case, let $\q_1(\vec{z}_1), \ldots, \q_n(\vec{z}_n)$ be the subqueries of $\q(\vec{z})$ 
induced by the neighbours of $v_\q$ in $\gfmn$. Recall that $\vec{z}_i$ consists of $v_\q$ and the variables in $\vars(\q_i)\cap\vec{z}$. 
By restricting $h$ to $\vars(\q_i)$, we obtain, for each \mbox{$1 \leq i \leq n$}, a homomorphism of 
$\q_i(\vec{z}_i)$ into $\can$ that maps $v_\q$ to $h(v_\q)$ and $\vars(\q_i)\cap \vec{z}$ to $\vec{a}(\vars(\q_i)\cap\vec{z})$. Consider $\vec{a}^*$ defined by taking  $\vec{a}^*(z)= \vec{a}(z)$ for every $z \in  \vars(\q_i)\cap \vec{z}$
and $\vec{a}^*(v_\q)=h(v_\q)$. By the induction hypothesis, for every \mbox{$1 \leq i \leq n$}, we have
$\Pi''_{\omq},\A \models \rew_{\q_i}(\vec{a}^*(\vec{z}_i))$. Next, since $h$ is a homomorphism, we must have $h(v_\q) \in A^{\can}$ whenever $A(v_\q) \in \q$
and $(h(v_\q), h(v_\q)) \in R^{\can}$ whenever $R(v_\q,v_\q) \in \q$. Since $\A$ is H-complete,  $A(h(v_\q)) \in \A$
for every \mbox{$A(v_\q) \in \q$} and $R(h(v_\q),h(v_\q))$ for every $R(v_\q,v_\q) \in \q$. We have thus shown that, under the substitution $\vec{a}^*$,  every atom in the body of the clause
\begin{equation*}
\rew_\q(\vec{z}) \ \ \ \leftarrow \bigwedge_{A(v_\q)\in \q}\hspace*{-1em} A(v_\q) \ \ \ \land \bigwedge_{R(v_\q,v_\q) \in \q} \hspace*{-1em}R(v_\q,v_\q)  \ \ \  \land \bigwedge_{1\leq i\leq n} \rew_{\q_i}(\vec{z}_i)
\end{equation*}
is entailed from $\Pi''_{\omq},\A$. It follows that we must also have $\Pi''_{\omq},\A \models \rew_\q(\vec{a})$. 
\medskip

\item Case 2: $h(v_\q) \not \in \ind(\A)$. Then $h(v_\q)$ is of the form $b R w$. 
Let $V$ be the smallest subset of $\vars(\q)$ that contains $v_\q$
and satisfies the following closure property: 
\begin{itemize}
\item[--] if $v \in V$, 
$h(v) \notin \ind(\A)$ and $\q$ contains an atom with $v$ and $v'$, then $v' \in V$. 
\end{itemize}
Let $V'$ consist of all variables in $V$
such that $h(v) \not \in \ind(\A)$. We observe that $h(v)$ begins by $b R$ for every $v \in V'$ and $h(v)=b$
for every $v \in V \setminus V'$. 
Define $\q_V$ as the CQ comprising all atoms in $\q$ 
whose variables are in $V$ and which contain at least one variable from $V'$; the answer variables of $\q_V$ are $V\setminus V'$.
By replacing the initial $b$ by $a$ in the mapping $h$, we obtain a homomorphism $h_V$ of $\q_V$ into $ \C_\T^{\smash{A_R(a)}}$ with $V\setminus V' = h_V^{-1}(a)$. 
It follows that $\t=(\tr,\ti)$ with $\tr= V \setminus V'$ and $\ti =V'$ is a tree witness for $(\T,\q(\vec{z}))$ generated by $R$ (and $\q_\t = \q_V$). 
Moreover, $\tr\neq \emptyset$ because $\q$ has at least one answer variable. 
This means that the program $\Pi''_{\omq}$ contains the following clause
\begin{equation*}
\rew_\q(\vec{z}) \ \ \ \leftarrow \bigwedge_{u,u' \in \tr} (u=u') \ \ \ \land \ \bigwedge_{u \in \tr} A_R(u)\  \
\land \bigwedge_{1\leq i \leq m} \rew_{\q_i^\t}(\vec{z}_i^\t),
\end{equation*}
where $\q_1^\t, \dots, \q_m^\t$ are the connected components of $\q$ without $\q_\t $.
Recall that the query $\q_i^\t$ has answer variables $\vec{z}_i^\t = \vars(\q^\t_i)\cap (\vec{z} \cup \tr)$. 
Let $\vec{a}^*$ be the substitution for $\vec{z} \cup \tr$ such that $\vec{a}^*(z)=\vec{a}(z)$ for $z \in \vec{z}$
and $\vec{a}^*(v)=h(v)$ for $v \in \tr$. 
Then, for every $1 \leq i \leq m$, there exists a homomorphism $h_i$ from $\q_i^\t$ to $\can$ such that 
\mbox{$h_i(z) = \vec{a}^*(z)$} for every $z \in \vec{z}_i^\t$. By the induction 
hypothesis, \mbox{$\Pi''_{\omq}, \A \models  \rew_{\q_i^\t}(\vec{a}^*(\vec{z}_i^\t))$}. 
Next, since $h(v)=b$ for every $v \in \tr$, we have $\vec{a}^*(u)=\vec{a}^*(u')$ for every
$u,u' \in \tr$. Moreover, the presence of the element $b R$ in $\can$ means that 
$\T, \A \models A_R(b)$. Since $\A$ is H-complete, we have $A_R(b) \in \A$. 
It follows that under the substitution $\vec{a}^*$, all atoms in the body of the clause under consideration
are entailed by $\Pi''_{\omq}, \A $. Thus, we must also have $\Pi''_{\omq}, \A  \models \rew_\q(\vec{a})$. 
\end{itemize}

\smallskip

We have thus shown the lemma for all queries $\sqset$ other than $\q_0(\vec{x})$. 
Let us now turn to $\q_0(\vec{x})$.
For the first direction, suppose $\Pi''_{\omq},\A \models P_{\q_0}(\vec{a})$.
There are four cases, depending on which type of clause was used to derive $P_\q(\vec{a})$. 
We skip the first three cases, which are identical to those considered in the base case and induction step,
and focus instead on the case in which $P_{\q_0}(\vec{a})$ was derived using a clause of the form
$P_{\q_0} \leftarrow A(x)$ with $A$ a concept name such that $\T, \{A(a)\} \models \q_0$. 
In this case, there must exist some $b \in \ind(\A)$
such that $\T, \A \models A(b)$. By H-completeness of $\A$, we obtain $A(b) \in \A$. 
Since $\T, \{A(a)\} \models \q_0$, we get $\T, \A \models \q_0$, which implies the 
existence of a homomorphism from $\q_0$ into $\can$. 

For the converse direction, suppose that there is a homomorphism $h\colon \q_0 \to \can$ such that $h(\vec{x})=\vec{a}$. 
We focus on the case in which $\q_0$ is Boolean  ($\vec{x} = \emptyset$) and none of the variables in $\q_0$ is mapped to an ABox individual (the other cases
can be handled exactly as in the induction basis and induction step). In this case, there must 
exist an individual $b$ and role~$R$ such that $h(z)$ begins by $b R$ for every $z \in \vars(\q)$. It follows that $\T, \{A_R(a)\} \models \q_0$, since the mapping $h'$
defined by setting $h'(z)=a R w$ whenever $h(z) = b R w$ is a homomorphism from $\q$ to $\mathcal{C}_{\T, \{A_R(a)\}}$. 
It follows that $\Pi''_\omq$ contains the clause $P_{\q_0} \gets A_R(x)$. Since $b R$ occurs in $\Delta^{\can}$, we have $\T, \A \models A_R(b)$. 
By H-completeness of~$\A$, $A_R(b) \in \A$, and so by applying the clause $P_{\q_0} \gets A_R(x)$, we obtain $\Pi''_\omq,\A \models P_\q(\vec{a})$.\qed
\end{proof}

\section{Experiments}

\subsection{Computing rewritings}

We computed four types of rewritings for linear 
queries similar to those in Example~\ref{ex:rewriting:1}
and a fixed ontology from Example~\ref{ex:rewriting:2}.
We denote the rewriting from Section~\ref{sec:boundedtw}  by \textsc{Log} 
(because it is of logarithmic depth),  and from  
Section \ref{sec:5}  by \textsc{Lin} (because it is of linear depth). 
Other two rewritings were obtained by running executables of  Rapid~\cite{DBLP:conf/cade/ChortarasTS11} and Clipper~\cite{DBLP:conf/aaai/EiterOSTX12} with a 5 minute timeout on a desktop machine. 
We considered the following three sequences of letters $R$ and $S$:
\begin{align*}
\tag{Sequence 1}&RRSRSRSRRSRRSSR,\\
\tag{Sequence 2} &SRRRRRSRSRRRRRR,\\
\tag{Sequence 3} &SRRSSRSRSRRSRRSS.
\end{align*}
For each of the three sequences, we consider the line-shaped 
queries with 1--15 atoms formed by their prefixes. Table~\ref{tab:size} present the sizes of different types of rewritings.

\begin{table}[t]
\caption{The size (number of clauses) of different types of rewritings for the three sequences of queries ( -- indicates timeout after 5 minutes)}   \centering\tabcolsep=4pt
    \begin{tabular}{|c|r|r|r|r|r|r|r|r|r|r|r|r|}
    \hline
    no. & \multicolumn{4}{|c|}{Sequence~1} & \multicolumn{4}{|c|}{Sequence~2} & \multicolumn{4}{|c|}{Sequence~3} \\
    of & \multicolumn{4}{|c|}{\footnotesize\!$RRSRSRSRRSRRSSR$\!} & \multicolumn{4}{|c|}{\footnotesize\!$SRRRRRSRSRRRRRR$\!} & \multicolumn{4}{|c|}{\footnotesize\!$SRRSSRSRSRRSRRSS$\!} \\
    \cline{2-13}
    \!atoms\! & \!Rapid\! & \!\!Clipper\!\! & \textsc{Lin}   & \textsc{Log} & Rapid & \!\!Clipper\!\! & \textsc{Lin}   & \textsc{Log} & Rapid & \!\!Clipper\!\! & \textsc{Lin}   & \textsc{Log}    \\
    \hline
    1 & 1 & 1 & 2 & 1   & 1   & 1 & 2 & 1  & 1   & 1 & 2 & 1  \\
    2 & 1 & 1 & 5 & 2   & 2   & 2 & 5 & 4  & 2  & 2 & 5 & 4  \\
    3 & 2 & 2 & 8 & 5   & 2   & 2 & 8 & 5  & 2  & 2 & 8 & 5\\
    4 & 3  & 3 & 11    & 8 & 2   & 2 & 11    & 6     & 4  & 4 & 11    & 8\\
    5 & 5 & 5 & 14    & 12    & 2   & 2 & 14    & 8    & 4  & 4 & 14    & 10\\
    6 & 7  & 7 & 17    & 16   & 2  & 2 & 17    & 10   & 8  & 8 & 17    & 15\\
    7 & 10 & 11    & 20    & 20  & 4  & 4 & 20    & 13  & 11 & 11    & 20    & 18 \\
    8 & 13 & 16    & 23    & 24  & 6   & 7 & 23    & 16 & 18 & 24    & 23    & 21  \\
    9 & 13 & 16    & 26    & 27  & 10  & 13    & 26    & 22   & 24 & 35    & 26    & 27 \\
    10    & 26 & 44    & 29    & 32 & 14  & 26    & 29    & 27   & 34  & 63    & 29    & 33 \\
    11    & 39 & 72    & 32    & 36  & 14  & 26    & 32    & 29  & 43 & 100   & 32    & 37 \\
    12    & 39 & 126   & 35    & 40 & 14   & 26    & 35    & 33 & 56 & 302   & 35    & 42\\
    13    & -- & 241   & 38    & 45  & --   & 30    & 38    & 35   & -- & -- & 38    & 46\\
    14    & -- & -- & 41    & 47    & -- & 31    & 41    & 36    & -- & -- & 41    & 51 \\
    15    & -- & -- & 44    & 51 & --  & 30    & 44    & 37  & -- & -- & 44    & 52\\
    \hline
    \end{tabular}%
\label{tab:size}%
\end{table}%

\subsection{Datasets}\label{sec:datasets}

We used  Erd\"os-R\`enyi random graphs with independent 
parameters $V$ (number of vertices), $p$ (probability of an $R$-edge)
and $q$ (probability of concepts $A$ and $B$ at a given vertex).
Note that we intentionally did not introduce any $S$-edges.
The last parameter, the average degree of a vertex, 
is $V\cdot p$. Table~\ref{tab:datasets} summarises the parameters of the datasets.

\begin{table}[t]
  \centering\tabcolsep=6pt
  \caption{Generated datasets}
    \begin{tabular}{|c|r|r|r|r|r|}
    \hline
    dataset & \multicolumn{1}{|c|}{$V$} &\multicolumn{1}{|c|}{$p$} & \multicolumn{1}{|c|}{$q$} & {\tabcolsep=0pt\begin{tabular}{c}avg. degree\\[-2pt] of vertices\end{tabular}} & no.\ of atoms\\
    \hline
    4.ttl & 1\,000  & 0.050  & 0.050  & 50 &61\,498\\
    5.ttl & 5\,000  & 0.002 & 0.004 & 10 &64\,157\\
    6.ttl & 10\,000 & 0.002 & 0.004 & 20 &256\,804\\
    8.ttl & 20\,000 & 0.002 & 0.010  & 40 &1\,027\,028\\
    \hline
    \end{tabular}%
  \label{tab:datasets}%
\end{table}%

\nopagebreak[4]

\subsection{Evaluating rewritings}

We evaluated all obtained rewritings for the sequence  $RRSRSRSRRSRRSSR$
on the datasets in Section~\ref{sec:datasets} using RDFox triplestore~\cite{DBLP:conf/semweb/NenovPMHWB15}. The materialisation time and other relevant statistics are given in 
Table~\ref{tab:rdfox}.

\begin{table}[t!]
  \centering
  \caption{Evaluating rewritings on RDFox}
    \begin{tabular}{|c|r|r|r|r|r|r|r|r|r|r|}
    \hline
    data- & query  & \multicolumn{4}{c|}{evaluation time (sec)} & \multicolumn{1}{|c|}{no.\ of}
 & \multicolumn{4}{c|}{no.\ of generated tuples} \\\cline{3-6}\cline{8-11}
set   & \multicolumn{1}{|c|}{size} & \multicolumn{1}{|c|}{Rapid} & \multicolumn{1}{|c|}{Clipper} & \multicolumn{1}{|c|}{\textsc{Lin}} &  \multicolumn{1}{|c|}{\textsc{Log}} & \multicolumn{1}{|c|}{answers} & \multicolumn{1}{|c|}{Rapid} & \multicolumn{1}{|c|}{Clipper} & \multicolumn{1}{|c|}{\textsc{Lin}} & \multicolumn{1}{|c|}{\textsc{Log}} \\
\hline
      & 7   & 0.271     & 0.242     & 0.008      & 0.243      & 2\,956       & 2\,956      & 2\,956      & 3\,246        & 125\,361         \\
      & 8   & 0.412     & 0.377     & 0.084      & 0.904      & 212\,213     & 212\,213    & 212\,213    & 302\,221      & 1\,659\,409        \\
      & 9   & 3.117     & 3.337     & 3.376      & 2.941      & 998\,945     & 998\,945    & 998\,945    & 2\,927\,979   & 2\,684\,359        \\
4.ttl & 10  & 1.079     & 1.102     & 0.012      & 0.607      & 8\,374       & 8\,374      & 10\,760     & 12\,573       & 1\,178\,714        \\
      & 11  & 2.246     & 1.984     & 0.385      & 0.945      & 436\,000     & 436\,000    & 436\,000    & 836\,876      & 1\,618\,743        \\
      & 12  & 13.693    & 30.032    & 8.129      & 6.867      & 999\,998     & 999\,998    & 1\,000\,000 & 5\,311\,314   & 4\,439\,352        \\
      & 13  & --        & 6.810     & 0.027      & 0.616      & 20\,985      & --          & 24\,839     & 38\,200       & 553\,821         \\
      & 14  & --        & --        & 0.013      & 0.358      & 0            & --          & --          & 48        	 & 312\,723         \\
      & 15  & --        & --        & 0.032      & 0.394      & 2\,000       & --          & --          & 70\,277       & 376\,313         \\
\hline
      & 7   & 0.089     & 0.080     & 0.008      & 0.078      & 427          & 427         & 427         & 613           & 68\,546          \\
      & 8   & 0.136     & 0.125     & 0.029      & 0.434      & 8\,778       & 8\,778      & 8\,778      & 76\,202       & 1\,085\,362        \\
      & 9   & 0.202     & 0.254     & 0.369      & 0.554      & 105\,853     & 105\,853    & 105\,853    & 1\,020\,363   & 1\,190\,249        \\
5.ttl & 10  & 0.174     & 0.204     & 0.011      & 0.461      & 11           & 11          & 438         & 506           & 943\,097         \\
      & 11  & 0.192     & 0.259     & 0.036      & 0.473      & 651          & 651         & 9\,396      & 74\,922       & 944\,210         \\
      & 12  & 0.244     & 0.699     & 0.396      & 1.034      & 8\,058       & 8\,058      & 113\,179    & 1\,004\,735   & 1\,940\,300        \\
      & 13  & --        & 0.629     & 0.015      & 0.244      & 0            & --          & 438         & 502           & 209\,915         \\
      & 14  & --        & --        & 0.014      & 0.153      & 0            & --          & --          & 31            & 200\,962         \\
      & 15  & --        & --        & 0.032      & 0.172      & 0            & --          & --          & 64\,543       & 265\,087         \\
\hline
      & 7   & 0.631     & 0.581     & 0.035      & 0.756      & 1\,217       & 1\,217      & 1\,217      & 1\,499        & 296\,711         \\
      & 8   & 0.925     & 0.876     & 0.159      & 4.377      & 67\,022      & 67\,022     & 67\,022     & 335\,578      & 7\,546\,184        \\
      & 9   & 1.949     & 2.275     & 4.063      & 5.251      & 1\,678\,668  & 1\,678\,668 & 1\,678\,668 & 8\,613\,829   & 9\,225\,201        \\
6.ttl & 10  & 1.24      & 1.377     & 0.049      & 4.731      & 60           & 60          & 1\,277      & 1\,389        & 6\,936\,178        \\
      & 11  & 1.403     & 1.798     & 0.249      & 4.846      & 11\,498      & 11\,498     & 77\,811     & 341\,459      & 6\,949\,160        \\
      & 12  & 1.697     & 5.413     & 4.355      & 10.128     & 305\,640     & 305\,640    & 1\,951\,654 & 8\,780\,232   & 15\,626\,926       \\
      & 13  & --        & 4.382     & 0.082      & 1.762      & 0            & --    	   & 1\,277      & 1\,377     	 & 917\,117         \\
      & 14  & --        & --        & 0.063      & 1.115      & 0            & --          & --          & 47      	     & 850\,309         \\
      & 15  & --        & --        & 0.177      & 1.011      & 0            & --    	   & --          & 257\,974      & 1\,107\,065        \\
\hline
      & 7   & 6.614     & 6.277     & 0.243      & 8.586      & 13\,103      & 13\,103     & 13\,103     & 14\,625        & 1\,665\,376        \\
      & 8   & 11.441    & 10.923    & 1.880      & 54.813     & 1\,286\,991  & 1\,286\,991 & 1\,286\,991 & 2\,432\,629    & 56\,098\,445       \\
      & 9   & 46.704    & 50.668    & 76.169     & 102.055    & 58\,753\,514 & 58\,753\,514& 58\,753\,514& 114\,973\,160  & 114\,837\,395      \\
8.ttl & 10  & 14.348    & 15.503    & 0.375      & 43.347     & 19\,966      & 19\,966     & 33\,014     & 35\,359        & 52\,103\,362       \\
      & 11  & 19.593    & 20.907    & 2.843      & 44.410     & 1\,872\,159  & 1\,872\,159 & 3\,051\,184 & 4\,397\,556    & 53\,986\,724       \\
      & 12  & 71.354    & 182.499   & 172.822    & 237.478    & 79\,939\,048 & 79\,939\,048& 120\,229\,590 & 199\,083\,489& 242\,500\,074      \\
      & 13  & --        & 54.497    & 0.562      & 22.345     & 22\,474      & --          & 53\,717     & 58\,826        & 5\,686\,759        \\
      & 14  & --        & --        & 0.550      & 12.462     & 0            & --          & --          & 253            & 4\,356\,739        \\
      & 15  & --        & --        & 1.211      & 11.315     & 12\,165      & --          & --          & 1\,064\,542    & 5\,395\,902        \\
    \hline
    \end{tabular}%
  \label{tab:rdfox}%
\end{table}%

\clearpage

\bibliographystyle{splncs03}

\end{document}